\renewcommand{\fnum@algorithm}{\fname@algorithm}
\newcounter{protocol}
\newenvironment{protocol}[1]
{
  	\refstepcounter{protocol}
	\par\addvspace{\topsep}
   \noindent
   \tabularx{\linewidth}{@{} X @{}}
    \hline
    \textbf{Protocol \theprotocol:} #1 \\
    \hline
}
{
	 \\
    \hline
   \endtabularx
   \par\addvspace{\topsep}
}
\DeclareSymbolFont{AMSb}{U}{msb}{m}{n}
\DeclareMathSymbol{\N}{\mathbin}{AMSb}{"4E}
\DeclareMathSymbol{\Z}{\mathbin}{AMSb}{"5A}
\DeclareMathSymbol{\R}{\mathbin}{AMSb}{"52}
\DeclareMathSymbol{\Q}{\mathbin}{AMSb}{"51}
\DeclareMathSymbol{\erert}{\mathbin}{AMSb}{"50}
\DeclareMathSymbol{\I}{\mathbin}{AMSb}{"49}
\newcommand{\mynote}[2]{{\textcolor{#1}{ #2}}}
\newcommand{\mnote}[1]{\mynote{green}{Moni: {#1}}}
\newcommand{\enote}[1]{\mynote{red}{Eran: {#1}}}
\newcommand{\unote}[1]{\mynote{blue}{Uri: {#1}}}
\newcommand{\bnote}[1]{\mynote{purple}{Bar: {#1}}}
\newcommand{\changed}[1]{{\color{cyan} #1}}
\newcommand{\deleted}[1]{{\color{cyan}~Deleted:~{\color{red} #1}}}
\newcommand{\TODO}[1]{{\color{red} TODO:} {\color{blue} #1}}
\newcommand{\mynote}[2]{}
\newcommand{\mnote}[1]{}
\newcommand{\enote}[1]{}
\newcommand{\unote}[1]{}
\newcommand{\bnote}[1]{}
\newcommand{\changed}[1]{#1}
\newcommand{\deleted}[1]{}
\newcommand{\TODO}[1]{}
\definecolor{gray}{gray}{0.4}
\newcommand{\remove}[1]{}
\renewcommand{\cref}{\Cref}
\newtheorem{theorem}{Theorem}[section]
\newaliascnt{lemma}{theorem}
\newtheorem{lemma}[lemma]{Lemma}
\newaliascnt{claim}{theorem}
\newtheorem{claim}[claim]{Claim}
\newaliascnt{corollary}{theorem}
\newtheorem{corollary}[corollary]{Corollary}
\newaliascnt{proposition}{theorem}
\newaliascnt{conjecture}{theorem}
\newaliascnt{definition}{theorem}
\newtheorem{definition}[definition]{Definition}
\newaliascnt{remark}{theorem}
\newtheorem{remark}[remark]{Remark}
\newaliascnt{example}{theorem}
\newaliascnt{fact}{theorem}
\newtheorem{fact}[fact]{Fact}
\crefname{lemma}{Lemma}{Lemmas}
\crefname{figure}{Figure}{Figures}
\crefname{claim}{Claim}{Claims}
\crefname{corollary}{Corollary}{Corollaries}
\crefname{proposition}{Proposition}{Propositions}
\crefname{conjecture}{Conjecture}{Conjectures}
\crefname{definition}{Definition}{Definitions}
\crefname{remark}{Remark}{Remarks}
\crefname{example}{Example}{Examples}
\crefname{protocol}{Protocol}{Protocols}
\crefname{fact}{Fact}{Facts}
\newcommand{\ie}{{\it i.e.,\ }}
\newcommand{\eg}{{\it e.g.,\ }}
\newcommand{\AAA}{\mathcal A}
\newcommand{\cM}{\mathcal M}
\newcommand{\eps}{\varepsilon}
\newcommand{\zo}{\{0,1\}}
\newcommand{\zos}{\{0,1\}^*}
\newcommand{\NN}{\mathbb{N}}
\newcommand{\diam}{{\rm diam}}
\newcommand{\dist}{{\rm dist}}
\newcommand{\polylog}{\mathop{\rm polylog}}
\newcommand{\poly}{\mathop{\rm poly}}
\newcommand{\negl}{\mathop{\rm negl}}
\newcommand{\set}[1]{\left\{ #1 \right\}}
\newcommand{\floor}[1]{\left\lfloor #1 \right\rfloor}
\def\Q{\operatorname*{\mathbb{Q}}}
\def\poly{\mathop{\rm{poly}}\nolimits}
\mathchardef\mhyphen="2D
\newcommand{\from}{\leftarrow}
\newcommand{\la}{\gets}
\newcommand{\Pc}{\MathAlgX{P}}
\newcommand{\hatPc}{\hat{\Pc}}
\newcommand{\Vc}{\MathAlgX{V}}
\newcommand{\Hc}{\MathAlgX{H}}
\renewcommand{\P}{{\cal P}}
\newcommand{\sk}{\mathsf{sk}}
\newcommand{\pk}{\mathsf{pk}}
\newcommand{\evk}{\mathsf{evk}}
\newcommand{\pp}{\mathsf{pp}}
\newcommand{\vs}{\vect{s}}
\newcommand{\vv}{\vect{v}}
\newcommand{\vu}{\vect{u}}
\newcommand{\vx}{\vect{x}}
\newcommand{\vS}{\vect{S}}
\newcommand{\Adv}{\adv}
\newcommand{\secParam}{\kappa}
\newcommand{\Sim}{\MathAlgX{Sim}}
\newcommand{\aux}{\mathsf{aux}}
\newcommand{\wrt} {with respect to\xspace}
\newcommand{\wlg} {without loss of generality\xspace}
\newcommand{\abs}[1]{\left\lvert #1 \right\rvert}
\newcommand{\abort}{\MathAlgX{abort}}
\newcommand{\cond}{\;|\;}
\newcommand{\bigcond}{\;\Big|\;}
\newcommand{\SMbox}[1]{\mbox{\scriptsize {\sc #1}}}
\newcommand{\REAL}{\SMbox{REAL}}
\newcommand{\IDEAL}{\SMbox{IDEAL}}
\newcommand{\Ensuremath}[1]{\ensuremath{#1}\xspace}
\newcommand{\MathAlg}[1]{\mathsf{#1}}
\newcommand{\MathAlgX}[1]{\Ensuremath{\MathAlg{#1}}}
\newcommand{\sff}[1]{\Ensuremath{\MathAlg{#1}}}
\newcommand{\Recon}{\MathAlgX{Recon}}
\newcommand{\Share}{\MathAlgX{Share}}
\newcommand{\VC}{\MathAlgX{VC}}
\newcommand{\VCSetup}{\MathAlgX{VC.Setup}}
\newcommand{\VCCom}{\MathAlgX{VC.Com}}
\newcommand{\VCOpen}{\MathAlgX{VC.Open}}
\newcommand{\VCVerify}{\MathAlgX{VC.Verify}}
\newcommand{\phC}{\MathAlgX{C}}
\newcommand{\phCCom}{\MathAlgX{C.Com}}
\newcommand{\Sig}{\MathAlgX{Sig}}
\newcommand{\SigGen}{\MathAlgX{Sig.Gen}}
\newcommand{\SigSign}{\MathAlgX{Sig.Sign}}
\newcommand{\SigVerify}{\MathAlgX{Sig.Verify}}
\newcommand{\PKE}{\MathAlgX{PKE}}
\newcommand{\PKEGen}{\MathAlgX{PKE.Gen}}
\newcommand{\PKEEnc}{\MathAlgX{PKE.Enc}}
\newcommand{\PKEDec}{\MathAlgX{PKE.Dec}}
\newcommand{\FHE}{\MathAlgX{FHE}}
\newcommand{\FHEGen}{\MathAlgX{FHE.Gen}}
\newcommand{\FHEEnc}{\MathAlgX{FHE.Enc}}
\newcommand{\FHEEval}{\MathAlgX{FHE.Eval}}
\newcommand{\FHEDec}{\MathAlgX{FHE.Dec}}
\newcommand{\FHECPA}{\MathAlgX{FHE\mbox{-}CPA}}
\newcommand{\LECC}{\MathAlgX{LECC}}
\newcommand{\LECCEnc}{\MathAlgX{LECC.Enc}}
\newcommand{\PRG}{\MathAlgX{G}}
\newcommand{\IS}{{\mathcal{I}}}
\newcommand{\ssize}[1]{|#1|}
\newcommand{\sset}[1]{\{#1\}}
\newcommand{\vect}[1]{{ \bf{#1}}}
\newcommand{\ppr}[2]{\Pr_{#1}\left[#2\right]}
\newcommand{\ceil}[1]{\left\lceil #1 \right\rceil}
\newcommand{\err}{\textup{\textsf{err}}}
\newcommand{\pr}[1]{\Pr\left[#1\right]}
\newcommand{\felect}{f_{\MathAlgX{elect}}}
\newcommand{\felectstar}{\ensuremath{\felect}\xspace}
\newcommand{\fmany}{\ensuremath{f_\MathAlgX{many}}\xspace}
\newcommand{\fcircuit}{\ensuremath{f_{\MathAlgX{many},C}}\xspace}
\newcommand{\comp}{\stackrel{\mbox{\tiny C}}{{\equiv}}}
\newcommand{\su}{\subseteq}
\newcommand{\xor}{\oplus}
\newcommand{\Real}{\REAL}
\newcommand{\Ideal}{\IDEAL}
\newcommand{\outValue}{{y}}
\def\ceil#1{{\lceil {#1} \rceil}}
\def\eps{{\varepsilon}}
\def\partNum{{n}}
\def\secParam{{\kappa}}
\def\3PartyBias{{8/r}}
\newcommand{\comment}[1]{}
\def\C{{\cal C}}
\def\I{{\cal I}}
\def\N{{\mathbb N}}
\def\P{{\cal P}}
\def\Q{{\cal Q}}
\def\R{{\mathbb R}}
\def\Z{{\cal Z}}
\newcommand{\Dist}{\mathsf{D}}
\newcommand{\authornote}[2]{{\bf [{\color{red} #1's Note:} {\color{blue} #2}]}}
  \newcommand{\authornote}[2]{}
\newcommand{\partyInputTag}[1]{{x'_{#1}}}
\newcommand{\vecInput}{\vx}
\newcommand{\thickhline}{%
    \noalign {\ifnum 0=`}\fi \hrule height 1pt
    \futurelet \reserved@a \@xhline
}
\newcolumntype{"}{@{\hskip\tabcolsep\vrule width 1pt\hskip\tabcolsep}}
\newlength{\fboxhsep}
\newlength{\fboxvsep}
\newlength{\fboxtoprule}
\newlength{\fboxbottomrule}
\newlength{\fboxleftrule}
\newlength{\fboxrightrule}
\def\@frameb@xother#1{%
  \@tempdima\fboxtoprule
  \advance\@tempdima\fboxvsep
  \advance\@tempdima\dp\@tempboxa
  \hbox{%
    \lower\@tempdima\hbox{%
      \vbox{%
        \hrule\@height\fboxtoprule
        \hbox{%
          \vrule\@width\fboxleftrule
          #1%
          \vbox{%
            \vskip\fboxvsep
            \box\@tempboxa
            \vskip\fboxvsep}%
          #1%
          \vrule\@width\fboxrightrule}%
        \hrule\@height\fboxbottomrule}%
    }%
  }%
}
\long\def\fboxother#1{%
  \leavevmode
  \setbox\@tempboxa\hbox{%
    \color@begingroup
    \kern\fboxhsep{#1}\kern\fboxhsep
    \color@endgroup}%
  \@frameb@xother\relax}
\newcommand{\paren}[1]{\left(#1\right)}
\newcommand{\sparen}[1]{(#1)}
\newcommand{\sqparen}[1]{\left[#1\right]}
\newcommand{\sabs}[1]{|#1|}
\newcommand{\LIVE}{\mathcal{L}}
\newcommand{\of}[1]{\paren{#1}}
\newcommand{\sof}[1]{\sparen{#1}}
\newcommand{\cI}{\mathcal{I}}
\newcommand{\cH}{\mathcal{H}}
\newcommand{\cA}{\mathcal{A}}
\newcommand{\corrset}{\cI}
\newcommand{\honset}{\cH}
\newcommand{\adv}{\cA}
\newcommand{\CommSetup}{{\bf \texttt{CommitteeSetup}}\xspace}
\newcommand{\Committee}{{\bf \texttt{Committee}}\xspace}
\newcommand{\ManyCommittees}{{\bf \texttt{ManyCommittees}}\xspace}
\newcommand{\PerComm}{{\bf \texttt{PersonalCommittee}}\xspace}
\newcommand{\ProofOfCorrectness}{{\bf \texttt{EfficientMutliverifierProof}}\xspace}
\newcommand{\GenProto}{{\bf \texttt{GeneralProtocol}}\xspace}
\newcommand{\MainProto}{{\bf \texttt{MainProtocol}}\xspace}
\newcommand{\comsize}{\secParam}
\newcommand{\pcsize}{\secParam}
\newcommand{\numbin}{b}
\newcommand{\cS}{\mathcal{S}}
\newcommand{\cT}{\mathcal{T}}
\newcommand{\cQ}{\mathcal{Q}}
\newcommand{\cD}{\mathcal{D}}
\newcommand{\cV}{\mathcal{V}}
\newcommand{\cX}{\mathcal{X}}
\newcommand{\EE}{\mathbb{E}}
\newcommand{\ex}[1]{\EE\sqparen{#1}}
\newcommand{\rndsu}{\cS}
\newcommand{\rndsuT}{\cT}
\newcommand{\queryset}{\cQ}
\newcommand{\block}{\mathcal{B}}
\newcommand{\minority}{\mathsf{minority}}
\newcommand{\bin}{\mathcal{B}}
\newcommand{\trustp}{\mathsf{T}}
\newcommand{\Gdist}[2]{\ensuremath{\mathcal{G}(#1,#2)}}
\newcommand{\diamLength}{\ell}
\newcommand{\out}{\normalfont{\texttt{out}}}
\newcommand{\intext}{\normalfont{\texttt{in}}}
\newcommand{\cN}{\mathcal{N}}
\newcommand{\neigh}{\cN}
\newcommand{\isampled}{\neigh^{\out}}
\newcommand{\sampledme}{\neigh^{\intext}}
\newcommand{\totalsampled}{\neigh}
\newcommand{\isample}{\isampled}
\newcommand{\totalsample}{\totalsampled}
\newcommand{\alive}{\ensuremath{\mathsf{alive}}\xspace}
\newcommand{\rej}{\ensuremath{\mathsf{reject}}\xspace}
\newcommand{\acc}{\ensuremath{\mathsf{accept}}\xspace}
\newcommand{\FILreps}{\ensuremath{\ceil{\frac{\log (n/4)}{\log(\secParam/4)}}+1}\xspace}
\newenvironment{proofof}[1]{\begin{proof}[of~#1]}{\end{proof}}
\newenvironment{proofof}[1]{\begin{proof}[Proof of~#1]}{\end{proof}}
\newcommand{\ith}{\ensuremath{i^{\text{th}}}\xspace}
\newcommand{\jth}{\ensuremath{j^{\text{th}}}\xspace}
\newcommand{\ellth}{\ensuremath{\ell^{\text{th}}}\xspace}
\newcommand{\encin}{\hat{x}}
\newcommand{\encout}{\hat{y}}
\newcommand{\codex}{\tilde{x}}
\newcommand{\codey}{\tilde{y}}
\newcommand{\codes}{\tilde{s}}
\newcommand{\ppt}{{\sc ppt}\xspace}
\newcommand{\pptm}{{\sc pptm}\xspace}
\newcommand{\Dom}{\cX}
\newcommand{\Rng}{\mathcal{Y}}
\newcommand{\nxtmsg}{\mathsf{NxtMsg}}
\newcommand{\board}{\ensuremath{\mathsf{Board}}\xspace}
\newcommand{\PCPP}{\operatorname{PCPP}}
\newcommand{\pub}{\mathsf{pub}}
\newcommand{\LOVE}{GMPC\xspace}
\newcommand{\register}{\MathAlgX{register}}
\newcommand{\get}{\MathAlgX{get}}
\newcommand{\fromuser}{\MathAlgX{from}}
\newcommand{\PCSG}{\ensuremath{\mathsf{PCSG}}\xspace}
\newcommand{\PCSGhon}{\ensuremath{\PCSG_{\mathsf{hon}}}\xspace}
\newcommand{\PCSGmal}{\ensuremath{\PCSG_{\mathsf{mal}}}\xspace}
\newcommand{\shuffle}{\mathsf{Shuffle}}
\newcommand{\hc}{\hat{c}}
\newcommand{\badset}{\mathsf{Bad}}
\newcommand{\AKS}{\mathsf{AKS}}
\newcommand{\fAKS}{f_\AKS}
\begin{document}

\title{MPC for Tech Giants (GMPC):\\ 
Enabling Gulliver and the Lilliputians to Cooperate Amicably
}
\ifdefined\IsAnon
\author{}
\else
\author{
Bar Alon\thanks{Department of Computer Science Ariel University, Ariel,
Israel.  Email: \texttt{alonbar08@gmail.com}. }
\and
Moni Naor\thanks{Department of Computer Science and Applied Mathematics, Weizmann Institute of Science, Rehovot,
Israel. Incumbent of the Judith Kleeman Professorial Chair. Email: \texttt{moni.naor@weizmann.ac.il}. }
\and
Eran Omri\thanks{Department of Computer Science Ariel University, Ariel,
Israel.  Email: \texttt{omrier@ariel.ac.il}. }
\and
Uri Stemmer\thanks{Tel Aviv University and Google Research. Email: \texttt{u@uri.co.il}. }
}
\fi

\date{April 6, 2023}
\maketitle

\begin{abstract}
In the current digital world, large organizations (sometimes referred to as tech giants) provide service to extremely large numbers of users. The service provider is often interested in computing various data analyses over the private data of its users, which in turn have their incentives to cooperate, but do not necessarily trust the service provider. %

In this work, we introduce the \emph{Gulliver multi-party computation model} (GMPC) to realistically capture the above scenario. The GMPC model considers a single highly powerful party, called the {\em server} or {\em Gulliver}, that is connected to $n$ users over a star topology network (alternatively formulated as a full network, where the server can block any message). The users are significantly less powerful than the server, and, in particular, should have both computation and communication complexities that are polylogarithmic in $n$.
Protocols in the GMPC model should be secure against malicious adversaries that may corrupt a subset of the users and/or the server. 

Designing protocols in the GMPC model is a delicate task, since users can only hold information about $\polylog(n)$ other users (and, in particular, can only communicate with $\polylog(n)$ other users). In addition, the server can block any message between any pair of honest parties. Thus, reaching an agreement becomes a challenging task. Nevertheless, we design generic protocols in the GMPC model, assuming that at most $\alpha<\changed{1/8}$ fraction of the users may be corrupted (in addition to the server). Our main contribution is a variant of Feige's committee election protocol [FOCS 1999] that is secure in the GMPC model. Given  this tool %
we show: 
\begin{enumerate}
    \item Assuming fully homomorphic encryption (FHE), any computationally efficient function with $O\left(n\cdot\polylog(n)\right)$-size output can be securely computed in the GMPC model.

    \item Any function that can be computed by a circuit of $O(\polylog(n))$ depth, $O\left(n\cdot\polylog(n)\right)$ size, and bounded fan-in and fan-out can be securely computed in the GMPC model {\em without assuming FHE}.

    \item In particular, {\em sorting} can be securely computed in the GMPC model without assuming FHE. This has important applications for the {\em shuffle model of differential privacy}, and resolves an open question of Bell et al.\ [CCS 2020].
\end{enumerate}

\end{abstract}

\tableofcontents

\section{Introduction}\label{sec:intro}

Consider a large organization (such as Apple or Google; henceforth the {\em server} or {\em Gulliver}) that is interested in executing a secure computation over its users' data (we call the users {\em Lilliputians}). 
As the server is way more powerful than the average user, we would like it to do most of the heavy-lifting in the computation, thereby allowing the users to do only lightweight computations. %
In addition, it is reasonable to assume that users can communicate only via the server (i.e., a {\em star} communication model). This is beneficial both for the server (who can better monitor and enforce the progress of the computation) and for the users (who only need to communicate with a supposedly reliable server). For example, in a protocol between Apple and its users, such a restricted communication model makes more sense than having 100M iPhone users communicate arbitrarily.

This suggests a computation model with a powerful server that on the one hand is supposed to do most of the computation work, and on the other hand, if corrupted, might tamper with the communication of the other users (because of the star topology). This combination poses several challenges (to be surveyed next), and indeed all prior works in this vein only presented constructions for {\em specific functionalities} such as addition; see the related works section. In this work, we set out to explore secure multiparty computation under these (and similar) conditions and present the first {\em general MPC construction} in this setting.

\subsection{The Gulliver MPC (GMPC) Model}\label{sec:GMPCintro} 
We consider a protocol involving a single {\em server} and $n$ {\em users}. We think of $n$ as being too big to allow users more than $\polylog(n)$ computational time and communication complexities. On the other hand, we assume that $\poly(n)$ time and communication bandwidth are within the server's reach.\footnote{For example, the server might be running on a computer cluster, and the users might be running on iPhones.} 
We assume that a subset of at most $\alpha n<\changed{n/8}$ of the users (fixed in advance), as well as the server, might be controlled by a malicious adversary.  
In particular, controlled users might halt prematurely and become unresponsive throughout the execution (e.g., iPhone users going off-line). 

Utility wise, we want to ensure that if the server is honest, then it gets a {\em correct} outcome for the computation, even if $\alpha n$ users are corrupted.
In other words, if the server is honest (and if at most $\alpha n$ users are corrupted), then output delivery is guaranteed. 
The privacy requirement is that an adversary controlling the server, as well as $\alpha n$ users, cannot learn anything more than the prescribed outcome. That is, even if the server is corrupted (as well as at most $\alpha n$ users), the only information that should be learned about users' inputs is what can be derived from the outcome itself.\footnote{Technically, as we consider a model that allows a malicious server to block \changed{a small fraction of honest users}, the privacy requirement states that it does not obtain more information than what can be derived from the output over the remaining users.}

The mindset is that all of the communication goes through the server. However, one issue we must address is that the server {\em should not have the ability to ``invent'' new users, or to replace too many honest users with malicious ones}. This restriction is challenging to enforce in a star topology, and hence, as in all previous works in this vein, we assume a trusted PKI setup for this model \cite{Bonawitz2017,RSY21,BellBGL020}. As we mentioned,  these previous works only presented constructions for specific functionalities such as addition. Our main result is to present the first general MPC construction under these conditions.

Towards achieving this, we introduce a slightly different computation model, which we call the GMPC model. Instead of assuming a star topology with a trusted PKI setup, in the GMPC model we assume a complete communication graph (secure point-to-point channels) where the server (or an adversary controlling it) sees all of the communication patterns and can block messages at will. That is, the server knows who is sending a message to whom, and can block any message before it reaches its destination. The server does not get to see the content of the messages, and cannot modify messages. This still allows the server to tamper with the communication. In particular, if an honest user $i$ wants to broadcast (via the server) a message to all of the users, then the server might block this message, or it can forward the message only to a {\em subset} of the users.

Throughout most of the paper, we focus on the above-mentioned communication model (point-to-point channels with server blocking, without assuming PKI). In \cref{sec:star} we show that protocols in the GMPC model can be compiled to operate in a star network with a PKI setup, and hence, working in the GMPC is without loss of generality.\footnote{We use a specific form of PKI suitable for the GMPC model, which we call GPKI. %
} 

\paragraph{The GMPC model allows for smoother use of composition theorems.}
We believe that operating in the GMPC model is a better design choice, as it is both a clean abstraction and can help in simplifying the analysis. That is, we believe that the GMPC model is interesting both on its own and as a stepping stone toward constructing protocols in a star network with a PKI setup. In more detail, in our protocol we need to execute many ($\poly(n)$) computations in parallel, and to argue about the security of the resulting overall protocol using composition. Our GMPC setting allows us to isolate key components of our protocol which are information-theoretic, and as a result, argue about their concurrent composition in a smooth way. In contrast, with a general PKI setup, this becomes more challenging.\footnote{Arguing about composability when there is a general PKI setup is subtle. Basically, the difficulty is that the distinguisher is non-uniform. Thus, if the PKI is common to all sessions, then the distinguisher may know the secret key (as its auxiliary input). Now it can distinguish a true encryption of an honest party's input from a simulated encryption (as the simulator does not know the honest party's input).} Once we know that our protocol is secure in the GMPC model (after composition has been done under the GMPC umbrella), then we can easily compile it {\em as a single protocol} to operate in a star network with a trusted PKI setup. See \cref{sec:star}.

\subsection{Our Results}
Our main contribution is showing a generic construction in the GMPC model for any functionality where the users' inputs are short, of size $\polylog(n)$, and where only the server gets an outcome, of size $\tilde{O}(n)$, assuming less than \changed{1/8} of the users are corrupted (and possibly also the server). 
\begin{theorem}[Informal version of \cref{thm:main}]\label{thm:intro_main}
	 Let $f:(\zo^{\polylog(n)})^{n+1}\mapsto\zo^{n\cdot\polylog(n)}$ be a computationally efficient functionality, and let $\alpha<\changed{1/8}$ be a constant. Then, $f$ can be securely computed in the \LOVE model, assuming at most $\alpha n$ users are corrupted, and where the server is possibly corrupted as well.
	 Furthermore, the number of rounds in the protocol is at most $\polylog(n)$, the communication and computational complexity of each user is $\polylog(n)$, and the communication and computational complexity of the server is $\poly(n)$. Our construction assumes the existence of signature schemes, vector commitment schemes, pseudorandom generators, and fully homomorphic encryption schemes. 
\end{theorem}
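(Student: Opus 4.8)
The plan is to have the powerful server carry out essentially the entire computation homomorphically, using a small elected committee only to hold keys and to force agreement. The first step is to invoke the paper's \LOVE-secure variant of Feige's election protocol to elect a committee $C$ of size $\polylog(n)$ with the property that, whenever at most $\alpha n < n/6$ users are corrupted, $C$ contains an honest super-majority with overwhelming probability and (essentially) all honest users agree on its identity. Establishing this is exactly what that tool handles --- coping with the absence of broadcast, a server that may block any message, users that can touch only $\polylog(n)$ others, and no PKI --- so I would take it, together with its companion sub-protocols (common randomness, the \FiL agreement primitive, liveness checks), as a black box.

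Given $C$, the second step is a short honest-majority MPC run among the members of $C$, with all messages routed through the server but authenticated by signatures (so that blocking only mimics a crash, which an honest majority tolerates). This MPC (i) samples an FHE key pair $(\pk,\sk)$ with $\sk$ secret-shared among $C$, (ii) sets up signature/verification keys, and (iii) certifies $\pk$ and the public parameters so the server can relay them, with a \FiL sub-protocol guaranteeing that all but at most $\alpha n$ honest users end up with the same $\pk$. Each user $i$ then sends the server $\hat x_i \gets \FHEEnc_{\pk}(x_i)$ together with a binding (vector-)commitment; a user that is blocked or never received $\pk$ is taken to contribute a default input, which is harmless since corrupted or offline users could do the same. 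Using its $\poly(n)$ budget, the server homomorphically evaluates $f$ on the $n$ ciphertexts to obtain $\hat y = \FHEEnc_{\pk}(f(\vec x))$ of size $\tilde{O}(n)$; no proof of correct evaluation is needed, since the protocol must be correct only when the server is honest and private only when it is corrupt.

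The step I expect to be the main obstacle is recovering cleartext from $\hat y$: its size is $\tilde{O}(n)$, while every committee member has only a $\polylog(n)$ budget, so $C$ alone cannot threshold-decrypt it. The plan is to split $\hat y$ into $\tilde{O}(n)$ blocks of size $\polylog(n)$, use the \ManyCommittees machinery to elect one honest-majority output committee per block (with the balls-in-bins guarantee that each user lands in only $\polylog(n)$ committees), have $C$ re-share $\sk$ to each output committee through the server (each re-share wrapped so the server learns nothing), and let each output committee threshold-partially-decrypt its block and forward the shares to the server, who reconstructs that output block. The parts needing care are the complexity accounting --- keeping every honest user at $\polylog(n)$ total work despite re-sharing to $\tilde{O}(n)$ committees, which may call for organizing $C$ as a tree or expanding a short seed at the server --- and keeping the overall round count at $\polylog(n)$.

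Security then splits into two regimes. If the server is honest, correctness and guaranteed output delivery follow by composing the election guarantee (honest super-majority plus agreement, whp), correctness of the honest-majority MPC, FHE evaluation correctness, and threshold-decryption correctness, using that an honest server relays every message. If the server is corrupted (together with at most $\alpha n$ users), I would build a simulator that extracts the corrupted users' effective inputs by decrypting their committed ciphertexts with the honest committee members' shares of $\sk$, feeds them to the ideal functionality, receives the output $y$ on the honest users that were not blocked, simulates the election and setup transcripts via the honest-majority-MPC simulator, replaces honest users' ciphertexts by encryptions of $0$ (the adversary never learns $\sk$, so this is indistinguishable under FHE semantic security), and simulates the partial decryptions to be consistent with $y$ using the threshold-FHE simulator. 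The hybrids reduce to signature unforgeability, binding of the vector commitments, PRG security, semantic security of the FHE scheme, and security of the underlying honest-majority MPC.
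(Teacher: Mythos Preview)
Your proposal has a genuine security gap in the claim that ``no proof of correct evaluation is needed, since the protocol must be correct only when the server is honest and private only when it is corrupt.'' This confuses correctness with privacy. If the server is corrupted, you are right that we do not care whether it obtains the correct output; but we \emph{do} need privacy, and your protocol hands a corrupted server a decryption oracle. Concretely, the server receives the honest users' ciphertexts $\hat x_i = \FHEEnc_{\pk}(x_i)$, and nothing in your protocol binds the ciphertext that each output committee decrypts to the result of evaluating $f$. A malicious server can simply forward $\hat x_i$ itself (or $\FHEEval_{\pk}(g,\hat x_1,\ldots,\hat x_n)$ for any $g$ of its choosing) to the output committees, who will dutifully threshold-decrypt it and return the plaintext. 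Your simulator cannot handle this: it programs the partial decryptions to be consistent with the ideal output $y$, but in the real world the committees decrypt whatever ciphertext the server actually sent, which may be an arbitrary function of the honest inputs that the simulator cannot produce from $y$ alone.

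This is exactly ``Challenge 4'' in the paper, and it is the reason the paper's construction inserts, between homomorphic evaluation and decryption, a verifiable-computation step (the \ProofOfCorrectness protocol of \cref{sec:VerComp}): the server must prove, via a PCPP over an error-correcting encoding of the statement, that the ciphertexts it is asking the committees to decrypt are indeed $\FHEEval_{\pk}(f_i,\hat x_0,\ldots,\hat x_n;\PRG_i(r))$ on the \emph{committed} inputs and randomness. The committees act jointly as a distributed verifier (since no single one can read the $\tilde O(n)$-bit statement), and only decrypt if the proof accepts. Without this step --- or some equivalent mechanism tying decryption to a fixed, input-committed computation --- the protocol is not private against a malicious server.
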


\begin{remark}
Two remarks are in order:
\begin{enumerate}
    \item Our construction can also be used to compute functionalities with a long (polynomial) output, say of length $T$, at the cost of having users' complexities grow with $\tilde{O}(T/n)$. We chose to focus on functionalities with output length $\tilde{O}(n)$ in order to keep users' complexities polylogarithmic in $n$.
    \item 
    We are not restricted to solitary-output functionalities. We can securely compute multi-output asymmetric functionalities (where every party gets an output), at the price of having security with abort: If the server is honest then output delivery is guaranteed, and otherwise some parties may fail to receive their output.
\end{enumerate}
\end{remark}

Following \cite{BCDH18,BCG21,BGT13}, the main tool we use towards proving \cref{thm:intro_main} is a \emph{committee election protocol}. We construct a protocol in the \LOVE model that allows all parties to agree on a $\polylog(n)$-sized committee, which will contain an honest majority except with negligible probability. 
\begin{theorem}[Committee election, informal version of \cref{thm:committee}]\label{thm:intro_committee}
	Let $\alpha<\changed{1/8}$ be a constant. There exists a protocol in the \LOVE model such that the following holds, even if $\alpha n$ of the users and possibly the server are corrupted. Either all honest users abort, or at least $(1-2\alpha)n$ honest users output the same committee $\C$ satisfying
	\begin{enumerate}
		\item $|\C|=\polylog(n)$.
		\item $\C$ contains a vast honest majority, except with negligible probability.
	\end{enumerate}
	Furthermore, the communication and computational complexity of each user is $\polylog(n)$, while the communication and computational complexity of the server is $\poly(n)$. Our construction assumes the existence of vector commitment schemes.
\end{theorem}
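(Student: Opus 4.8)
The plan is to adapt Feige's ``lightest bin'' election protocol [FOCS 1999] to the \LOVE model. Recall that in Feige's protocol every party announces a uniformly random bin out of $b$ bins and the lightest bin is declared the committee; taking $b=n/s$ makes the committee have size at most $s$, and a Chernoff bound shows that every bin---in particular the lightest---contains at least $(1-\eps)(1-\alpha)s$ honest parties, so the committee has an honest majority whenever $s$ is a sufficiently large $\polylog(n)$. Two features of the \LOVE model obstruct this directly: a user cannot afford to learn the global bin assignment (so no user can compute the lightest bin itself), and the server may block messages (so the ``announcement'' step is not a broadcast, and honest users may end up disagreeing). The construction will resolve the first obstruction by letting the server---who has $\poly(n)$ resources---find the light bin and certify its properties succinctly via vector commitments ($\LBP$), and the second by routing all inter-user communication over a sparse random overlay together with a blocking-detection rule ($\FiL$/$\CFiL$).

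\textbf{Step 1: overlay and graded agreement.} First I would have each user sample $\polylog(n)$ uniformly random other users as neighbors ($\GraphSamp$) and argue, by a standard random-graph argument, that except with probability $\negl$ the subgraph induced by the honest users is a $\polylog(n)$-degree expander of diameter $O(\log n/\log\log n)$, \emph{for every} choice of the at most $\alpha n$ corrupt users. The server sees this graph but may only block individual edges; an honest user that observes more than an $\eps$-fraction of its incident edges going silent declares detected cheating and prepares to abort. Using this rule I would establish a weak/graded agreement primitive: because the honest-induced subgraph is an edge-expander, if the server were to partition the honest users into two sets $A$ and $B$ each of size at least $\alpha n$ and keep their views mutually consistent-but-different, it would have to block at least an $\eps$-fraction of the incident edges at \emph{every} honest user (the $A$--$B$ cut has size $\Omega(\alpha^2 n\cdot\polylog(n))$ and meets every vertex), forcing a global abort; hence the server's only real options are to make all honest users abort or to ``sacrifice'' at most $\alpha n$ honest users (who abort in isolation after seeing their edges silenced) while the remaining at least $(1-2\alpha)n$ honest users hold a common, consistent transcript. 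Turning this intuition into a rigorous dichotomy---including ruling out all intermediate partition structures, and using the unforgeability of signatures/commitments so that no party can equivocate about its bin choice across views---is, I expect, the main obstacle.

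\textbf{Step 2: committed bins and a light bin.} Every user selects a uniformly random bin in $[b]$, with $b=n/s$ for $s$ a suitably large $\polylog(n)$, and \emph{commits} to this choice (using the vector commitment) before any opening occurs ($\SRS$), so that corrupt users cannot choose their bins after learning the honest users' choices; the committed choices are propagated by the graded-agreement protocol of Step~1 and are also collected by the server. The server then finds a \emph{light} bin---one containing at most $(1+\eps)s$ users, which exists since the true lightest bin contains at most the average $n/b=s$ users---announces the corresponding set $\C$, and proves $|\C|\le(1+\eps)s$ and that $\C$ is exactly the set of users who selected that bin by opening the vector commitment at the $|\C|$ positions of its members, the remaining ``no one else chose this bin'' claim being certified succinctly (e.g.\ via a recursive light-bin reduction over $\FILreps$ levels, so that every honest user only ever checks $\polylog(n)$-size openings). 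Each honest user verifies this against the commitment it holds from Step~1 and aborts on any inconsistency.

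\textbf{Step 3: size, honest majority, and complexity.} If no honest user aborts then the at least $(1-2\alpha)n$ honest users from Step~1 hold the same committed assignment and verified the same announcement, so they all output the same $\C$, and $|\C|\le(1+\eps)s=\polylog(n)$. For the honest-majority claim, since the bins of these $\ge(1-2\alpha)n$ honest users are uniform and independent, a Chernoff bound (with the union bound over the $b\le n$ bins absorbed by taking $s$ a large enough $\polylog(n)$) shows that except with probability $\negl$ \emph{every} bin contains at least $(1-\eps)(1-2\alpha)s$ of them; hence $\C$ contains at least $(1-\eps)(1-2\alpha)s$ honest users among at most $(1+\eps)s$ total, an honest fraction of at least $(1-\eps)(1-2\alpha)/(1+\eps)$, which exceeds $2/3$ once $\alpha<1/6$ and $\eps$ is a small enough constant---a vast honest majority. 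Finally, each user samples $\polylog(n)$ neighbors and exchanges/relays $\polylog(n)$ bits over $\polylog(n)$ rounds (the overlay and graded-agreement traffic plus $\polylog(n)$-size commitment openings), while the server carries out the global bin computation and proof generation in $\poly(n)$ time, giving the stated complexities.
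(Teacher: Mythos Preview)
Your sketch has a genuine gap in Step~2 that lets a malicious server output a fully corrupt committee. Suppose the server honestly identifies the lightest bin $b^*$ but sets $\C$ to contain \emph{only} the corrupt users who committed to $b^*$, excluding the $\Theta(s)=\polylog(n)$ honest users who also committed to $b^*$. The only parties who can locally detect this exclusion are those $\polylog(n)$ honest users themselves (each checks ``I chose $b^*$ but I am not in $\C$''); everyone else chose a different bin and has nothing to object to. Under your abort rule---a user aborts only when an $\eps$-fraction of its incident edges go silent---silencing these $\polylog(n)$ excluded users is far below threshold at every neighbor, so no one else aborts and the remaining $(1-2\alpha)n$ honest users output a committee with zero honest members. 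Your ``recursive light-bin reduction'' for certifying ``no one else chose $b^*$'' is not an actual construction; with vector commitments alone, that universal statement is not succinctly verifiable, and the only parties who can falsify it locally are precisely the excluded honest users. Note also that your threshold $\eps$ must satisfy $\eps>\alpha$ (otherwise the $\alpha n$ corrupt users can trigger a global abort under an honest server simply by going silent), so your mechanism is forced to tolerate the silencing of $\alpha n$ parties---vastly more than the $\polylog(n)$ the attacker needs here.

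The paper closes exactly this gap with a device your outline omits: \emph{personal committees}. Before running Feige, each user $i$ is replaced by a random $\polylog(n)$-size committee $\P_i$ that henceforth acts on $i$'s behalf; the server commits to the array of all $\P_i$'s and users cross-check via random sampling (this is where the vector commitment is actually used). The key property is that when the server is honest, \emph{every} $\P_i$---even for corrupt $i$---has an honest majority and therefore never aborts. This makes it safe to adopt the strict propagation rule ``if any neighboring committee aborts, I abort'' without risking false aborts under an honest server. Combined with the $O(\log n/\log\secParam)$-diameter overlay, the abort of a \emph{single} excluded honest user's $\P_i$ cascades to a global abort within $\polylog(n)$ rounds, killing the attack above. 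Your $\eps$-threshold rule cannot play this role: it must tolerate $\Theta(n)$ silent parties to survive an honest-server execution, but the attack only needs to silence $\polylog(n)$.
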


\subsection{Challenges of the Gulliver MPC Model}\label{sec:IntroChallenges}

Before presenting our constructions, 
we highlight a few of the challenges that arise when constructing protocols in the GMPC model. We explain how we address some of these challenges in Section~\ref{sec:techniques}.

\paragraph{Challenge 1: Blocks vs.\ aborts.}
The first challenge we need to address is that a corrupted server can block user messages at will. In particular, if an honest user catches the server cheating, and tries to advertise this fact to the other users, then the server can simply block this user from the rest of the computation. Furthermore, the other users would not be able to distinguish between whether this user was blocked (meaning that the server is corrupted) and whether this user simply aborted the computation. 

To better illustrate this issue, let us first explain why a natural variant of Feige's committee election protocol~\cite{Feige99} does not work. Recall that in Feige's protocol, in order to agree on a committee of size $\approx k$, every party $i\in[n]$ samples a bin $b_i\in[n/k]$ (and broadcasts its choice), and then all parties agree on the lightest bin as the committee. We cannot implement this protocol directly in the GMPC model since (i) there is no broadcast channel, and (ii) users' complexities are bounded by $\polylog(n)$ so a user cannot send/receive messages to/from all other users directly. As an attempt to overcome these issues, consider implementing Feige's protocol in the GMPC model via the server. That is, every user sends to the server its choice of the bin, and the server then (supposedly) sends to all users the committee, defined as the users in the lightest bin, alongside the index of the bin. Each user then verifies that it is in the committee if it chose this bin, and not in the committee otherwise. The problem here is that a corrupted server can simply block all of the users who chose the lightest bin, and place corrupted users at that bin instead of the blocked users. The blocked users will have no way of ``screaming for help'', and the remaining users will not know that the server is cheating. We will overcome this challenge by designing a more complex committee election protocol, based on Feige's protocol, tailored to the GMPC model. 

\paragraph{Challenge 2: DDoS attacks.}
Although all parties are connected to each other, the protocols we present have the property that the communication and computational complexity of each user is polylogarithmic in $\partNum$. In particular, the number of messages each user sends and receives is polylogarithmic. However, note that an adversary can always blow up the communication complexity of a protocol by
flooding honest parties with many garbage messages (even if the server is honest). Boyle et al.~\cite{BCDH18,BCG21} handled this by counting only messages that are actually processed by honest parties.\footnote{This is formalized by splitting the ``receive phase'' into two phases. The first being the \emph{filtering phase}, where each party inspects its incoming messages according to some filtering rules specified by the protocol, and discards some of the messages. This phase is followed by a \emph{processing phase}, where the parties compute their next-message functions based on the non-filtered messages.} 

Since we also require the computational complexity of the users to be polylogarithmic in $\partNum$, the solution of \cite{BCDH18,BCG21} is insufficient for our case. We propose two solutions for this issue. The first solution is purely combinatorial, however, it will allow the adversary to cause a few honest users to abort even if the server is honest. The second solution will prevent that, however, it is somewhat more involved. We refer the reader to \cref{sec:Background} for more details.

\paragraph{Challenge 3: Concurrent composition.} In our protocol, we need each user to run a secure two-party computation with the server for coin tossing. These protocols must be executed in parallel, as otherwise, the number of total rounds in our construction would be too large. There are two standard approaches for arguing about the composition of these parallel executions: Either using the {\em universal composition (UC)} paradigm of \citet{Canetti01} or using {\em bounded concurrent composition} \cite{Lindell08}. However, these techniques are not directly applicable in our setting: 
\begin{enumerate}
    \item Existing UC-secure protocols for coin-tossing require additional assumptions which we do not have \cite{CanettiF01,PassLV12}. Importantly, we do {\em not} assume a common random string. Indeed, agreeing on a random string in the GMPC model is one of our main technical contributions.
    \item With bounded concurrent composition, the complexity of each user depends polynomially on the number of concurrent executions \cite{Lindell04}. In our case, the number of concurrent executions is $n$, i.e., the number of users, and hence we cannot use bounded concurrent composition (as we aim for $\polylog(n)$ user complexities).
\end{enumerate} 
To overcome this challenge, we deviate from the standard real vs.\ ideal paradigm, and provide an alternative security proof for our coin-tossing protocols (we do not simulate them). Instead, we identify ``good events'' and claim that they hold with overwhelming probability. (See, for example, Theorem~\ref{thm:intro_committee} above, where we claim that a ``good'' committee is elected with overwhelming probability.) We stress that the security of our full construction is proven using the real vs.\ ideal paradigm (assuming that our predefined ``good event'' has occurred).

\paragraph{Challenge 4: Verifying long statements.} We do not assume any specific bound on the runtime needed to evaluate the functionality we compute, only that it can be evaluated in polynomial time. For example, it can require time $n^5$. As users' runtime is restricted to $\polylog(n)$ this means that we must delegate essentially all of the work to the server. This must be done carefully, as the server might be corrupted. In particular, we will need the server to prove to the users that it did the computation correctly. However, as each user is only polylogarithmic in $n$, any one of them cannot even read the
entire statement (this is going to be a statement about a computation with $n$ inputs). To tackle this, we let the users work together as one verifying unit when verifying long statements.

\begin{remark}
Except for Challenge 2 (DDoS), these challenges are also present when working directly in a star network with PKI. That is, they are not a byproduct of the GMPC model. Moreover, Challenge 3 (composition) becomes even more severe in a star network with PKI, as we explained in \cref{sec:GMPCintro}. 
\end{remark}

\subsection{Our Techniques}\label{sec:techniques}
We now provide an overview of our techniques. We start by addressing the following two challenges that arise from the server's ability to block users:

\begin{enumerate}
	\item A malicious server can fail-stop honest users and lie about it. That is, it can block an honest user $i$ and tell another user $j$ that $i$ aborted, even though it did not.
	
	\item If an honest user catches the server cheating, then the server can block this user and prevent it from ``screaming for help''.  
\end{enumerate}

In order to deal with these two issues, we construct two tools.

	\paragraph{Tool 1: Personal committees.} To solve the first issue, each user and the server will replace the user with a \emph{personal committee} (PC) of size $\polylog(n)$ chosen uniformly at random. Now, instead of interacting directly with the users, the server will henceforth interact with each PC (as if it is the single party it represents). If the server is honest, then this ensures that {\em all} personal committees, including those of malicious users, will have an honest majority, and thus they effectively become honest parties and will not abort. Therefore, if the server is malicious and is lying to some PC $\P$ about another PC aborting, then $\P$ knows that the server is cheating and therefore it will abort.
	
	That is, once we know that a user $i$ is represented by a PC $\P$, which was agreed upon by both the server and the user, then this PC should never abort, and the server cannot claim that it aborted. The question is how do we advertise the set of PCs (one for each user), in a way that remains secure even if the server is malicious. In other words, we need to figure out how can the users {\em agree} on the set of PCs.

    \paragraph{Agreeing on the set of PCs.} The idea is as follows. The server will commit (for every user separately) on an array of length $n$ whose \ith entry is the \ith PC (set to $\bot$ if user $i$ was caught cheating during the sampling of the PC). Each user $i$ then verifies that its PC appears in this array. Then, the user chooses a random set of $\polylog(n)$ other users to compare the information with them. Specifically, it sends them the commitment it got from the server for the array (as well as the needed public parameters) and its PC. These users respond with the commitments that they got from the server and their PCs. User $i$ then verifies with the server that the information it got from these other users is valid.
    Note that some users might not respond, or provide information on which the server will not agree (either malicious or blocked), but not too many of them, as otherwise user $i$ learns that the server is cheating, and aborts.

    If user $i$ did not abort until now, then in particular, its PC appears correctly in all of the (validated) commitments it collected from the other users. Since these users were chosen at random, by the Chernoff bound (Fact~\ref{fact:chernoff}), this means that user $i$'s PC is stated correctly in the {\em vast majority} of the commitments given to the users in the protocol (even ones which user $i$ did not sample). Also, by this point, every user $j$ possesses $\polylog(n)$ randomly sampled commitments. Hence, again by the Chernoff bound, with overwhelming probability, for every user $j$ (even one who did not communicate with user $i$), it must hold that the PC of user $i$ is stated correctly in the vast majority of the commitments possessed by user $j$. That is, by this point the (remaining) users have reached an agreement on the array of PCs in the following sense: Every user holds a collection of commitments (validated by the server) such that for every user who did not abort the computation till now, its PC is stated correctly in the majority of these commitments.

    Finally, to ensure that most honest users remain active, each of them samples $\polylog(n)$ random users, notifies them it is alive, and requests a feedback. If more than \changed{1/8} of them did not respond, then the user aborts. If a user $i$ did not abort till now, it can trust that at least \changed{7/8} of the users are active, and furthermore, the PC of every active (honest) user is correctly stated and is consistent in the commitments given by the server \changed{to  active honest users}. 
    From this point on, instead of interacting directly with the users, the server and the remaining users interact with the PCs (as if they are the single party they represent). By the above discussion, if the server is honest then {\em all} of the PCs are honest, and if the server is corrupted then at least \changed{3/4} fraction of the PCs are honest.\footnote{If the server is corrupted, then it can get away with blocking \changed{1/8} fraction of the users and replacing their PCs with malicious PCs. Overall, there could be at most \changed{2/8=1/4} malicious PCs.}
    As we mentioned, the benefit here is that now the server cannot lie about a PC aborting, because no PC should ever abort.

\paragraph{Tool 2: Sampling a communication graph.} For the second issue, we let the parties sample an undirected communication graph  where each vertex represents a PC and the diameter of the graph is at most $O(\log n)$.
Once this is done, after every message sent from the server, the PCs can simply check if their neighbors are still active for $\log n$ steps, and abort if they are not. Recall that (assuming the server is honest) each PC should contain an honest majority and should never abort the execution unless it catches the server cheating. Therefore, such a communication graph allows users (or PCs) to ``scream for help'' in the sense that if they abort the computation, then in the following $\log(n)$ rounds we will have that {\em all} of the users abort the computation.
	
We sample this graph by having every PC sample $\polylog(n)$ other PCs as their neighbors in the graph. %
\changed{To see why the graph has a small diameter, consider any two vertices $u$ and $v$, and  
suppose that we start from the vertex $u$ and sample $\polylog(n)$ neighbors. Then, each of its neighbors samples $\polylog(n)$ more vertices, and so on. Observe that as long as less than half of the vertices corresponding to honest PCs were sampled, on expectation, at least half of the vertices sampled at each iteration are new. Therefore, by Hoeffding's inequality it follows that with overwhelming probability at least a quarter of them are new. Since this corresponds to an exponential growth, with overwhelming probability the number of iterations before covering half of the honest vertices is logarithmic. Finally, once the process covers half of the honest vertices, then one of these honest vertices samples the vertex $v$ with overwhelming probability.}\footnote{This is similar to arguments on the distances in small world graphs~\cite{Kleinberg00,MankuNW04}.}

\subsubsection{Committee Election Protocol}
We now turn to describe the ideas behind the proof of \cref{thm:intro_committee}. 
Armed with the above two tools, sampling a committee of size $n'=\polylog(n)$ can be done using the following variant of Feige's committee election protocol \cite{Feige99}. %
Let $k$ denote the number of PCs that remain after the sampling of the communication graph. Each PC $\P_i$ sends to the server a randomly chosen bin $b_i\in[k/n']$, and the server replies with the set of users $\C$ who chose the lightest bin alongside the index $i$ of this bin. Each PC then verifies the following:
\begin{enumerate}
	\item It is indeed in $\C$ if it chose the bin $i$, and not in $\C$ otherwise.
	
	\item $\C$ is not larger than the largest size the minimal bin can have, \ie the PC verifies that $|\C|\leq n'$.
	
	\item Its neighbors in the graph received the same messages from the server.
	
	\item Finally, each PC asks if its neighbors are still active for $\log n$ steps.
\end{enumerate}

We next claim that if no PC aborts, then they all agree on a small committee $\C$ that contains an honest majority, except with negligible probability. First, observe that if the server is honest then no PC aborts (even those that were sampled for malicious users). Next, consider the case where the server is malicious. By the properties of the graph, if any PC aborts then all PCs abort. Thus, we may assume that no PC aborted the execution. This implies that the server sent to all PC's of honest parties the same ``small'' committee $\C$ and the same bin-index $b\in[k/n']$. Moreover, it must be the case that for all PC's of honest users, they are in $\C$ if and only if they sampled $b$.

Finally, to show that $\C$ contains an honest majority with overwhelming probability, we use the same analysis as Feige's original protocol. 
First, as we explained above, the fraction of PC's of \emph{malicious} users is less than \changed{1/4}, and there is a \emph{strong majority} of PC's of honest users. Now, by Chernoff's bound, for any bin, the probability that the fraction of honest users is less than $\changed{3/4}$ is negligible. Then by the union bound, the probability that there exists such a bin is negligible. Thus, except with negligible probability, the fraction of honest users in the lightest bin (which is of size at most $k$) must be at least $\changed{3/4}$. Note that \changed{even if the server blocks another 1/8-fraction of the users in the elected committee, then the fraction of {\em honest active} users among the users of the elected committee is still more than $1/2$.} This allows the elected committee to securely compute any efficient functionality.\footnote{Specifically, if the server is honest then output delivery is guaranteed, and if the server is corrupted then we get security with abort.} We refer the reader to \cref{sec:comm_love} for further details. We stress that as the size of the committee is only polylogarithmic in $n$, then it can run computations of polynomial complexities in the number of participants.

\subsubsection{Secure Computation in the \LOVE Model}
We proceed with outlining the proof of \cref{thm:intro_main}.
Since we want  low communication and computation load on the users while keeping their inputs hidden, we employ a fully homomorphic encryption (FHE) scheme. The pattern we would like to follow is that the users choose {\em collectively} a key to such an FHE scheme, encrypt their private inputs with the key, and send the result to the server. The server in turn computes the desired function on the encrypted input. The users then collectively decrypt the result. 
With a trusted (or semi-honest) server, this would have been a reasonable protocol,  where the server does not learn about the users' inputs. But with a malicious server,  how can the users know that the value they are decrypting is indeed the correct one and not some other function of the inputs that the server computed? 

To resolve this, 
first, the users agree on a committee $\C$ using our committee election protocol. Once the users agreed on a committee $\C$, this committee samples keys for a fully-homomorphic encryption scheme (where the secret key is shared among the members of the committee). Then, similarly to \cite{BCDH18,BCG21,BGT13}, there is a ``tree process'' in which using $\log n$ rounds every committee spawns two committees (both with the state of holding the secret key). After this process, we have $n+1$ elected committees that everybody ``knows'' (actually, the users cannot really know all the $n+1$ committees since we aim for $\polylog(n)$ complexity for the users). With overwhelming probability, all committees will contain a vast honest majority. Thus we can effectively view them as honest parties. 

Next, we let the committees hold the users’ inputs. To do so efficiently, the \ith committee will request the \ith user for its input. To prove that the committee is honest and was sampled by the original elected committee (and not by a malicious server), we let the original committee generate signature keys and provide the committees it spawns with the secret key. Thus it can sign a message for the user to verify. (For this we need to advertise the public signature key that the original committee sampled. Advertising this public key can be done efficiently via our communication graph.) 
Next, every committee encrypts its input using a fully homomorphic encryption, and sends the encrypted input to the server. The server then homomorphically computes the function over the encrypted inputs.

Now we want each of the $n+1$ committees to decrypt for the server one chunk of the encrypted result. However, before they decrypt, they need to be convinced that the server did the computation correctly. Thus, the next step in our protocol is to let the server prove to the committees the following statement: ``Here is a list of all the encrypted inputs I got, and here is the encrypted result of applying the desired function to these encrypted inputs''.

The issue now is that the users (and each committee) are too weak to read even the statement itself (even without the proof). To overcome this challenge, we want the server to encode this statement using a (linear) error-correcting code such that afterwards the users would only need to verify that the encoding was done (roughly) correctly and they would not need to read all of the statement. So the server commits to the statement, its encoding, and the proof. Then, the $n+1$ committees together verify $\polylog(n)$ random positions of the encoding (they can do it efficiently since verifying a single position is like computing a summation, which the $n+1$ committees can do using a ``tree process''). After the proof verification is over, the committees decrypt the result for the server.

\subsection{Motivation from Differential Privacy: the Shuffle Model}

Differential privacy \cite{DMNS06} is a mathematical definition for privacy that aims to enable statistical analyses of datasets while ensuring that individual-level information does not leak. Traditionally, differentially private algorithms work in two main modalities: curator (centralized) and local. The curator model assumes a trusted centralized curator that collects all the personal information and then analyzes it and publishes the results. The local model, on the other hand, does not involve a central repository. Instead, each piece of personal information is randomized by its provider to protect privacy {\em before} submitting it to an (untrusted) server, who aggregates all the noisy messages it receives.

While the local model provides a preferable {\em trust model}, it involves a significant amount of noise (as noise is added to every piece of the data) which results in degraded performances compared to the curator model. In theory, existing MPC constructions can be used to implement any curator model computation without the need for a central repository (thereby essentially matching the trust assumptions of the local model). However, these constructions are still not practical enough to be implemented at the scale at which differentially private algorithms are executed, say when Apple is interacting with 100M iPhone users.

As a result, new models for differential privacy have emerged to alleviate the (inevitable) low accuracy of the local model, while hopefully maintaining its trust guarantees. This includes significant amount of work on the {\em shuffle model}~\cite{IKOS06,BittauEMMRLRKTS17,ErlingssonFMRTT19,CheuSUZZ19,BalleBGN19,BalleBGN20,BeimelHNS20,CheuU21,GhaziG0PV21,TenenbaumKMS21} as well as  the {\em hybrid model}~\cite{blender,AventDK20,BeimelKNSS20,KohenS22}. We view our work as providing another piece of the puzzle, taking an important step towards bridging between MPC and the conditions at which differential privacy is currently being deployed.

\paragraph{Application for the shuffle model.} The shuffle model augments the local model of differential privacy with a {\em shuffle functionality}, that takes inputs from all the users and delivers them to the server in a random order. The line of work on the shuffle model shows that this assumption allows for significant utility improvements over the (plain) local model. However, it is not clear how one should implement such a shuffle functionality securely and efficiently. We provide a possible answer to this question in the GMPC model. Although our main result (Theorem~\ref{thm:intro_main}) requires the use of heavy cryptographic machinery (e.g., fully homomorphic encryption), we stress that various simple yet important functionalities can be computed more efficiently and without relying on the more exotic types of cryptographic primitives once the parties elect a committee. In particular, aggregation and shuffling can be securely computed {\em without assuming FHE}. %
We elaborate on this in \cref{sec:log_depth,sec:shuffle}.

\subsection{Related Work}

We are not the first to study a model where a strong server interacts with many ``weak'' users using a star topology, and similar models were considered by \cite{Bonawitz2017,RSY21,BellBGL020}. However, to the best of our knowledge, we are the first to present a general MPC construction in this setting. For example, \citet{BellBGL020} presented a constant-round secure-protocol for {\em summation in finite groups}, in which each user has only $\polylog(n)$ time and communication complexities, but 
extending their construction to other functionalities is not clear. In particular, it is not clear if their protocol can be used for bit summation.\footnote{For bit summation one needs to work in a finite group of size at least $n$, in order to avoid summation overflows, but then a single malicious user can modify the result almost arbitrarily by choosing its input to be a random number in this group. That is, a single malicious user can affect the sum by a factor of $n$.} \citet{RSY21} presented a protocol for a somewhat more general family of functionalities (functionalities requiring limited homomorphism), but their protocol only works in the semi-honest setting and it requires users' runtime to be polynomial in $n$.

Also very related to our work is the line of work on {\em communication locality}, initiated by \citet*{BGT13}. Similarly to our work, in this model, the parties are connected via a complete graph, but each party communicates only with $\polylog(n)$ other parties. Assuming PKI, \citet{BGT13} constructed a protocol computing any efficient functionality, tolerating at most $\alpha n$ corruptions, for some constant $\alpha<1/3$. \citet*{CCGGOZ15} later showed how to handle {\em adaptive adversaries} that can corrupt at most $\alpha n$ parties, for some constant $\alpha<1/2$, assuming the parties are given a \emph{symmetric-key infrastructure}. \citet*{BCDH18} considered the communication graph induced by the interaction in communication-locality protocols. Assuming the parties are given a PKI, they presented a protocol with low locality tolerating $\alpha n$ corruptions, where $\alpha<1/4$ is a constant, such that with overwhelming probability the resulting communication graph is \emph{not an expander}. \citet*{BCG21} showed communication-locality protocols with the additional property of low {\em average} communication complexity. That is, not only does each party communicate with only $\polylog(n)$ other parties, but also the total communication in the protocol is at most $n\cdot\polylog(n)$ bits.
We stress that these works on communication locality are fundamentally different than ours: These works do not involve a server that does all the heavy lifting and can block messages at will, which is the main challenge we address in our work. As these works on communication locality do not involve a server, the {\em computational complexity} of the parties in all these works is polynomial in $n$ (unlike polylogarithmic in $n$ in our work). %

The idea of using small committees to gain efficiency dates back  at least to Bracha~\cite{Bracha85} in 1985. Since then it has been employed by several papers in several settings. In particular, this was used by Boyle et al.~\cite{BGT13} in their communication-locality MPC protocol, and was used by Cohen et al.~\cite{CHOR18} to obtain efficient security uplifting reductions.

Lower bounds for the {\em bottleneck} communication complexity were investigated by Boyle et al.~\cite{BJPY18}. They showed that there exists a function such that for any protocol computing it correctly, \ie without requiring security, there exists a party whose communication complexity is $\Omega(n)$. %
However, this does not say much about the GMPC model, as the communication complexity of the server is large ``by design''.

The issue of a server who talks with weaker processors who only communicate locally and only have local information has also been investigated in the context of interactive proofs. \citet{NaorPY20} showed how to adapt various techniques in proof complexity in order to achieve low communication protocols for proving the correctness of many functions (essentially all NC or even P if one is satisfied with arguments) where the data is distributed among many verifiers. 
\citet{BonehBCGI19} showed how to use linear PCPs is order to get zero-knowledge proofs when  the input statement is not fully available to any single verifier, but can still be efficiently accessed via linear queries.

Another direction to achieve security with many different users is to employ the paradigm of serverless computing where the participants are stateless. In particular, \citet{GentryHKMNRY21} have suggested the ``You Only Speak Once" secure MPC. One important  difference with this work is that there are no inputs to the parties that are being aggregated and computed.

\section{Preliminaries}
\subsection{Notation}
\label{sec:notations}
We use calligraphic letters to denote sets, uppercase for random variables and distributions, lowercase for values, and we use bold characters to denote vectors. For $n\in\NN$ we let $[n]=\{1,2\ldots n\}$. For a set $\cS$ we write $s\from\cS$ to indicate that $s$ is selected uniformly at random from $\cS$. Given a random variable (or a distribution) $X$, we write $x\from X$ to indicate that $x$ is selected according to $X$. For a set $\cS\su\NN$ and a natural number $n\in\NN$, we denote $\cS+n=\sset{s+n:s\in\cS}$. For a vector $\vv$ of dimension $n$ and for $i\in[n]$, we write $v_i$ or $v(i)$ for its \ith entry. For a set $\cI\su[n]$ we denote by either $\vv_\cI$ or $\vv(\cI)$ the vector  $(v_i)_{i\in\cI}$. A \ppt is probabilistic polynomial time, and a \pptm is a \ppt (interactive) Turing machine. 

Given a graph $G=(V,E)$ and a vertex $v\in V$, we let $\neigh_G(v)=\sset{u\in V:\sset{u,v}\in E}$ be the set of neighbors of $v$ in the graph $G$. For two vertices $u,v\in V$, we denote by $\dist_G(u,v)$ the distance between $u$ and $v$, defined as the length of the shortest path between them (set to $\infty$ if there is no such path). Finally, we let $\diam(G)=\max_{u,v\in V}\dist_G(u,v)$ denote the diameter of $G$.

A function $\mu(\cdot)$ is \emph{negligible} 
if for every positive polynomial $q(\cdot)$ and all sufficiently large $\secParam$ it holds that $\mu(\secParam) < 1/q(\secParam)$. We write $\negl(\cdot)$, $\poly(\cdot)$, and $\polylog(\cdot)$ for an unspecified negligible, polynomial, and polylogarithmic function, respectively.

A \emph{distribution ensemble} $X=\sset{X_{a,n}}_{a\in\cD_n,n\in\NN}$ is an infinite sequence of random variables indexed by $a\in\cD_n$ and $n\in\NN$, where $\cD_n$ is a domain that might depend on $n$. Computational indistinguishability is defined as follows.
\begin{definition}\label{def:compInd}
	Let $X=\sset{X_{a,n}}_{a\in\cD_n,n\in\NN}$ and $Y=\sset{Y_{a,n}}_{a\in\cD_n,n\in\NN}$ be two ensembles. We say that $X$ and $Y$ are computationally indistinguishable, denoted $X\comp Y$, if for every non-uniform \ppt distinguisher $\Dist$, there exists a negligible function $\mu(\cdot)$, such that for all $n$ and $a\in\cD_n$, it holds that
	\[
	\abs{\pr{\Dist(X_{a,n})=1}-\pr{\Dist(Y_{a,n})=1}}\leq\mu(n).
	\]
\end{definition}

\begin{fact}[Chernoff's bound]\label{fact:chernoff}
	Let $X_1,\ldots,X_n$ be i.i.d random variables over $\zo$, and let $\mu=n\cdot\ex{X_1}$. Then for every $t>0$ it holds that
	$$\pr{\sum_{i=1}^n X_i>(t+1)\mu}<e^{-\min\sset{t^2/4,t/2}\cdot\mu}.$$
\end{fact}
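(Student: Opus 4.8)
The plan is to prove the stated bound by the standard exponential-moment (Chernoff--Bernstein) argument. Write $S=\sum_{i=1}^n X_i$ and $p=\ex{X_1}$, so that $\mu=np$. First I would fix a parameter $\lambda>0$ and bound the moment generating function of $S$: since the $X_i$ are i.i.d., $\ex{e^{\lambda S}}=\paren{\ex{e^{\lambda X_1}}}^n$, and $\ex{e^{\lambda X_1}}=1+p(e^\lambda-1)\le \exp\paren{p(e^\lambda-1)}$ by the elementary inequality $1+x\le e^x$. Hence $\ex{e^{\lambda S}}\le \exp\paren{\mu(e^\lambda-1)}$.

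Next, applying Markov's inequality to the nonnegative random variable $e^{\lambda S}$ gives
\[
\pr{S>(t+1)\mu}=\pr{e^{\lambda S}>e^{\lambda(t+1)\mu}}\le \frac{\ex{e^{\lambda S}}}{e^{\lambda(t+1)\mu}}\le \exp\paren{\mu\paren{e^\lambda-1-\lambda(t+1)}}.
\]
Optimizing the exponent over $\lambda>0$ (the minimizer is $\lambda=\ln(1+t)>0$) and substituting back yields the ``raw'' bound
\[
\pr{S>(t+1)\mu}\le \exp\paren{-\mu\cdot\phi(t)},\qquad \phi(t)=(1+t)\ln(1+t)-t.
\]

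The remaining step, which I expect to be the only nontrivial point, is the purely analytic claim that $\phi(t)\ge \min\set{t^2/4,\,t/2}$ for every $t>0$ (in fact strictly, which is what upgrades the bound to the strict inequality in the statement). Since $\min\set{t^2/4,t/2}=t^2/4$ for $0<t\le2$ and $=t/2$ for $t\ge2$, I would split into these two ranges. For $0<t\le2$, set $h(t)=\phi(t)-t^2/4$; then $h(0)=0$, $h'(t)=\ln(1+t)-t/2$ with $h'(0)=0$, and $h''(t)=\tfrac1{1+t}-\tfrac12$ is $\ge0$ on $[0,1]$ and $\le0$ on $[1,2]$, so $h'$ rises then falls on $[0,2]$; combined with $h'(0)=0$ and $h'(2)=\ln 3-1>0$ this forces $h'>0$ on $(0,2]$, hence $h$ is strictly increasing and $h(t)>0$. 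For $t\ge2$ it suffices to show $(1+t)\ln(1+t)\ge\tfrac32 t$, i.e.\ $g(t):=\tfrac{1+t}{t}\ln(1+t)\ge\tfrac32$; since $g'(t)=\tfrac{t-\ln(1+t)}{t^2}>0$ for $t>0$, $g$ is increasing, so $g(t)\ge g(2)=\tfrac32\ln 3>\tfrac32$.

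Combining the last paragraph with the raw bound gives $\pr{S>(t+1)\mu}\le\exp\paren{-\mu\min\set{t^2/4,t/2}}$, and the strictness obtained in the analytic claim (or simply the fact that the probability is $0$ whenever the bound would be tight) yields the strict inequality as stated. The MGF estimate and the one-variable optimization over $\lambda$ are entirely routine; essentially all the care goes into the convexity bookkeeping for $h$ and the monotonicity of $g$ in the last paragraph.
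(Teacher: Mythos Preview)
Your argument is correct: the MGF/Markov step leading to the raw bound $\Pr[S>(1+t)\mu]\le e^{-\mu\phi(t)}$ with $\phi(t)=(1+t)\ln(1+t)-t$ is standard, and your two-range verification that $\phi(t)>\min\{t^2/4,t/2\}$ for all $t>0$ is clean and accurate (the convexity bookkeeping for $h$ on $[0,2]$ and the monotonicity of $g$ for $t\ge2$ both check out). The strict inequality then follows as you note.

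There is nothing to compare against: the paper states this as a \emph{Fact} without proof, treating it as a well-known concentration inequality. Your write-up supplies a complete self-contained derivation, which is more than the paper itself offers.
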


\begin{fact}[One-sided Hoeffding's inequality for random subsets]\label{fact:hoeffding}
	Let $m,n\in\NN$, where $n<m$, and let $\cS\su[m]$ be some set. Suppose a set $T\su[m]$ of size $n$ is sampled uniformly at random. Then for all $t>0$ it holds that
	$$\pr{|T\cap\cS|-n\cdot|\cS|/m\geq t}\leq e^{-2t^2/n},$$
	and that
	$$\pr{|T\cap\cS|-n\cdot|\cS|/m\leq -t}\leq e^{-2t^2/n},$$
\end{fact}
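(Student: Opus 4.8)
The plan is to recognize this as a concentration bound for the hypergeometric distribution and to reduce it to the i.i.d.\ case. Write $T=\set{t_1,\ldots,t_n}$, where $(t_1,\ldots,t_n)$ is a uniformly random ordered tuple of distinct elements of $[m]$, and set $X_i=\1[t_i\in\cS]$, so that $W:=|T\cap\cS|=\sum_{i=1}^n X_i$. Each $t_i$ is marginally uniform on $[m]$, hence $\mu:=\ex{W}=n|\cS|/m$, and the claim is a two-sided concentration of $W$ around $\mu$. It suffices to prove the upper tail $\pr{W-\mu\geq t}\leq e^{-2t^2/n}$: applying this to the complement set $\cS':=[m]\setminus\cS$ and using $|T\cap\cS'|=n-W$ together with $n|\cS'|/m=n-\mu$ gives $\pr{W-\mu\leq -t}=\pr{|T\cap\cS'|-n|\cS'|/m\geq t}\leq e^{-2t^2/n}$, which is exactly the lower tail.

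The one nonroutine ingredient is a classical theorem of Hoeffding: if $\widetilde W=\sum_{i=1}^n\widetilde X_i$ denotes the analogous sum obtained by sampling the $n$ indices \emph{with} replacement (so the $\widetilde X_i$ are i.i.d.\ $\mathrm{Bernoulli}(|\cS|/m)$), then $W$ is dominated by $\widetilde W$ in the convex order, i.e.\ $\ex{\phi(W)}\leq\ex{\phi(\widetilde W)}$ for every convex $\phi:\R\to\R$. I would either cite this directly or reproduce Hoeffding's argument, which deduces the domination from Jensen's inequality via a suitable averaging representation of the without-replacement sum; this is the step I expect to require the most care, as the combinatorial bookkeeping is the real substance of the inequality — everything downstream is the textbook Chernoff method applied to an i.i.d.\ sum.

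Granting the domination, fix $\lambda>0$ and apply Markov's inequality to $e^{\lambda(W-\mu)}$, using convexity of $x\mapsto e^{\lambda x}$:
\[
\pr{W-\mu\geq t}\;\leq\; e^{-\lambda t}\,\ex{e^{\lambda(W-\mu)}}\;\leq\; e^{-\lambda t}\,\ex{e^{\lambda(\widetilde W-\mu)}}\;=\;e^{-\lambda t}\prod_{i=1}^n\ex{e^{\lambda(\widetilde X_i-\ex{\widetilde X_i})}}.
\]
Each $\widetilde X_i-\ex{\widetilde X_i}$ is mean zero and supported in an interval of length $1$, so Hoeffding's lemma gives $\ex{e^{\lambda(\widetilde X_i-\ex{\widetilde X_i})}}\leq e^{\lambda^2/8}$, whence $\pr{W-\mu\geq t}\leq e^{-\lambda t+n\lambda^2/8}$. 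Optimizing over $\lambda$, the choice $\lambda=4t/n$ yields $\pr{W-\mu\geq t}\leq e^{-2t^2/n}$, which together with the reduction of the first paragraph completes the proof.

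As an alternative that sidesteps quoting the convex-order domination, one can realize $T$ through a uniformly random permutation $\pi$ of $[m]$ (taking $T=\set{\pi(1),\ldots,\pi(n)}$), observe that $W$ changes by at most $1$ when two coordinates of $\pi$ are transposed, and invoke McDiarmid's bounded-differences inequality for random permutations; this delivers the same exponent $2t^2/n$ but merely relocates the same work into the permutation concentration inequality.
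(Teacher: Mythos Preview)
The paper states this as a \emph{Fact} without proof, so there is no argument to compare against. Your proposal is correct and is the standard derivation: reduce the lower tail to the upper tail by passing to the complement $\cS'=[m]\setminus\cS$, then invoke Hoeffding's convex-order domination of sampling without replacement by sampling with replacement (Hoeffding 1963, Theorem~4) to bound the moment generating function by that of an i.i.d.\ Bernoulli sum, and finish with Hoeffding's lemma and the Chernoff optimization $\lambda=4t/n$. The alternative via bounded-differences for random permutations is also valid and yields the same exponent.
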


\subsection{Secure Computation and the Model of Computation}
\label{sec:Background}

The security of multiparty computation protocols is defined via the real
vs.~ideal paradigm~\cite{Gol04,Can00}. According to this paradigm, a protocol in the \emph{real-world} model, i.e., where actual
protocols are executed, is deemed secure if it ``emulates'' the formulated \emph{ideal-model} for executing the task. This ideal-model involves a trusted party whose functionality captures the security requirements of the task, specifically, an adversary in this model is very limited in what it can do. To show that the real-world protocol emulates the ideal-world  protocol, it is required to show that for any real-life adversary $\adv$, there exists an ideal-model adversary $\Sim$ (called simulator) such that the global output of an execution of the protocol with $\adv$ in the real-world model is
distributed similarly to the global output of running  $\Sim$ in the ideal
model.

\subsection*{The Real Model}
A multiparty protocol with $\partNum$ parties is defined by $\partNum$
interactive probabilistic polynomial-time Turing machines
$\Pc_1,\ldots,\Pc_\partNum$. Each Turing machine (party) holds at the beginning of the execution the common security parameter $1^{\secParam}$, a private input, and random coins.
The \emph{adversary} $\adv$ is a \emph{non-uniform} interactive Turing machine, receiving an auxiliary information $\aux\in\zos$, describing the behavior of the corrupted parties. It starts the execution with input that contains the identity of the corrupted parties, their inputs, and an additional auxiliary input $\aux$.

The parties execute the protocol over a synchronous network. That is, the execution proceeds in rounds: each round consists of a \emph{send phase} (where parties send their messages for this round) followed by a \emph{receive phase} (where they receive messages from other parties). 

Throughout the execution of the protocol, all the honest parties follow the instructions of the prescribed protocol, whereas the corrupted parties receive their instructions from the adversary. The adversary is considered to be \emph{malicious}, meaning that it can instruct the corrupted parties to deviate from the protocol in any arbitrary way. Additionally, the adversary has full-access to the view of the corrupted parties, which consists of their inputs, their random coins, and the messages they see throughout this execution. At the conclusion of the execution, the honest parties output their prescribed output from the protocol, the corrupted parties output nothing, and the adversary outputs a function of its view (containing the views of the corrupted party).

\subsubsection*{The \LOVE Model and \LOVE Functionalities}
In the \LOVE model, we consider $\partNum+1$ parties, $\Pc_0,\ldots,\Pc_\partNum$. We refer to $\Pc_0$ as the server and the other parties as users. The mindset is that there is a huge number of users, all of which are much less computationally powerful than the server. Formally, although all parties are polynomial in the security parameter $\secParam$, we assume the users $\Pc_1,\ldots,\Pc_\partNum$ to be \emph{polylogarithmic} in $\partNum$, and the server to be polynomial in $\partNum$. Furthermore, since we require the server to be computationally bounded, we assume that $\partNum$ is subexponential in $\secParam$, \ie $\partNum=2^{o(\secParam)}$, or equivalently, $\secParam=\omega(\log n)$. Finally, we let $\partNum$ be held in binary by the users, and in unary by the server. 
Additionally, when we say the adversary is \ppt, it means that it is polynomial in both $\secParam$ and $n$. Similarly, when we say a function is negligible, it is shorthand to saying it is negligible in both $\secParam$ and $n$.

We consider a complete point-to-point network. Here, every pair of parties is connected via a secure and authenticated channel, and thus the adversary cannot read or modify messages sent between two honest parties. We assume the parties \emph{do not} have access to a broadcast channel. We assume that the server, once corrupted, can \emph{disconnect} any pair of parties. In more details, before the ``send phase'' of any round, the server receives the list of pairs of users $(\Pc_i,\Pc_j)$, where $\Pc_i$ is going to send a message to $\Pc_j$ in the next round, together with the length of the message. Based on this information, a corrupted server can decide which of the messages it blocks (without obtaining the contents of the messages at any point). \changed{Additionally, users may abort, however, whenever this occurs we assume that this information is given only to server, and to users that try to interact with the aborting users.}

\begin{remark}
In \cref{sec:star} we consider a star network, where the users are connected only to the server. Additionally, in \cref{sec:star} (and only there) we assume PKI of a specific form, suitable for the GMPC model, which we call {\em GPKI}. Intuitively, GPKI is a PKI where Gulliver cannot invent names/public-keys or hide existing ones. We stress that, with the exception of \cref{sec:star}, we {\em do not} assume PKI in any of the other parts of this work.
\end{remark}

\paragraph{Distributed denial of service attacks.}
As mentioned in the introduction, an adversary can flood the network with garbage messages, even if the server is honest. We propose two solutions in order to prevent this. Let us start with the first solution. We define a global value $\Delta=\poly(\log n,\secParam)$ that bounds the number of users that any other user can send a message to in any given round. Moreover, we require the protocols to be such that in an honest execution, the probability that there exists a user receiving more than, say $100\Delta$, messages from different users is negligible (in both $\secParam$ and $n$). Now, whenever some user $i$ receives messages from more than, say, $\Delta^3$ different users, then the \emph{honest} server (who knows which parties interacted at any given round) blocks \emph{all} users that interacted with user $i$.
Note that this includes blocking {\em honest} users, but not too many of them since for every honest user that is being blocked there are {\em many} malicious users that are being blocked.
For our purposes, this solution suffices since this implies that the fraction of malicious users is getting smaller by performing this attack. Note that this will affect the definition of the ideal world. 

Let us now present the second solution. The protocol we construct has the property that for every round, either all users interact with a random subset of the users, or they interact with users that the server can infer from its view. In the latter set of rounds, if a malicious user tries to send a message to a user outside of the set held by the server, then the server simply blocks the user and labels it malicious. In the former case, we let each user and the server interact via a coin-tossing into to well (two-party) protocol where at the end of its execution, the user holds a random subset of the users, and the server holds a commitment to this set. When the users are required to sample a set of users to interact with, they simply open the commitment to the server, and use the subset to which they are committed to. If a malicious user tries to send a message to a user outside of its committed set, the server will catch it with overwhelming probability, and, similarly to the previous case, the server will block the user before it sends the messages.

To simplify the presentation, in all of our protocols and proofs, we assume that the adversary does not perform a DDoS attack in case the server is honest.

\paragraph{\LOVE protocol and functionalities.} 
We call a protocol in the above model an $n$-user \LOVE protocol. We next define \LOVE functionality. Roughly, these are solitary-output functionalities, where the server alone obtain the output. Furthermore, the length of the inputs and the output is $\tilde{O}(n)$.
\begin{definition}[\LOVE functionality]
	An $\partNum$-ary functionality $f=\set{f_\secParam}_{\secParam \in\N}$ is a sequence of polynomial-time computable, randomized mappings $f_\secParam:(\Dom_{\secParam})^\partNum \rightarrow (\Rng_{\secParam})^\partNum$.\footnote{The typical convention is to have $\Dom_{\secParam}=\Rng_{\secParam}=\zos$. However, in this work, we deal with functionalities over a domain that is polylogarithmic in $n$, which is why we introduce this notation.} To alleviate notations, we sometimes omit $\secParam$ from functions of $\secParam$, \eg we write $\Dom$ instead of $\Dom_{\secParam}$.
	We call an $(n+1)$-ary functionality $f$ \LOVE functionality if $\log|\Rng|=n\cdot\poly(\secParam,\log n)$ and only the first party (\ie the server) obtains an output.
\end{definition}

For a protocol $\Pi$ and an adversary $\adv$, we denote by $\Real_{\Pi,\Adv(\aux)}(\vx,\secParam,n)$ the joint output of the adversary $\adv$ and the server (assuming it is honest), in a random execution of $\Pi$ on security parameter $\secParam\in\NN$, inputs $\vx=(x_0,\ldots,x_n)\in\zos$, the number of users $n\in\NN$, and an auxiliary input $\aux\in\zo^*$.

\subsection*{The Ideal Model}
We consider an ideal computation with \emph{guaranteed output delivery} (also referred to as \emph{full security}), where a trusted party performs the computation on behalf of the parties, and the ideal-world adversary \emph{cannot} abort the computation. There is one subtlety in the definition. Observe that in the real world, a corrupted server can block some of the users, claiming them to be malicious. Thus, we have to allow a corrupted server to do same in the ideal world.

We next present an ideal execution for the computing a \LOVE functionality $f$ assuming a corrupted server. The case where the server is honest is presented in \cref{sec:honest_server} (where we define security assuming the adversary cannot perform the DDoS attack described in the previous section). %

\paragraph{Ideal world for full security with blocking assuming a corrupted server.} 
We next describe the interaction in the ideal world assuming the server is corrupted. Let $\adv$ be an adversary corrupting a subset $\corrset\su[n]$ of the users, which also corrupts the server. In the following, we let $\alpha$ denote an upper bound on the fraction of users that can be corrupted by an adversary (which is known to all parties).

\begin{description}

\item[Inputs:] Each party holds the security parameter $1^\secParam$ and the number of users $n$ (held in binary by the users and in unary by the server). Additionally, the server holds input $x_0$, user $\Pc_i$ holds $x_i$, and the adversary is given auxiliary input $\aux\in\zos$.

\item[Parties send inputs to trusted party:] The honest parties send their inputs to the  trusted party. For each corrupted party, the adversary $\adv$ sends to the trusted party some value from their domain as input. Additionally, the adversary sends a set $\block\su[n]$ of users of size at most $\alpha n$. For every user $i\in\block$ the trusted party replaces $x_i$ with a default input from the same domain. Denote by $(\partyInputTag{1},\ldots,\partyInputTag{\partNum})$ the
tuple of inputs received \changed{(and possibly modified)} by the trusted party.

\item[Trusted party sends output to the server:] The trusted party computes $\outValue\la f(\partyInputTag{1},\ldots,\partyInputTag{\partNum})$
with uniformly random coins and sends the output $\outValue$ to the server.

\item[Outputs:] An honest server outputs the value sent by the trusted party, and a corrupted server outputs nothing. Additionally, all users output nothing and $\adv$ outputs a function of its view (its inputs, the output, and the auxiliary input $\aux$).
\end{description}

Let $\Ideal_{f,\adv(\aux)}(\vecInput,\secParam,n)$ be the 
random variable consisting of the output of the adversary $\adv$ in this ideal world execution and the output of the honest parties in the execution.

We next define secure computation. For an elaborate discussion on this notion, see~\cite{Gol04}.
\begin{definition}[malicious security]
\label{def:1Overp-security}
Let $n=n(\secParam)$ denote the number of users, let $m=m(\secParam,n)=n\cdot\poly(\secParam,\log n)$ denote the length of each input and the output, let $t=t(\secParam,n)$ be such that $t<n$ denote a bound on the number of corrupted users, and let $f:\sparen{\zo^m}^{n+1}\mapsto\zo^m$ be a \LOVE functionality. An $\partNum$-party \LOVE protocol $\Pi$ computing $f$ is said
to be $t$-secure, if for every non-uniform probabilistic polynomial-time adversary $\Adv$ in the real model, controlling 
at most $t$ user and which possibly also corrupts the server, there
exists a non-uniform probabilistic polynomial-time adversary $\Sim$  in the ideal model, controlling the same parties as $\Adv$, such that the following holds
$$
\set{\Ideal_{f,\Sim(\aux)}(\vecInput,\secParam,n)}_{\aux \in \set{0,1}^*,\vecInput \in \Dom^\partNum,\secParam\in\N}
\quad\comp\quad
  \set{\Real_{\Pi,\Adv(\aux)}(\vecInput,\secParam,n)}_{\aux \in \set{0,1}^*,\vecInput \in \Dom^\partNum,\secParam\in\N}.
  $$
\end{definition}

\subsubsection*{The Hybrid Model}\label{sec:hybrid_model}

The \emph{hybrid model} is a model that extends the real model with a trusted party that provides ideal computation for specific functionalities. The parties communicate with this trusted party in exactly the same way as in the ideal models described above.

Let $f$ be a functionality. Then, an execution of a protocol $\Pi$ computing a functionality $g$ in the $f$-hybrid model involves the parties sending normal messages to each other (as in the real model) and in addition, having access to a trusted party computing $f$. 
We consider the setting where the parties may invoke several functionalities concurrently, and where the parties can invoke one functionality during the call to another.


\subsubsection*{Security Under Composition of Protocols}\label{sec:secComposition}
The security notion defined above is known as stand-alone security, as it deals with an execution of single protocol, executed in isolation. Generalized security notions take into consideration possible executions of other protocols that run concurrently over the same communication network. Most notable are security under general concurrent composition and universal composition \cite{Canetti01}. These definitions deal with a more realistic setting where protocols are executed in an unknown environment, and the protocol may be  liable to attacks that are not possible in the stand alone setting. Furthermore, and more relevant to this work, it is often useful to construct protocols as a composition of several sub-protocols that are executed concurrently.  

The composition theorem of \citet{Canetti01} states the following. Let $\rho$ be a protocol that securely computes $f$. Then, if a protocol $\pi$ computes $g$ in the $f$-hybrid model, then the protocol $\pi^\rho$, that is obtained from $\pi$ by replacing all ideal calls to the trusted party computing $f$ with the protocol $\rho$, securely computes $g$ in the real model.

\begin{theorem}[\cite{Canetti01}]\label{thm:Composition}
Let $n=n(\secParam)$ denote the number of users, let $m=m(\secParam,n)=n\cdot\poly(\secParam,\log n)$ denote the length of each input and the output, let $t=t(\secParam,n)$ be such that $t<n$ denote a bound on the number of corrupted users, and let $f:\sparen{\zo^m}^{n+1}\mapsto\zo^m$ be a \LOVE functionality. Suppose we are given a protocol $\rho$ computing $f$ with $t$ security. Further suppose that there exists a protocol $\pi$ computing $g$ with $t$-security in the $f$-hybrid model.
Then protocol $\pi^\rho$ computes $g$ with $t$ security in the real model.
\end{theorem}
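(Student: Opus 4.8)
The plan is to observe that \cref{thm:Composition} is a restatement, for \LOVE functionalities, of the universal/modular composition theorem of \citet{Canetti01}, and to carry out the standard hybrid argument that underlies it while checking that the features peculiar to the \LOVE ideal model play no role. The only such features are that in the ideal model for a \LOVE functionality (i) only the server receives output and (ii) a corrupted server may hand the trusted party a block set; both are part of the specification of what the trusted party computing $f$ does, and the composition argument treats that trusted party as a black box, so neither interferes. Hence the target is to turn a real-model adversary $\Adv$ against $\pi^\rho$ into an ideal-model (for $g$) adversary, using the $t$-security of $\rho$ to strip out the $\rho$-subprotocols one at a time and then the $t$-security of $\pi$ in the $f$-hybrid model to finish.

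Concretely, first I would fix a non-uniform \ppt adversary $\Adv$ corrupting at most $t$ users (and possibly the server) in a real execution of $\pi^\rho$, and let $N=\poly(\secParam,n)$ bound the number of $\rho$-invocations occurring in such an execution. I would then define intermediate experiments $\mathsf{E}_0,\dots,\mathsf{E}_N$, where $\mathsf{E}_j$ runs $\pi$ but handles the first $j$ activated copies of $\rho$ by a call to the trusted party computing $f$, with the simulator $\Sim_\rho$ guaranteed by the $t$-security of $\rho$ translating between the ideal $f$-interface and the $\rho$-messages that $\Adv$ expects, while the remaining $N-j$ copies of $\rho$ are executed honestly as real subprotocols. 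Then $\mathsf{E}_0$ is $\Real_{\pi^\rho,\Adv(\aux)}$ and $\mathsf{E}_N$ is an $f$-hybrid execution of $\pi$ against the adversary $\Adv'$ that internally runs $\Adv$, forwards all genuine $\pi$-messages unchanged, and replaces each $\rho$-call by an $f$-call handled through an independent copy of $\Sim_\rho$; in particular $\Adv'$ corrupts exactly the parties $\Adv$ does. The step $\mathsf{E}_{j-1}\comp\mathsf{E}_j$ is the $t$-security of $\rho$ applied to its $j$-th instance: everything outside that instance — $\Adv$, the code of $\pi$, and the other $N-1$ copies of $\rho$ (real or already replaced) — collapses into a single non-uniform \ppt machine playing the role of the distinguisher, the inputs to the $j$-th instance are determined by the time it is activated, and so \cref{def:1Overp-security} for $\rho$ yields a negligible gap; summing the $N$ negligible gaps gives $\Real_{\pi^\rho,\Adv(\aux)}\comp\mathsf{E}_N$.

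To conclude, I would apply the $t$-security of $\pi$ in the $f$-hybrid model to $\Adv'$, obtaining a non-uniform \ppt simulator $\Sim$ in the ideal model for $g$, controlling the same parties, with $\Ideal_{g,\Sim(\aux)}\comp\mathsf{E}_N$; transitivity of $\comp$ then gives $\Ideal_{g,\Sim(\aux)}\comp\Real_{\pi^\rho,\Adv(\aux)}$, which is exactly $t$-security of $\pi^\rho$ for $g$ in the real model. The main obstacle is the step $\mathsf{E}_{j-1}\comp\mathsf{E}_j$ when the copies of $\rho$ run \emph{concurrently} (and possibly nested inside calls to other functionalities, as the hybrid model here permits): for the ``fold everything else into the distinguisher'' move to be legitimate one needs the security of $\rho$ to hold in the presence of an arbitrary efficient environment, i.e., an environment-/UC-style formulation rather than a purely stand-alone one. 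This is precisely what \citet{Canetti01} supplies and what we invoke; in our instantiation it is unproblematic because the sub-protocols we actually compose via this theorem admit straight-line black-box simulators (the coin-tossing pieces, which do not, are instead handled via the ``good event'' framework discussed in \cref{sec:IntroChallenges}, Challenge 3), so their stated stand-alone security lifts to the concurrent setting.
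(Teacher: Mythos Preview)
The paper does not give its own proof of \cref{thm:Composition}; it is stated as a citation of Canetti's composition theorem and used as a black box. Your proposal is a faithful outline of the standard hybrid argument underlying that theorem, and you correctly flag the one nontrivial point, namely that the step $\mathsf{E}_{j-1}\comp\mathsf{E}_j$ requires security of $\rho$ in the presence of an arbitrary efficient environment (since the other concurrently running copies of $\rho$ and the surrounding $\pi$-code act as such an environment), which is precisely the UC formulation of \citet{Canetti01} rather than bare stand-alone security. So there is nothing to compare against in the paper itself; your sketch matches the argument one would find in the cited work.
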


\subsection{Cryptographic Tools}
we use fairly standard cryptographic tools, such as secret sharing, signatures and fully homomorphic encryption (the latter being the most ``exotic" one).  
\subsubsection{Secret Sharing}\label{sec:secretsharing}
A (threshold) secret-sharing scheme~\cite{Shamir79} is a method in which a dealer distributes shares of some secret to $n$ parties such that $t$ colluding parties do not learn anything about the secret, and any subset of $t+1$ parties can fully reconstruct the secret.

\begin{definition}[secret sharing]\label{def:TSS}
A \emph{$(t+1)$-out-of-$n$ secret-sharing scheme} over a message space $\cM$ consists of a pair of algorithms $(\Share, \Recon)$ satisfying the following properties:
\begin{enumerate}
    \item $t$-{\bf privacy:}
    For every secret $m\in \cM$, and every subset $\IS\subseteq[n]$ of size $\ssize{\IS}\leq t$, the distribution of the shares $\sset{s_i}_{i\in \IS}$ is independent of $m$, where $(s_1,\ldots,s_n)\gets \Share(m)$.
    
    \item $(t+1)$-{\bf reconstructability:}
    For every secret $m\in \cM$, every subset $\IS\subseteq[n]$ of size $t+1$, every set of shares $\vs=(s_1,\ldots,s_n)$ s.t.\  $\ppr{\vS\gets \Share(m)}{\vS=\vs}>0$ and every vector   $\vs'=(s'_1,\ldots,s'_n)$ s.t.\ $\vs_{\IS}=\vs'_{\IS}$ and $\vs'_{\bar{\IS}} = \bot^{\ssize{\bar{\IS}}}$ it holds that $m=\Recon(\vs')$.
\end{enumerate}
\end{definition}

An error-correcting secret-sharing (ECSS) scheme is a secret-sharing schemes, in which the reconstruction is guaranteed to succeed even if up to $t$ shares are faulty.
This primitive has also been referred to as \emph{robust secret sharing} or as \emph{honest-dealer VSS}~\cite{RB89,CFOR12,CDF01}.

\begin{definition}[error-correcting secret sharing]\label{def:ECSS}
A \emph{$(t+1)$-out-of-$n$ error-correcting secret-sharing scheme} (ECSS) over a message space $\cM$ consists of a pair of algorithms $(\Share, \Recon)$ satisfying the following properties:
\begin{enumerate}
    \item $t$-{\bf privacy:} As in \ref{def:TSS}.
    
    \item {\bf Reconstruction from up to $t$ erroneous shares:}
    For every secret $m\in \cM$, every shares $\vs = (s_1,\ldots,s_n)$, and every  $\vs' = (s'_1,\ldots,s'_n)$ such that $\ppr{\vS\gets \Share(m)}{\vS=\vs}>0$ and $\ssize{\sset{i \mid s_i=s'_i}}\geq n-t$, it holds that $m=\Recon(\vs')$ (except for a negligible probability).
\end{enumerate}
\end{definition}

ECSS can be constructed with perfect correctness when $t<n/3$ using Reed-Solomon decoding~\cite{BGW88} and with a negligible error probability when $t<n/2$ by authenticating the shares using one-time MAC~\cite{RB89}.
In case $t\geq n/2$ it is impossible to construct a $(t+1)$-out-of-$n$ ECSS scheme, or even a secret-sharing scheme that identifies cheaters~\cite{IOS12}.

\subsubsection{Vector Commitments}

Our discussion follows the treatment of Fisch~\cite{Fisch18}. A vector commitment (VC)~\cite{CatalanoF13,LibertY10} is a cryptographic commitment to an ordered sequence of $m$
values $(x_1, \ldots, x_m)$ that admits succinct openings at specific positions (e.g.,\ prove that $x_i$ is the $i^{\text{th}}$ committed message). For security, VCs are required to satisfy position binding, which states
that an adversary should not be able to open a commitment to two different values at the same position. Moreover, VCs are required to be concise, i.e.\ the size of the commitment string and of its openings is independent of the vector length. Usually VCs are also required to be hiding, meaning that opening at several positions does not leak any information about the committed values at other positions. A Merkle tree is an example of a simple vector commitment that is binding and concise but not hiding.

\paragraph{Vector commitment syntax.} We provide a redacted syntax for vector commitments, taken from~\cite{Fisch18}. A vector commitment scheme $\VC=(\VCSetup,\VCCom,\VCOpen,\VCVerify)$ is a 4-tuple of \ppt algorithms described as follows.

\begin{enumerate}
	\item {\bf Setup:} The setup algorithm $\pp\from\VCSetup(1^\secParam,m,\cM)$ is given the security parameter $\secParam$, length $m$ of the vector, and message space of vector components $\cM$. It outputs the public parameters $\pp$, which are implicit inputs to all the following algorithms.
	
	\item {\bf Commit:} The commitment algorithm $(\tau,c)\from\VCCom_{\pp}(\vv)$ takes an input vector $\vv=(v_1,\dots,v_m)$ and outputs a commitment $c$ and an advice $\tau$.
	
	\item {\bf Open:} The opening algorithm $\Lambda_{\cS}\from\VCOpen_{\pp}(\vv,c,\cS,\tau)$ opens the commitment $c$ of the message $\vv$ at locations $\cS\su[m]$. It is given the advice $\tau$ generated by the commit algorithm. The output $\Lambda_{\cS}$ proves that $v_i$ is the \ith committed element of $c$ for all $\cS(i)$ (where $\cS(i)$ denoted the \ith element of $\cS$).
	
	\item {\bf Verify:} The verification algorithm $b\from\VCVerify_{\pp}(c,v'_1,\dots,v'_q,\cS,\Lambda_{\cS})$
	takes as input the commitment $c$, a vector $\cS\su[m]$ of indices, and an opening proof $\Lambda_{\cS}$. It outputs a bit $b\in\zo$ such that $b=1$ (accept) if and only if $\Lambda_{\cS}$ is a valid proof that $c$ is a commitment to a vector $\vv$, \ie $v_{\cS(i)}=v'_i$ for all $i\in[q]$. If $\cS=\emptyset$ then $\Lambda_{\emptyset}$ should be a normal opening, \ie a proof that $c$ is a commitment to $\vv$, namely $v_i=v'_i$ for all $i\in[m]$.
\end{enumerate}

\paragraph{Binding commitments.} The main security property of vector commitments (of interest in the present work) is position binding. The security game augments the standard binding commitment game. Roughly, the security guarantees that no \ppt adversary can generate a commitment that can be decommitted to two different vectors.

\begin{definition}
A vector commitment scheme \VC is said to be \emph{position binding} if for all $\poly(\secParam,m)$-time  adversaries $\AAA$ and for all $\cS',\cS\subseteq[m]$ with $|\cS|=q$ and $|\cS'|=q'$, and for all $i\in[q]$ and $j\in[q']$, the following probability is at most negligible in $\secParam$:
$$
\Pr_{\substack{\pp\leftarrow\VCSetup(1^\secParam,m,\cM)\\(c,\vv,\vv',\Lambda,\Lambda')\leftarrow\AAA(\pp)}}\left[
\begin{array}{c}
\VCVerify(c,\vv,\cS,\Lambda)=1
\wedge
\VCVerify(c,\vv',\cS',\Lambda')=1\\
\wedge\;
\cS(i)=\cS'(j) \wedge v_i\neq v'_j
\end{array}
\right]
$$
\end{definition}

\subsubsection{Digital Signatures}

A digital signature is a scheme for presenting the authenticity of digital messages or documents. We follow the presentation in~\cite{KatzLindell2007}.

\begin{definition}
A {\em signature scheme} $\Sig=(\SigGen,\SigSign,\SigVerify)$ is a 3-tuple of \ppt algorithms described as follows.
\begin{enumerate}
	\item {\bf Key generation:} The {\em key generation} algorithm $(\pk,\sk)\from\SigGen\sof{1^{\secParam}}$ takes as input a security parameter $1^\secParam$ and outputs a pair of keys $(\pk,\sk)$, called the {\em public key} and {\em private key}, respectively. We assume for convenience that $\pk$ and $\sk$ each have length at least $\secParam$, and that $\secParam$ can be determined from $\pk$ and $\sk$.
	
	\item {\bf Sign:} The {\em signing algorithm} $\sigma\from\SigSign_{\sk}(m)$ takes as input the private-key $\sk$ and a message $m\in\{0, 1 \}^*$, and outputs a signature $\sigma$.
	
	\item {\bf Verify:} The deterministic {\em verification} algorithm $b\from\SigVerify_{\pk}\sof{m,\sigma}$ takes as input the public-key $\pk$, a message $m$, and a signature $\sigma$. It outputs a bit $b\in\zo$ such that $b=1$ (accept) if and only if $\sigma$ is a valid signature of $m$.
\end{enumerate}
It is required that for every $\secParam\in\NN$, every $(\pk,\sk)$ output by $\SigGen(1^\secParam)$, and every message $m\in\zos$, it holds that $\SigVerify_{\pk}(m,\SigSign_{\sk}(m))=1$.
\end{definition}

\paragraph{Security of signature schemes.} We next define security of signature schemes. We call $\sigma$ a {\em valid signature} on a message $m$ (with respect to some
public key $\pk$) if $\SigVerify_{\pk}(m ,\sigma) = 1$. We say that an adversary {\em forges} a signature if it outputs a message $m$ along with a valid signature $\sigma$ on $m$, and furthermore $m$ was not previously signed using the secret key.

Let $\Sig=(\SigGen,\SigSign,\SigVerify)$ be a signature scheme, and consider the following experiment for an adversary $\AAA$ and security parameter $\secParam$, denoted as $\sff{Sig\text{-}forge}_{\AAA,\Sig}(\secParam)$:

\begin{enumerate}
	\item $\SigGen(1^\secParam)$ is run to obtain keys $(\pk,\sk)$
	
	\item Adversary $\AAA$ is given $\pk$ and oracle access to $\SigSign_{\sk}(\cdot)$ (this oracle returns a signature $\SigSign_{\sk}(m)$ for any message $m$ of the adversary's choice). The adversary then outputs a message-signature pair $(m, \sigma)$. Let $\cQ$ denote the set of messages whose signatures were requested by $\AAA$ during its execution.
	
	\item The output of the experiment is 1 if $\SigVerify_{\pk}(m,\sigma)=1$ and $m\notin \cQ$, and is defined to be 0 otherwise.
\end{enumerate}

The following defines the unforgeability property required from signature schemes. We also consider security against adversaries that can run in polynomial time in a possibly larger parameter $n(\secParam)$.
\begin{definition}
A signature scheme $\Sig$ is {\em existentially
unforgeable under an adaptively chosen message attack} if for all \ppt adversaries $\AAA$, there exists a negligible function $\negl$ such
that:
$$
\Pr\left[ \sff{Sig\text{-}forge}_{\AAA,\Sig}(\secParam)=1 \right]\leq\negl(\secParam).
$$

For any function $n=n(\secParam)$, the signature scheme is said to be \emph{$n$-secure}  if the above holds \wrt any adversary that runs in $\poly(\secParam,n)$ time.
\end{definition}

\subsubsection{Fully Homomorphic Encryption}
We next define fully homomorphic encryption (FHE) schemes \cite{Gen09}.
\begin{definition}
	A {\em fully homomorphic encryption scheme}  over a circuit family $\C$, is a 4-tuple of \ppt algorithms $\FHE=(\FHEGen,\FHEEnc,\FHEDec,\FHEEval)$:
	\begin{enumerate}
		\item {\bf Key generation:} The {\em key generation} algorithm $(\pk,\sk)\from\FHEGen\sof{1^{\secParam}}$ takes as input a security parameter $1^\secParam$ and outputs a pair of keys $(\pk,\sk)$, called the public key and private key, respectively.
		
		\item {\bf Encrypt:} The {\em encryption algorithm} $c\from\FHEEnc_{\pk}(m)$ takes as input the public-key $\pk$ and a plaintext $m\in\{0, 1 \}^*$, and outputs a ciphertext $c$.
		
		\item {\bf Decrypt:} The deterministic {\em decryption} algorithm $m^*=\FHEDec_{\sk}\sof{c}$ takes as input the private-key $\pk$ and a ciphertext $c$. It outputs $m^*\in\zos\cup\sset{\bot}$ which is either a valid plaintext or a special $\bot$ symbol denoting failure.
		
		\item {\bf Evaluate:} The {\em evaluation algorithm} $c'\from\FHEEval_{\pk}\sof{C,c}$ takes as input the public key $\pk$, a circuit $C\in\C$, and a ciphertext $c$. It outputs a new ciphertext $c'$,
	\end{enumerate}
	The following are required:
	\begin{enumerate}
		\item {\bf Correctness:} For every $\secParam\in\NN$, every $(\pk,\sk)$ output by $\FHEGen(1^\secParam)$, every circuit $C\in\C$, and every message $m\in\zos$, it holds that for a corresponding ciphertext $c=\FHEEnc_{\pk}(m)$, $$\FHEDec_{\sk}\of{\FHEEval_{\evk}\sof{C,c}}=C(m).$$
		
		\item {\bf Compactness:} There exists a polynomial $p$ such that for every $\secParam\in\NN$, the decryption algorithm $\FHEDec$ can be expressed as a circuit of size at most $p(\secParam)$.
	\end{enumerate}
\end{definition}
\paragraph{Security against chosen-plaintext attacks.} To
define chosen-plaintext attack (CPA) security of FHE schemes, let $\FHE=(\FHEGen,\FHEEnc,\FHEDec,\FHEEval)$ be an FHE scheme. Consider the following experiment for an adversary $\adv$ and security parameter $\secParam$, denoted $\FHECPA_{\adv,\FHE}(\secParam)$:
\begin{enumerate}
	\item $\FHEGen(1^{\secParam})$ is run to obtain keys $(\pk,\sk)$.
	
	\item Adversary $\adv$ is given $\pk$ and oracle access to $\FHEEnc_{\pk}(\cdot)$. The adversary outputs a pair of messages $m_0$ and $m_1$ of equal length.
	
	\item A bit $b\from\zo$ is sampled uniformly at random, and then the ciphertext $c\from\FHEEnc_{\pk}(m_b)$.
	
	\item The adversary (still having oracle access to $\FHEEnc_{\pk}(\cdot)$) outputs a bit $b'$.
	
	\item The output of the experiment is defined to be 1 if $b'=b$, and is defined to be 0 otherwise.
\end{enumerate}

The following defines CPA security as we require from FHE schemes. Similarly to signature schemes, we also consider security against adversaries that can run in polynomial times in a possibly larger parameter $n(\secParam)$.
\begin{definition}
	An FHE scheme is said to be \emph{CPA-secure} if for all \ppt adversaries $\adv$, there exists a negligible function $\negl$ such that
	$$\pr{\FHECPA_{\adv,\FHE}(\secParam)=1}\leq\frac12+\negl(\secParam).$$
	For any function $n=n(\secParam)$, the scheme is said to be \emph{$n$-secure} if the above holds \wrt any adversary that runs in $\poly(\secParam,n)$ time.
\end{definition}

\subsection{Probabilistically Checkable Proofs of Proximity}\label{sec:PCPP}
A \emph{probabilistically
	checkable proof of proximity} (PCPP) \cite{BGH+04a,DR04} is proof system, which allows a verifier to be convinced that the input is close to being in a language $L$. In more details, the verifier has an explicit input $x$ and an implicit input $y$ given as an oracle. The verifier accepts with high probability if $y$ is close to some $y'$ such that $(x,y')\in L$.  We next formalize the notion of PCPP. The definitions below are taken almost verbatim from \cite{BGHSV05}. We start with defining the Hamming distance.
\begin{definition}[Hamming distance]
	Let $n\in\NN$. The \emph{Hamming distance} between two strings $x,x'\in\zo^n$ is defined as $\Delta(x,x'):=\sabs{\sset{i\in[n]:x_i\ne x'_i}}$. For a string $x\in\zo^n$ and a set $\cS\su\zo^n$ we let $\Delta(x,\cS)=\min_{x'\in\cS}\delta(x,x')$. Finally, a string $x\in\zo^n$ is said to be $\delta$-far from a set $\cS\su\zo^n$ if $\Delta(x,\cS)>\delta$.
\end{definition}

\begin{definition}[Restricted verifier]	
	Let $r,q:\NN\mapsto\NN$ and $t:\NN\times\NN\mapsto\NN$. An \emph{$(r,q,t)$-restricted verifier} is a probabilistic oracle Turing machine $\Vc$, that is given a string $x$ number $k\in\NN$ (in binary), an oracle access to an input $y\in\zo^k$ and a proof $\pi\in\zos$, tosses $r(|x|+k)$ coins, queries the oracle $(y,\pi)$ at most $q(|x|+k)$ times, runs in time $t(|x|,k)$, and outputs either 0 or 1.
\end{definition}

For a pair language $L\su\zos\times\zos$ and a string $x\in\zos$, we let $L_x=\sset{y\in\zos:(x,y)\in L}$.
\begin{definition}[PCPP for pair languages]
	For functions $r,q:\NN\mapsto\NN$, $t:\NN\times\NN\mapsto\NN$, and $s,\delta:\NN\mapsto[0,1]$, a pair language $L\su\zos\times\zos$ is said to be in $\PCPP_{s,\delta}[r,q,t]$ if there exists an $(r,q,t)$-restricted verifier $\Vc$ such that the following holds.
	
	\begin{description}
		\item[Completeness:] If $(x,y)\in L$ then there exists a proof $\pi\in\zos$ such that
		$$\pr{\Vc^{y,\pi}\of{x,\abs{y}}=1}=1,$$
		where the probability is taken over the random coin tosses of $\Vc$.
		
		\item[Soundness:] If $(x,y)$ is such that $y$ is $\delta(|x|+|y|)$-far from $L_x\cap\zo^{|y|}$, then for every proof $\pi\in\zos$ it holds that
		$$\pr{\Vc^{y,\pi}\of{x,\abs{y}}=1}\leq s\of{\abs{x}+\abs{y}},$$
		where the probability is taken over the random coin tosses of $\Vc$.
	\end{description}
\end{definition}

\begin{theorem}[Efficient PCPPs for pair languages \cite{BGHSV05}]\label{thm:PCPP}
	For every pair language $L\in\operatorname{NP}$ and every constant $s>0$, it holds that $L\in\PCPP_{s,\delta}[r,q,t]$ where
	\begin{itemize}
		\item $\delta(m)=m/\polylog (m)$,
		\item $r(m)=O(\log(m))$,
		\item $q(m)=\polylog(m)$,
		\item $t(n,k)=\poly(n,\log (n+k))$.
	\end{itemize}

	Moreover, it is implicitly stated that for every $(x,y)\in L$ a correct proof can be generated in polynomial time in $|x|+|y|$ \changed{(given a witness)}.
\end{theorem}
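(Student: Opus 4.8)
The plan is to derive this as a repackaging, in the pair-language formalism, of the PCP-of-proximity constructions of~\cite{BGHSV05} (which build on~\cite{BGH+04a,DR04}): I would reduce an arbitrary pair language $L\in\operatorname{NP}$ to the canonical algebraic constraint system handled by that machinery, in a proximity-preserving way, and then read off the stated parameters. Concretely, fix the nondeterministic polynomial-time machine $M$ witnessing $L\in\operatorname{NP}$. For an explicit input $x$, the relation ``$\exists w:\ M$ accepts $(x\|y,w)$'' is captured by a circuit (equivalently, a low-degree constraint system) $\Phi_x$ over the variables of $z=y\|w$ of size $\poly(|x|+|y|)$; the structural point I would rely on is that $\Phi_x$ is the \emph{uniform} computation tableau of $M$, so any local window of it (a gate with its wires, or a constraint with the variables it touches) is computable from $x$ in time $\poly(|x|)\cdot\polylog(|y|)$.

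Next I would take the PCPP proof $\pi$ to consist of (i) a low-degree (Reed--Muller) encoding of a full satisfying assignment $z=y\|w$, and (ii) the auxiliary strings of the PCP verifier for $\Phi_x$. On explicit input $x$ and oracle access to $(y,\pi)$, the verifier runs the PCP verifier on $\pi$ (a low-degree test plus a constraint-consistency test against $\Phi_x$), and additionally performs a \emph{proximity test}: it queries $\polylog(|x|+|y|)$ uniformly random coordinates, reads $y$ and the corresponding $y$-coordinates of the encoded assignment there, and rejects on any disagreement. Completeness, and the ``moreover'' part (an honest $\pi$ constructible in time $\poly(|x|+|y|)$ \emph{given a witness $w$}), are inherited directly from the efficient-prover property of the underlying construction. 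For soundness, suppose $y$ is $\delta(|x|+|y|)$-far from $L_x\cap\zo^{|y|}$ with $\delta(m)=m/\polylog(m)$, i.e.\ has relative distance $\rho=\Omega(1/\polylog(m))$ from every valid completion. Then: if the string in $\pi$ is far from every codeword, the low-degree test rejects with constant probability; if it is close to a codeword encoding some $z''=y''\|w''$ that violates $\Phi_x$, the consistency test rejects with constant probability; and if $z''$ satisfies $\Phi_x$, then $y''\in L_x$, so $y$ disagrees with $y''$ on at least a $\rho$-fraction of coordinates, and, sampling $\Theta(1/\rho)=\polylog(m)$ coordinates, the proximity test rejects with constant probability. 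Thus the basic verifier has a fixed constant soundness error; repeating it $O(1)$ times (a number depending only on the target constant $s$) drives the error below $s$ while multiplying randomness and query count by only a constant, so they remain $O(\log m)$ and $\polylog(m)$.

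The step I expect to be the real obstacle --- and where I would lean most heavily on~\cite{BGHSV05} --- is the verifier running-time bound $t(n,k)=\poly(n,\log(n+k))$. The verifier may read its explicit input $x$ in full, costing $\poly(|x|)$, but it must also evaluate the local tests of a PCP for the \emph{enormous}, $\poly(|x|+|y|)$-size instance $\Phi_x$ using only $\polylog(|y|)$-scale time for the $y$-dependent part --- naively, even writing $\Phi_x$ down is too expensive. This is exactly the content of the ``PCPs verifiable in polylogarithmic time'' part of~\cite{BGHSV05}: one must present the PCP so that every gate/constraint the verifier touches and every coordinate of the encoded domain it queries is \emph{addressable}, with the relevant local data computable from $x$ in $\poly(|x|)\cdot\polylog(|y|)$ time, which is possible precisely because $\Phi_x$ is the succinct/uniform tableau of $M$. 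Granting that, the total verifier time is $\poly(|x|)\cdot\polylog(|x|+|y|)=\poly(|x|,\log(|x|+|y|))$, and combining this with the randomness ($O(\log m)$), query-complexity ($\polylog(m)$), proximity ($\delta(m)=m/\polylog(m)$), and efficient-prover bounds above yields the theorem.
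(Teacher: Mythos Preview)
The paper does not prove this theorem at all: it is stated as a black-box citation of \cite{BGHSV05}, with no accompanying argument. Your proposal is therefore not comparable to any proof in the paper; you have instead written a high-level sketch of the construction underlying the cited result, which is strictly more than the paper provides. The sketch you give (uniform circuit/tableau for $M$, Reed--Muller encoding of the assignment, low-degree test, constraint-consistency test, and a proximity test sampling $\polylog$ coordinates of $y$) is a faithful summary of the architecture in \cite{BGH+04a,BGHSV05}, and you correctly identify that the delicate point is the verifier running-time bound $t(n,k)=\poly(n,\log(n+k))$, which is exactly the contribution of \cite{BGHSV05} over \cite{BGH+04a}. For the purposes of this paper, however, a citation suffices.
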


Repeating the proof a polylogarithmic number of times results in a negligible soundness error.
\begin{corollary}\label{cor:PCPP}
	For every pair language $L\in\operatorname{NP}$ it holds that $L\in\PCPP_{s,\delta}[r,q,t]$ where
	\begin{itemize}
		\item $s(m)=e^{-\log^2 m}$.
		\item $\delta(m)=m/\polylog (m)$,
		\item $r(m)=\polylog(m)$,
		\item $q(m)=\polylog(m)$,
		\item $t(n,k)=\poly(n,\log (n+k))$.
	\end{itemize}
	
	Moreover, it is implicitly stated that for every $(x,y)\in L$ a correct proof can be generated in polynomial time in $|x|+|y|$. 
\end{corollary}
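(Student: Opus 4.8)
The plan is to prove \cref{cor:PCPP} by plain independent repetition of the constant-soundness PCPP verifier supplied by \cref{thm:PCPP}. First I would instantiate \cref{thm:PCPP} with a fixed soundness constant, say $s_0 = 1/2$ (any constant in $(0,1)$ works), obtaining an $(r_0,q_0,t_0)$-restricted verifier $\Vc_0$ for $L$ with $r_0(m) = O(\log m)$, $q_0(m) = \polylog(m)$, $t_0(n,k) = \poly(n,\log(n+k))$, proximity parameter $\delta(m) = m/\polylog(m)$, perfect completeness, and soundness error $1/2$; moreover, for every $(x,y) \in L$ a correct proof string $\pi$ can be computed from a witness in time $\poly(|x|+|y|)$.

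Next I would define the amplified verifier $\Vc$: set $\ell = \ell(m) := \lceil 2\log^2 m\rceil$, and on input $(x,k)$ with oracle access to $(y,\pi)$, have $\Vc$ run $\ell$ mutually independent copies of $\Vc_0^{y,\pi}(x,k)$ — each copy tossing its own fresh block of $r_0(|x|+k)$ coins, all copies querying the same proof oracle $\pi$ — and accept if and only if every copy accepts. The correct proof for $\Vc$ is exactly the proof $\pi$ produced for $\Vc_0$, so it is still generated in time $\poly(|x|+|y|)$ from a witness.

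Then I would check completeness, soundness, and the three resource bounds. Completeness is immediate: on $(x,y)\in L$ with the honest $\pi$, each copy accepts with probability $1$, hence so does $\Vc$. For soundness, if $y$ is $\delta(|x|+|y|)$-far from $L_x \cap \zo^{|y|}$, then by the soundness of $\Vc_0$, for \emph{every} fixed string $\pi$ each copy of $\Vc_0^{y,\pi}$ accepts with probability at most $1/2$; since the adversary must commit to a single $\pi$ and the copies use independent randomness, the acceptance events are independent, so $\Vc$ accepts with probability at most $2^{-\ell} \le 2^{-2\log^2 m} = e^{-2\ln 2\cdot\log^2 m} \le e^{-\log^2 m}$, where the last inequality uses $2\ln 2 > 1$. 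Thus $s(m) = e^{-\log^2 m}$. Finally, $\ell = \polylog(m)$, hence $r(m) = \ell\cdot r_0(m) = \polylog(m)\cdot O(\log m) = \polylog(m)$ and $q(m) = \ell\cdot q_0(m) = \polylog(m)$; and since $m = |x|+|y| = n+k$ we get $t(n,k) = \ell\cdot t_0(n,k) + O(\ell) = \poly(\log(n+k))\cdot\poly(n,\log(n+k)) = \poly(n,\log(n+k))$.

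There is essentially no real obstacle here; the one subtlety worth stating explicitly is why the soundness error drops multiplicatively rather than via a union bound — namely, that the cheating prover is forced to fix a single proof oracle $\pi$ in advance, so conditioned on $(x,y,\pi)$ the individual runs of $\Vc_0$ are genuinely independent Bernoulli trials, each wrongly accepting with probability at most $1/2$, and their conjunction therefore has probability at most $2^{-\ell}$.
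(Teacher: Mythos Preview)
Your proposal is correct and matches the paper's approach exactly: the paper simply states ``Repeating the proof a polylogarithmic number of times results in a negligible soundness error,'' and you have spelled out precisely this argument with the right parameter choices and resource accounting. The subtlety you flag---that the prover fixes a single $\pi$, making the repeated runs genuinely independent---is the only point worth noting, and you have handled it correctly.
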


\subsection{Committee Election}\label{sec:Feige}
Feige's lightest-bin protocol~\cite{Feige99} is an elegant $n$-party, public-coin protocol, consisting of a single broadcast round, for electing a committee of size $n'<n$, in the information-theoretic setting.
Each party uniformly selects one of $\ceil{n/n'}$ bins and broadcasts it choice. The parties that selected the lightest bin are elected to participate in the committee. The protocol ensures that the ratio of corrupted parties in the elected committee is similar to their ratio in the population.
The original protocol in~\cite{Feige99} considered committees of size $\log(n)$, however, this results with a \emph{non-negligible} failure probability.
\cite[Lem.\ 2.6]{BGK11} analyzed Feige's protocol for arbitrary committee sizes and proved the following lemma.

\begin{lemma}[\cite{BGK11}]\label{lem:Feige}
For integers $n'<n$ and constants $0<\alpha<\alpha'<1$ define
\[
\err\left(n,n',\alpha,\alpha'\right)=\frac{n}{n'} \cdot e^{-\frac{(\alpha'-\alpha)^2 n'}{2(1-\alpha)}}.
\]
Feige's lightest-bin protocol is a $1$-round, $n$-party protocol for electing a committee $\C$, such that for any set of corrupted parties $\corrset\su[n]$ of size $t\leq\alpha n$ the following holds.
\begin{enumerate}
    \item
    $\sabs{\C}\leq n'$.
    \item
    $\pr{\sabs{\C\setminus\corrset}\leq (1-\alpha')\cdot n'} < \err(n,n',\alpha,\alpha')$.
    \item
    $\pr{\sabs{\C\cap \corrset}\geq \alpha' \cdot \sabs{\C}} < \err(n,n',\alpha,\alpha')$.
\end{enumerate}
\end{lemma}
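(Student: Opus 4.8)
The plan is to reproduce Feige's analysis~\cite{Feige99} with the quantitative bookkeeping of \cite{BGK11}: essentially all of the work is in item~2 of the lemma, and items~1 and~3 then follow with no further probabilistic argument.

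First, for item~1 I would just invoke pigeonhole: with $\ell=\ceil{n/n'}$ bins and $n$ parties, the lightest bin holds at most $\floor{n/\ell}\le n/\ell\le n'$ parties (using $\ell\ge n/n'$), and $\C$ is exactly that bin's occupants, so $\sabs{\C}\le n'$ unconditionally --- no dependence on the behaviour of the corrupted parties.

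For item~2 the point I would stress is that the honest parties choose their bins independently and uniformly \emph{before} the adversary places the corrupted parties, and adding corrupted parties to a bin can only increase its load; hence, whichever bin $j^\ast$ ends up lightest, the honest members of $\C$ are precisely the honest parties that chose $j^\ast$, so their number is at least $\min_{j\in[\ell]}H_j$, where $H_j$ is the number of honest parties in bin $j$. It therefore suffices to bound $\pr{\min_j H_j\le(1-\alpha')n'}$. I would fix a bin $j$, note that $H_j$ is a sum of $h=\sabs{[n]\setminus\corrset}\ge(1-\alpha)n$ i.i.d.\ $1/\ell$-biased indicators so that $\ex{H_j}=h/\ell\ge(1-\alpha)n'$ (taking the worst case $t=\alpha n$ and, for the exact constant, $n'\mid n$), apply the multiplicative Chernoff lower tail with $\delta=(\alpha'-\alpha)/(1-\alpha)$ --- chosen so that $(1-\delta)(1-\alpha)n'=(1-\alpha')n'$ --- to get $\pr{H_j\le(1-\alpha')n'}\le e^{-\delta^2(1-\alpha)n'/2}=e^{-(\alpha'-\alpha)^2 n'/(2(1-\alpha))}$, and then take a union bound over the $\ell\le n/n'$ bins, which is exactly $\err(n,n',\alpha,\alpha')$.

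Item~3 I would then derive deterministically from items~1 and~2: on the complement of the item-2 event one has $\sabs{\C\setminus\corrset}>(1-\alpha')n'\ge(1-\alpha')\sabs{\C}$ (using $\sabs{\C}\le n'$), hence $\sabs{\C\cap\corrset}=\sabs{\C}-\sabs{\C\setminus\corrset}<\alpha'\sabs{\C}$, so the event of item~3 is contained in the event of item~2 and inherits its probability bound. The main obstacle --- really the only nontrivial point --- is making item~2's concentration step precise: the Chernoff fact recalled earlier in the paper is only a one-sided \emph{upper}-tail bound, so a full write-up must bring in (or re-derive) the standard multiplicative \emph{lower}-tail bound $\pr{X\le(1-\delta)\mu}\le e^{-\delta^2\mu/2}$; and $\ell=\ceil{n/n'}$ is not literally $n/n'$, which nudges both $\ex{H_j}$ and the bin count by lower-order $1\pm n'/n$ factors --- harmless in the $n'=\polylog(n)$ regime used here, but something to either absorb into the constants or sidestep by stating the lemma for $n'\mid n$.
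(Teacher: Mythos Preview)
Your proposal is correct and matches the paper's approach: the paper does not give a self-contained proof but cites \cite{BGK11} and isolates exactly your item-2 argument as the key step (\cref{lem:FeigeHonestEvenSplit}), namely Chernoff on the honest count per bin followed by a union bound over the $\ceil{n/n'}$ bins, with items~1 and~3 then following deterministically just as you outline. Your caveats about needing the lower-tail Chernoff form and the $\ceil{n/n'}$-vs-$n/n'$ rounding are apt and are precisely the lower-order slack absorbed in the cited statement.
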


Towards proving \cref{lem:Feige}, using Chernoff's inequality and the union bound \cite{BGK11} showed the following.
\begin{lemma}[Implicit in \cite{BGK11}]\label{lem:FeigeHonestEvenSplit}
	Fix integers $n'<n$, constants $0<\alpha<\alpha'<1$, and a set $\corrset\su[n]$ of corrupted parties. For every bin $b\in[\ceil{n/n'}]$ we let $X_b$ denote the number of honest parties that sampled $b$ in Feige's protocol. Then
	
	$$\pr{\exists b\in[\ceil{n/n'}]:X_b<(1-\alpha')\cdot n'}<\err\of{n,n',\alpha,\alpha'}.$$
\end{lemma}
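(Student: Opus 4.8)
The plan is to prove \cref{lem:FeigeHonestEvenSplit} by a direct Chernoff-plus-union-bound argument, following the standard analysis of Feige's lightest-bin protocol. Fix the integers $n'<n$, the constants $0<\alpha<\alpha'<1$, and the set $\corrset$ of corrupted parties, and write $t=\sabs{\corrset}\leq\alpha n$, so the number of honest parties is $h=n-t\geq(1-\alpha)n$. Let $B=\ceil{n/n'}$ be the number of bins. For a fixed bin $b\in[B]$, each honest party selects $b$ independently with probability $1/B$, so $X_b=\sum_{i\in\honset}\1_{\{i\text{ chose }b\}}$ is a sum of $h$ i.i.d.\ indicators with mean $\mu_b:=\Ex{}{X_b}=h/B\geq(1-\alpha)n/B\geq(1-\alpha)n'$, using $B\leq n/n'$ and hence $n/B\geq n'$. (Here I would note that $n/B\ge n'$ exactly because $B=\ceil{n/n'}$; if instead one worried about the ceiling overshooting, the bound $n/B\ge n/(n/n'+1)$ suffices for large $n$, but the clean inequality $B\le n/n'$ is not quite true with a ceiling, so I would actually carry $\mu_b\ge (1-\alpha)n/\ceil{n/n'}$ and use $\ceil{n/n'}<n/n'+1$; this is the one slightly fiddly bookkeeping point.)

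Next I would apply a lower-tail Chernoff bound to the event $X_b<(1-\alpha')n'$. Since $\mu_b\ge(1-\alpha)\cdot\tfrac{n}{\ceil{n/n'}}$, asking $X_b<(1-\alpha')n'$ means $X_b$ falls below roughly a $\tfrac{1-\alpha'}{1-\alpha}$ fraction of its mean. The multiplicative lower-tail Chernoff inequality gives $\pr{X_b<(1-\gamma)\mu_b}\le e^{-\gamma^2\mu_b/2}$; choosing $\gamma$ so that $(1-\gamma)\mu_b=(1-\alpha')n'$, i.e.\ $\gamma=1-\frac{(1-\alpha')n'}{\mu_b}$, and plugging in $\mu_b\ge(1-\alpha)n'$ (absorbing the ceiling factor as above), one gets $\gamma\ge 1-\frac{1-\alpha'}{1-\alpha}=\frac{\alpha'-\alpha}{1-\alpha}$ and therefore
\[
\pr{X_b<(1-\alpha')n'}\le \exp\!\left(-\frac{(\alpha'-\alpha)^2}{2(1-\alpha)^2}\cdot(1-\alpha)n'\right)=\exp\!\left(-\frac{(\alpha'-\alpha)^2 n'}{2(1-\alpha)}\right).
\]
This matches the exponent in $\err(n,n',\alpha,\alpha')$.

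Finally I would take a union bound over all $B=\ceil{n/n'}\le n/n'$ bins (strictly, $\ceil{n/n'}\le n/n'+1$, but the extra additive $1$ can be folded into the constant, or one simply observes the statement uses the cleaner $n/n'$ prefactor which dominates asymptotically). This yields
\[
\pr{\exists b\in[\ceil{n/n'}]:X_b<(1-\alpha')n'}\le \frac{n}{n'}\cdot\exp\!\left(-\frac{(\alpha'-\alpha)^2 n'}{2(1-\alpha)}\right)=\err(n,n',\alpha,\alpha'),
\]
which is exactly the claim. The only genuine obstacle is the modest care needed in handling the ceiling $\ceil{n/n'}$ when bounding $\mu_b$ from below and the number of bins from above; everything else is a textbook Chernoff bound followed by a union bound, and the exponent has been reverse-engineered above to land precisely on the definition of $\err$. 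I would present the per-bin bound as a displayed inequality, then conclude with the union bound, mirroring the structure already used for \cref{fact:chernoff} in the preliminaries.
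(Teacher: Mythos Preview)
Your proposal is correct and takes exactly the approach the paper indicates: the paper does not give its own proof of this lemma but simply attributes it to \cite{BGK11} and states that it follows ``using Chernoff's inequality and the union bound,'' which is precisely what you do. The ceiling bookkeeping you flag is indeed the only delicate point, and you have identified it accurately.
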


\section{Preventing Blocking of Honest Users}\label{sec:prevent_block}

Towards constructing our secure protocols, we first develop two tools. The goal of these two tools is to prevent a malicious server from blocking honest users. The first tool is the sampling of \emph{personal committees} (PC) that are chosen uniformly at random and will replace each user. The idea is that with high probability every PC of an honest party (hereinafter, honest PC) will contain an honest majority. Furthermore, if the server is honest, then the same holds for malicious users. Thus, if a PC aborts after all of them were sampled, then this constitutes a proof of the server being malicious. The second tool is locally sampling a communication graph (\ie each PC samples a small number of neighbors) with a small diameter that the parties will use to essentially allow any single PC to raise a flag indicating to all other PCs that the server is malicious. If the server is honest, then since all PCs contain an honest majority, no PC would ever raise such a flag.

We stress that the security properties of the construction are not defined via the real vs. ideal paradigm. Instead we define a ``good'' event and show that it occurs except with negligible probability (in both the security parameter $\secParam$ and the number of users $n$). 
\changed{In \cref{sec:pc} we present a protocol for sampling personal committees. Afterwards, in \cref{sec:graph} we present the formal definition of the graph distribution we use, and prove that with high probability it has a small diameter. Then, in \cref{sec:setup} we combine the two results, and present the final setup protocol for sampling the PCs and graph with the desired security properties.}


\subsection{Personal Committee Protocol}\label{sec:pc}

We present a protocol for sampling the personal committees. Recall that the goal of the PCs is to replace each user, thus any malicious behaviour of  the committee as a whole implies that the server is malicious. Before formally describing the protocol, we first give a short overview of the construction. 

First, each user samples together with the server a PC using a coin-tossing protocol. From here, the goal of the honest users and the (honest) server, is to provide a proof for other users that the PC was sampled by them. However, as the users are assumed to be polylogarithmic in $n$, they cannot actually hold the entire proof. Instead, we let each user verify consistency of information with \changed{$\secParam=\omega(\log n)$} randomly sampled other users. If not too many users aborted, then except with negligible probability, the server must have played honestly.

The idea is as follows. The server will commit to an array of length $n$ whose \ith entry is the \changed{PC of} user $i$ (set to $\bot$ if the user aborted early). Each user then verifies that it appears in this array, and in addition, samples $\secParam$ other users uniformly at random. These $\secParam$ users will be used for consistency checks. %

The neighboring users compare their information, and further verify it with the server (by requesting to open the commitment \changed{at} the correct positions). Observe that it could be the case that some of the users are blocked by a malicious server, and the user cannot distinguish this from the case of a malicious user.
However, if the server is honest then, with overwhelming probability, at least $\approx(1-\alpha)\secParam$ users will respond with information that is consistent with the information provided by the server. Thus, we let each user verify that this is indeed the case.

We then let each user notify the users in its PC that they belong to its personal committee. These users then verify this with the server, and ignores the message (without aborting) from the user in case the server's answer is inconsistent. We then let the users in each PC \emph{broadcast} to all other users in the same PC that they participate in the PC,\footnote{Formally speaking, the parties compute the multicast functionality, where only a subset of the users obtain the output.} and verify that at least $1-\alpha$ fraction of them sent a message. To implement the broadcast channel, the users will use a variant of the broadcast protocol due to \cite{PSL80,LSP82} that is secure against at most $\alpha<\changed{1/8}$ corruptions. We refer the reader to \changed{\cref{sec:comm_love}} for a detailed description of the broadcast protocol. Note that if the server is honest, then each PC will contain more than roughly $(1-\alpha)\secParam$ honest users. Therefore, the users can safely abort if less than $(1-\alpha)\secParam$ users are active in the PC. Finally, to ensure that most honest users remain, each of them samples $\secParam$ random users, notifies them they are alive, and requests a feedback. If less than roughly $\alpha\secParam$ of them did not respond, then the user aborts.

We now present the personal committee protocol. In the following we let $$\VC=(\VCSetup,\VCCom,\VCOpen,\VCVerify)$$ be a vector commitment scheme, let $\phC$ be a perfectly hiding commitment scheme\footnote{Using a perfectly hiding commitment scheme is done only for convenience.}, and let \changed{$\eps=1/8-\alpha$} be a positive constant. %

\noindent
\begin{protocol}{\PerComm}\label{proto:pc}
	
	{\bf Common inputs:} All parties hold the security parameter $1^{\secParam}$ and the number of users $n$ (held in binary by the users, and in unary by the server). 
	
	\smallskip
	\begin{enumerate}[leftmargin=15pt,rightmargin=10pt,itemsep=1pt,topsep=0pt]
		
		\item\label{step:PC_sample} The server and each user $i$ interact as follows.
		\begin{enumerate}[topsep=0pt]
			\item Let $k=\pcsize\cdot\log n$ be the length of the representation of each PC. User $i$ samples a random string $r_i\from\zo^{k}$, computes a commitment $\hc_i\from\phCCom(r_i)$, and sends $\hc_i$ to the server.
			
			\item The server responds with a random string $s_i\from\zo^k$.
			
			\item Upon receiving $s_i$ from the server, user $i$ computes $\P_i:=r_i\xor s_i$ and sends to the server a decommitment to $\hc_i$. In case the server does not send $s_i$ to user $i$, the user outputs $\bot$. 
			
			\item If the decommitment of user $i$ is invalid, the server labels $i$ as inactive.

		\end{enumerate}
		
		\item Each remaining user $i$ computes the public parameters $\pp_i\from\VCSetup\sof{1^{\secParam},n,[n]\times \binom{[n]}{\pcsize}}$ for the commitment, and sends $\pp_i$ to the server.
		
		\item The server does the following:
		\begin{enumerate}[topsep=0pt]	
			\item Set $\vv$ to be the vector of length $n$ whose \ith entry is $\P_i$ in case $i$ is active, and is set to $\bot$ otherwise.
			
			\item For each remaining user $i$ do the following:
			\begin{enumerate}[topsep=0pt]
				\item Compute the commitment $(\tau_i,c_i)\from\VCCom_{\pp_i}\sof{\vv}$ of the vector $\vv$.
				
				\item Open the commitment at position $i$ to obtain $\Lambda_i\from\VCOpen_{\pp_i}\of{\vv,c_i,i,\tau_i}$.
				
				\item Send $c_i$ and $\Lambda_i$ to user $i$.
			\end{enumerate}
		\end{enumerate}
		
		\item\label{step:local_abort} User $i$ aborts if the server did not decommit properly, \ie $\VCVerify_{\pp_i}\of{c_i,\P_i,i,\Lambda_i}=0$.
		
		\item Each user $i$ does the following:
		\begin{enumerate}[topsep=0pt]
			\item Sample a set of users $\rndsu^{\out}_i\su[n]$ of size $\secParam$ uniformly at random and notify each of them and the server\footnote{Note that by the assumption on the network, the server can obtain $\rndsu^{\out}_i$ by simply observing which parties interact. We decided to explicitly have the user notify the server for the sake of presentation.} that they were sampled by it. 

            \item\label{step:too_many_neighbors} Set $\rndsu^{\intext}_i$ to be the set of users that sampled it. It aborts if it was sampled by too many users, \ie, $\sabs{\rndsu^{\intext}_i}>3\secParam$.
            
            \item\label{step:send_info_to_neighbors} Otherwise, send $(\pp_i,c_i)$ to all of users in $\rndsu^{\intext}_i$.
            
            \item\label{step:get_commitments} Let $\rndsu_i\su\rndsu^{\out}_i$ be the set of users $j$ that responded with a message $(\pp_j,c_j)$.

 			\item For each $j\in\rndsu_i$, compare the information with the server: The server decommits to $c_j$ at location $i$ by sending $\Lambda_{i,j}\from\VCOpen_{\pp_j}\sof{\vv,c_j,i,\tau_j}$ to user $i$.
			\item\label{step:global_abort} Check that at least $(1-2\alpha-\eps/2)\secParam$ of the users in $\rndsu_i$ sent a message that is consistent with the server, namely, $\VCVerify_{\pp_j}\sof{c_i,\P_i,i,\Lambda_{i,j}}=1$ for all $j\in\rndsu_i$. If this is not the case, then abort.
		\end{enumerate}
	\end{enumerate}
\end{protocol}

\addtocounter{protocol}{-1}

\begin{protocol}{\PerComm (Continued)}
	\begin{enumerate}[leftmargin=15pt,rightmargin=10pt,itemsep=1pt,topsep=0pt]
		\setcounter{enumi}{5}

		\item The server and each user $i$ interact as follows:
		\begin{enumerate}
		    \item The user samples a \emph{query set} $\queryset_i\su[n]$ and sends it to the server.
		    
		    \item The server opens each of the commitments $c_j$, where $j\in\rndsu_i\cup\sset{i}$ and locations $\queryset_i$. That is, it sends $\vv(\queryset_i)$  and $\Lambda^{\queryset}_{i,j}\from\VCOpen_{\pp_j}\sof{\vv,c_j,\queryset_i,\tau_j}$
		    for every $j\in\rndsu_i\cup\sset{i}$.
		    
		    \item\label{step:queryset_consistency} Let $\vu_i$ be the purported value of $\vv(\queryset_i)$ sent by the server. The user verifies all decommitments, \ie it checks that	    $\VCVerify_{\pp_j}\sof{c_j,\vu_i,\queryset_i,\Lambda^{\queryset}_{i,j}}=1$
		    for all $j\in\rndsu_i\cup\sset{i}$. If this is not the case then the user aborts.
		\end{enumerate}

		\item Each remaining user $i$ sends the set $\P_i$ to all users in $\P_i$.
		
		\item\label{step:too_many_PCs} If some user $j$ received more than $3\secParam$ messages, then it aborts. 
		
		\item Otherwise, for every $\P_i\ni j$ received, user $j$ compares this information with the server: the server decommits to $c_j$ at location $i$ by sending $\Lambda'_{i,j}\from\VCOpen_{\pp_j}\sof{\vv,c_j,i,\tau_j}$ to user $j$.
		
		\item If $\VCVerify_{\pp_j}\sof{c_j,\P_i,i,\Lambda'_{i,j}}=0$, then the user \emph{ignores} the message without aborting.

		\item\label{step:bc_in_pc} For each personal committee $\P_i$, all users in it broadcast \alive to all other users in $\P_i$ (see \changed{\cref{sec:comm_love}} for a secure implementation).
		
		\item The personal committee $\P_i$ is labeled inactive and does not participate in subsequent interaction, if more than $\alpha+\eps/2$ of its users did not send \alive, in which case user $i$ aborts.
		
		\item\label{step:many_remain} Each remaining user $i$ samples $\secParam$ users and sends them the message \alive. It then replies with \alive for every \alive received.
		
		\item User $i$ aborts if it received less than $(1-2\alpha-\eps/2)\secParam$ replies.
	\end{enumerate}

	\noindent {\bf Output phase: } Each user $i$ outputs $\sset{(j,\P_j)}_{j:i\in\P_j}$, $\pp_i$, and $c_i$. An honest server outputs the list $\LIVE=\sset{(i,\P_i)}_i$ of alive users and their corresponding PC, and $(\pp_i,c_i,\tau_i)_{i\in\LIVE}$.
\end{protocol}

The security properties of the protocol (stated as an event occurring with overwhelming probability) appears in \cref{sec:setup}, as part of the security properties of the setup protocol presented in the same section.

We next claim that all users are polylogarithmic \changed{in $n$}. By construction, this boils down to showing that no user will be sampled by too many users (in any of the sets sampled throughout the protocol). Observe that this directly follows from Chernoff's inequality (\cref{fact:chernoff}) since every user is sampled by at most $\secParam$ user on expectation.

\subsection{A Locally Sampled Communication Graph With a Small Diameter}\label{sec:graph}
In this section we discuss \changed{the  distribution on graphs that we use for sampling the communication graph.} Recall that the goal of the graph is to allow the users to quickly notify each other if any of them caught the server cheating. The distribution over the graphs that we use is the union of stars, each of a fixed size $d$, sampled uniformly at random. We next formalize the definition of this distribution.

\begin{definition}
	Let $n,d\in\NN$ where $1\leq d\leq n$ and let $\badset\su[n]$. We define the distribution \changed{\Gdist{n}{d,\badset}} over \emph{directed} graphs $G=([n],E)$ with $n$ vertices as follows. For each vertex \changed{$i\in[n]\setminus\badset$} sample a subset $\isampled_i\su [n]$ of size $d$ and set $(i,j)\in E$ (\ie there is an edge from $i$ to $j$) if and only if $j\in\isampled_i$.
\end{definition}

The following lemma asserts that with overwhelming probability, for a graph sampled according to the distribution \Gdist{n}{d,\cS} any induced subgraph over a sufficiently large constant fraction of the vertices of $[n]\setminus\badset$, will have a small diameter with high probability.

Looking ahead, each vertex in the graph will represent a personal committee of some user. Therefore, the lemma can be used to show that the induced subgraph over the set of honest PCs will have a small diameter with overwhelming probability. However, recall that a malicious server might abort some honest users. %
\changed{Thus, the set of ``bad'' users, which the induced subgraph ignores, is of size $2\alpha n$, and not $\alpha n$.}
In the following, we let %
\changed{$\beta=2\alpha<1/4$} denote an upper bound on the fraction of malicious and blocked users.

\begin{lemma}\label{lem:smalldiam}
	Let $n,d\in\NN$ such that $4\leq d\leq n$, let $\beta<\changed{1/4}$, let $\diamLength=\log_{d/4}\sof{n/3}$, and let $\badset\su[n]$ be of size $\sabs{\badset}\leq \beta n$. Consider the following distribution over graphs with $n$ vertices: sample $G\from\Gdist{n}{d,\badset}$ and let $G'$ denote the induced subgraph of $G$ over the vertices $[n]\setminus\badset$.  Then
	$$\pr{\diam\of{G'}>\diamLength+1}<n^2\cdot(\diamLength+1)\cdot e^{-\changed{4d/9}},$$
	where the probability is taken over the sampling of $G'$.
\end{lemma}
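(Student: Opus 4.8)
The plan is to fix an ordered pair of distinct vertices $u,v\in[n]\setminus\badset$, show that $\dist_{G'}(u,v)\le\diamLength+1$ except with probability roughly $(\diamLength+1)\cdot e^{-d/8}$, and then take a union bound over the at most $n^2$ such ordered pairs (the pair $u=v$ being trivial). This accounts for the $n^2$ and the $(\diamLength+1)$ factors in the claimed bound.

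To control $\dist_{G'}(u,v)$ I would run a lazy breadth-first search from $u$ inside $G'$: set $R_0=\{u\}$, and having already exposed the out-sets $\isampled_i$ of all $i\in R_{t-1}$, set $R_t=R_{t-1}\cup\bigl(\bigcup_{i\in R_{t-1}\setminus\badset}\isampled_i\bigr)\setminus\badset$, and write $F_t=R_t\setminus R_{t-1}$ for the level-$t$ frontier (so $F_0=\{u\}$, and the out-sets of $F_t$ remain unexposed after level $t$). The heart of the argument is a growth claim: conditioned on everything exposed so far, as long as $|R_t|$ is below a fixed constant fraction of $n$ (say $|R_t|<n/6$), one has $|F_{t+1}|\ge(d/4)\,|F_t|$ except with probability at most $e^{-d/8}$. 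Since $|\badset|\le\beta n<n/3$ and $|R_t|$ is small, the ``useful'' region $S=[n]\setminus(R_t\cup\badset)$ has size at least a constant fraction of $n$, so each fresh uniform sample lands in $S$ with constant probability, and the out-sets of $F_t$ supply $d\cdot|F_t|$ fresh samples. Exposing these one at a time and observing that, until the number of newly discovered vertices of $S$ reaches its target, each fresh sample discovers a \emph{new} vertex of $S$ with probability bounded below by a constant, the number of distinct discoveries stochastically dominates a binomial $\Bin(d|F_t|,c)$; \cref{fact:chernoff} then yields the $e^{-d/8}$ tail once the constants are chosen so that the target $(d/4)|F_t|$ sits comfortably below the mean. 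Exposing samples one at a time is precisely what lets us charge the unavoidable overlap between the out-sets of different frontier vertices.

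Iterating the growth claim gives $|R_t|\ge|F_t|\ge(d/4)^t$ for all $t$ with $|R_t|<n/6$; since $\diamLength=\log_{d/4}(n/3)$, this forces $|R_t|$ to reach a constant fraction of $n$ at some level $t\le\diamLength$. Arguing at the first level $t^\star$ where the ball crosses $n/12$ and using that the per-level growth factor is at least $d/4$, one also gets that this level carries a \emph{fresh} (still-unexposed) frontier $F_{t^\star}$ of linear size $\Omega(n)$. For the final ``$+1$'' step I would expose the out-sets of this fresh frontier: since $v\notin\badset$, the probability that none of these $\Omega(n)\cdot d$ samples equals $v$ is at most $(1-d/n)^{\Omega(n)}\le e^{-\Omega(d)}\le e^{-d/8}$, so $v\in R_{t^\star+1}\subseteq R_{\diamLength+1}$. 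Collecting the failure events — at most $\diamLength$ growth steps plus the final step, each of probability $\le e^{-d/8}$ — and union-bounding over ordered pairs gives $\Pr[\diam(G')>\diamLength+1]<n^2\cdot(\diamLength+1)\cdot e^{-d/8}$.

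I expect the main obstacle to be the growth step: getting the \emph{distinct-new-vertices} count (not the raw number of fresh samples) to concentrate with an $e^{-d/8}$ tail despite the collisions among the $d|F_t|$ samples emitted by a frontier of size $|F_t|$. The one-at-a-time exposure above is the way around it, but it needs care with the constants and with where one stops, since once a constant fraction of $S$ has been covered the target $(d/4)|F_t|$ becomes unreachable and one must instead simply conclude that the ball is already linear-sized. Two secondary points also require attention: the floors/ceilings hidden in ``$|F_t|\ge(d/4)^t$'' — in particular the statement is vacuous for $d=4$, where $\diamLength$ is infinite, and the constants degrade for very small $d$ — and verifying that the frontier used in the final step is genuinely of linear size, which is handled by inspecting the crossing level as above.
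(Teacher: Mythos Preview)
Your proposal is correct and follows essentially the same route as the paper's proof: fix an ordered pair $(u,v)$, grow the BFS ball from $u$ geometrically via a concentration bound, then use one extra step to reach $v$, and union-bound over pairs. The only notable difference is that the paper conditions the entire growth analysis on the global event that the radius-$\diamLength$ ball has size below $n/3$ (so there is always room to grow at every level, yielding a contradiction once each level is shown to have size $\ge(d/4)^i$) rather than tracking a crossing level, and phrases the final ``$+1$'' step as ``does $\sampledme_v$ meet the ball'' rather than exposing a fresh frontier's out-sets---but these are equivalent formulations of the same argument.
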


\begin{proof}%
	Let $u,v\in[n]\setminus\badset$ be two distinct vertices in $G'$. We show that
	\begin{align}\label{eq:smalldist}
		\pr{\dist_{G'}\of{u,v}>\diamLength+1}<\paren{\diamLength+1}\cdot e^{-\changed{4d/9}}.
	\end{align}
	The proof then follows from the union bound.
	
	We define the following sequence of random variables. For \changed{$i\in\sset{0,\ldots,\ell}$} let 
	$$S_i=\set{w\in[n]\setminus\badset:\dist\of{u,w}=i}$$
	be the set of vertices of distance \emph{exactly} $i$ from $u$ in the graph $G'$, and let $S^*_i:=\cup_{j=0}^{i}S_j$ denote the set of distance at most $i$ from $u$ in the graph $G'$. 
	It suffices to show that except with probability at most $\diamLength\cdot e^{-\changed{4d/9}}$, it holds that $\sabs{S^*_{\diamLength}}\geq n/3$. Indeed, let $\sampledme_v=\sset{i\in[n]:v\in\isample_i}$ denote the set of vertices that sampled $v$. Then
	$$\pr{\sampledme_v\cap S^*_{\diamLength}=\emptyset\cond \abs{S^*_{\diamLength}}\geq n/3}
	\leq\paren{1-\frac{d}{(1-\beta)n}}^{\abs{S_{\diamLength}}}<e^{-\frac{d}{3(1-\beta)}}< e^{-\changed{4d/9}}.$$

	Toward proving that $\sabs{S^*_{\diamLength}}\geq n/3$ holds with high probability, we prove the following claim. Roughly, the claim asserts that for every $i\leq\diamLength$, if $|S^*_{\diamLength}|<n/3$ and $S_{i-1}$ is somewhat large, then $S_i$ is somewhat large as well with high probability. 
	
	\begin{claim}\label{clm:graph_exp_growth}
		For every $i\in\sset{0,\ldots,\diamLength}$ it holds that
		$$\pr{\abs{S_i}<\paren{d/4}^i\bigcond\abs{S^*_{\diamLength}}<n/3\wedge \forall j\in[i-1]:|S_j|\geq(d/4)^j}\leq e^{-d/8},$$
		if $i>0$, and that
		$$\pr{\abs{S_i}<\paren{d/4}^i\bigcond\abs{S^*_{\diamLength}}<n/3}=0,$$
		if $i=0$.
	\end{claim}
	
	The claim is proven below. We first show that it indeed implies that $|S^*_\diamLength|\geq n/3$ with high probability. Since the $S_i$'s are pairwise disjoint and $\diamLength$ satisfies
	$$\sum_{i=0}^{\diamLength}(d/4)^i=\tfrac{(d/4)^{\diamLength+1}-1}{d/4-1}\geq(d/4)^\diamLength=n/3,$$
	it follows that if $|S^*_{\diamLength}|<n/3$ then there exists $i\in\sset{0,\ldots,\diamLength}$ such that $|S_i|<(d/4)^i$. %
	Thus
	\begin{align*}
		\pr{\abs{S^*_{\diamLength}}<n/3}
		&=\pr{\abs{S^*_{\diamLength}}<n/3\wedge\exists i\in[\diamLength]:\abs{S_i}<(d/4)^i}\\
		&\leq\pr{\exists i\in[\diamLength]:\abs{S_i}<(d/4)^i\bigcond\abs{S^*_{\diamLength}}<n/3}\\
		&\leq\sum_{i=1}^{\diamLength}\pr{\abs{S_i}<(d/4)^i\bigcond\abs{S^*_{\diamLength}}<n/3\wedge\forall j\in[i-1]:|S_j|\geq(d/4)^j}\\
		&\leq\diamLength\cdot e^{-d/8}
	\end{align*}
	where the second inequality follows from the union bound. \changed{It remains to prove \cref{clm:graph_exp_growth}.}
	\begin{proofof}{\cref{clm:graph_exp_growth}}
		The proof is  by induction on $i$. Clearly, the claim holds for $i=0$. Assume that the claim holds for $i\leq \diamLength-1$. We next prove it for $i+1$. 
		First, observe that if $|S^*_{\diamLength}|<n/3<\tfrac{(1-\beta)}{2}\cdot n$, then half of the vertices in $G'$ are not in $S_{i+1}$. Thus, for any fixation $\cS_i$ of $S_i$, the (conditional) expected size of $S_{i+1}$ is at least
		\begin{align*}
			\ex{\abs{S_{i+1}}\bigcond|S^*_{\diamLength}|<n/3\wedge S_i=\cS_i}
			\geq d\cdot|\cS_i|/2.
		\end{align*}
		Therefore,
		\begin{align*}%
			\ex{\abs{S_{i+1}}\bigcond|S^*_{\diamLength}|<n/3\wedge |S_i|\geq(d/4)^i}
			\geq (d/2)\cdot(d/4)^i.
		\end{align*}
		In particular, it holds that 
		\begin{align*}%
		\mu:=\ex{\abs{S_{i+1}}\bigcond|S^*_{\diamLength}|<n/3\wedge\forall j\in[i]:|S_j|\geq(d/4)^j}
			\geq (d/2)\cdot(d/4)^i.
		\end{align*}
		Thus, by \cref{fact:hoeffding}, it follows that
		\begin{align*}
			&\pr{\abs{S_{i+1}}<\paren{d/4}^{i+1}\bigcond\abs{S^*_{\diamLength}}<n/3\wedge\forall j\in[i]:|S_j|\geq(d/4)^j}\\
			&\quad\leq\pr{\abs{S_{i+1}}-\mu<-\paren{d/4}^{i+1}\bigcond\abs{S^*_{\diamLength}}<n/3\wedge\forall j\in[i]:|S_j|\geq(d/4)^j}\\
			&\quad\leq e^{-\frac{2(d/4)^2}{d}}\\
			&\quad=e^{-d/8}.
		\end{align*}
	\end{proofof}
\end{proof}


\subsection{Preparing For Computation in the GMPC Model: The Setup Protocol}\label{sec:setup}

In this section, we combine \cref{proto:pc} for sampling PCs and the result from \cref{sec:graph} for sampling a graph, and construct our setup protocol. Roughly, the protocol proceeds by first executing \cref{proto:pc}, which samples the PCs and forms some sort of agreement (see \cref{thm:PC_secure} below the protocol's description). Then, each PC samples its neighbors uniformly at random. To ensure that the graph is undirected, the user requests a receipt from each of the users it sampled, which in turn add the requesting users to their set of neighbors.

We abuse notions and describe the protocol as if each PC is a single party. Formally, whenever we say that the server sends a message to $\P_i$, it means that it (supposedly) shares it among the users of $\P_i$. Similarly, when we say that $\P_i$ sends a message to the server, it means that every user in $\P_i$ sends its share of the message to the server. Finally, whenever a PC $\P_i$ sends a message $\mathsf{msg}$ to another PC $\P_j$, it means that each user in $\P_i$ shares its share of $\mathsf{msg}$ among the users in $\P_j$. 

Formally, we describe the protocol in a hybrid world, where the hybrid functionalities compute the ``next-message'' function of each PC.\footnote{Although the PCs are sampled during the execution of the protocol, the next-message they compute are functionalities that are known in advance. Therefore, the hybrid functionality is well-defined.} Let $\nxtmsg_i$ denote the next-message function of $\P_i$. The inputs of the users to each call of $\nxtmsg_i$ are a $(1-\alpha-\eps/2)\secParam$-out-of-$\secParam$ Shamir's secret sharing scheme of the view of $\P_i$. If the server or $i$ are honest and at least $(1-\alpha-\eps/2)\secParam$ inputs are provided to the functionality, then it proceeds to compute the next-message as specified by the protocol, and sharing the output among the users in $\P_i$. If less then $(1-\alpha-\eps/2)\secParam$ inputs are provided, the functionality sends $\bot$ to all of its users. If both the server and $i$ are corrupted, then the adversary chooses the output the users receive from the functionality. In all cases, if a user obtain $\bot$ as the output from some $\nxtmsg_i$, then it aborts. In \changed{\cref{sec:comm_love}} we present a protocol for implementing each call.

We further describe the output \wrt the PC rather than the users. The output of each user is then defined as the set of PCs containing it, and a share of the output of each such PC, in a $(1-\alpha-\eps/2)\secParam$-out-of-$\secParam$ Shamir's secret sharing scheme.

\noindent
\begin{protocol}{\CommSetup}\label{proto:setup}
	
	{\bf Common inputs:} All parties hold the security parameter $1^{\secParam}$ and the number of users $n$ (held in binary by the users, and in unary by the server). 
	
	\smallskip
	\begin{enumerate}[leftmargin=15pt,rightmargin=10pt,itemsep=1pt,topsep=0pt]
		\item The parties execute protocol \PerComm. Recall that each user $i$ obtains a set $\sset{(j,\P_j)}_{j:i\in\P_j}$ of size at most $3\secParam$, and the server holds the set $\LIVE=\sset{(i,\P_i)}_i$ where each $\P_i$ is active. Additionally, the server is committed (via a vector commitment scheme) to a vector $\vv$ whose \ith entry is $(i,\P_i)$. %
		
		\item Each PC $\P_i$ samples a set of neighbors $\isample_i\su[n]$ of size $\secParam$ uniformly at random, and sends it to server.
		
		\item For each remaining $\P_i$, the server opens the commitment to every user in $\P_i$ at position $j$. A user aborts from $\P_i$ if the verification of the decommitment fails.
		
		\item The PC $\P_i$ then notifies each PC $\P_j$, where $j\in\isample_i$.
			
		\item\label{step:too_many_neighbors_graph} $\P_i$ sets $\sampledme_i$ to be the set of PCs that sampled it. It aborts if it was sampled by too many users, \ie, $\sabs{\sampledme_i}>3\secParam$.
		
		\item Let $\totalsample_i=\isampled_i\cup\sampledme_i$ be the total set of neighbors of PC $\P_i$ (known to both the PC and the server).
		
		\item Send $\P_i$ to each PC $\P_j$, where $j\in\totalsample_i$.
		
		\item\label{step:check_alive_setup} For \FILreps iterations, each personal committee $\P_i$ sends \alive to every $\P_j$ where $j\in\totalsample_i$. At any iteration, if $\P_i$ did not receive the message \alive from any of its neighbors, it aborts.
		
		\item Each honest $\P_i$ outputs the set of neighbors $\totalsample_i$, and an honest server outputs the list $\LIVE=\sset{(i,\P_i)}_i$ of alive users and their corresponding PC., and the (undirected) graph $G=(\LIVE,E)$, where $\sset{(i,\P_i),(j,\P_j)}\in E$ if and only if $i\in\totalsample_j$.
	\end{enumerate}
\end{protocol}

We next state the security properties of \cref{proto:setup}. Instead of describing it using the standard real vs. ideal paradigm, we formalize the security notion by describing an event, which captures the desired properties. We then show that the event occurs with overwhelming probability.
\begin{definition}[Personal committees and small graph event]
	We define the \emph{personal committees and small graph event}, denoted \PCSG, as \PCSGhon if the server is honest, and define it as \PCSGmal if the server is malicious, where \PCSGhon and \PCSGmal are defined as follows. 
	
	\PCSGhon is the conjunction of the following events.
	\begin{itemize}
		\item All honest users are alive, and all their PCs are labeled active.
		
		\item Each alive user $i$ (whether it is honest or malicious) holds a personal committee $\P_i$ of size $\secParam$ and a set of neighbors $\totalsample_i$ of size at most $4\secParam$, where both are known to the server.
		
		\item For every alive user $i$ (whether it is honest or malicious), its PC $\P_i$ contains at least $(1-\alpha-\eps/2)\secParam>\changed{7\secParam/8}$ honest users. 
		
		\item For every alive $i$, all of its neighbors in $\totalsample_i$ hold $(i,\P_i)$.
		
		\item For every alive $i$ and every alive $j\in\totalsample_i$, all users in $\P_i$ hold $(j,\P_j)$.
		
		\item The graph over the PCs of alive users induced by the set of neighbors $\sset{\totalsample_i}_i$, has diameter at most \FILreps.
	\end{itemize}
	\PCSGmal is the conjunction of the following events.
	\begin{itemize}
		\item At least $(1-2\alpha-\eps)n=(\changed{7/8}-\alpha)n$ honest users are alive, and their PCs are labeled active.
		
		\item Each alive user $i$ (whether it is honest or malicious), holds a personal committee $\P_i$ of size $\secParam$ and a set of neighbors $\totalsample_i$ of size at most $4\secParam$, where both are known to the server.
		
		\item For every alive and honest $i$, its PC $\P_i$ contains at least $(1-\alpha-\eps/2)\secParam>\changed{7\secParam/8}$ honest users.
		
		\item For every alive $i$, all of its neighbors in $\totalsample_i$ hold $(i,\P_i)$.
		
		\item For every alive $i$ and every alive $j\in\totalsample_i$, all users in $\P_i$ hold $(j,\P_j)$.
		
		\item The graph over the PCs of alive honest users induced by the set of neighbors $\sset{\totalsample_i}_i$, has diameter at most \FILreps.
	\end{itemize}
\end{definition}

The next theorem asserts that at the end of \cref{proto:setup}, the event \PCSG holds except with negligible probability.

\begin{theorem}\label{thm:PC_secure}
	Let $\alpha<\changed{1/8}$ be a constant (\ie it does not depend on $\secParam$ or the number of users $n$). Assume the existence of position binding vector commitment schemes. Let $\adv$ be a \ppt for \cref{proto:setup} corrupting at most $\alpha n$ users, and which possibly corrupts the server as well. Then the event \PCSG occurs except with negligible probability.
\end{theorem}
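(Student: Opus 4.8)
The plan is to case on whether the server is corrupted and, in each case, to bound the probability that each individual conjunct of \PCSGhon (resp.\ \PCSGmal) fails, and then conclude by a union bound over these polynomially-many conjuncts. Two global reductions organize the whole argument. First, I would condition on the event $E_{\mathrm{bind}}$ that throughout the execution the adversary never exhibits two valid openings of any commitment --- a $\VC$-commitment $c_j$ or a $\phC$-commitment $\hc_i$ --- to distinct values at the same location; running the entire protocol internally inside a position-binding challenger, together with a union bound over the $\poly(n,\secParam)$ commitments and locations that appear, gives $\Pr[\neg E_{\mathrm{bind}}]=\negl(\secParam,n)$. Conditioned on $E_{\mathrm{bind}}$, every opening that an honest user verifies in \PerComm or \CommSetup necessarily reveals the true coordinate of the vector the server committed to for that user, so all remaining conjuncts become purely combinatorial or statistical. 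Second, since \CommSetup is stated in the $\{\nxtmsg_i\}$-hybrid model, the next-message functionality of $\P_i$ behaves honestly whenever the server or user $i$ is honest and at least $(1-\alpha-\eps/2)\secParam$ members of $\P_i$ supply input; thus, once we know (below) that every relevant $\P_i$ contains that many honest members who never abort, an honest server makes \emph{all} PCs behave honestly while a malicious server makes all \emph{honest} users' PCs behave honestly --- which is precisely the split between \PCSGhon and \PCSGmal.

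The statistical conjuncts are handled by concentration, using $\secParam=\omega(\log n)$ so that $e^{-\Omega(\secParam)}$ is $\negl(\secParam,n)$. For the honest-super-majority conjunct, I would first note that when $i$ is honest the perfectly hiding $\hc_i$ hides $r_i$ from the server at the moment it picks $s_i$ (and binding fixes $r_i$ beforehand), so $\P_i=r_i\xor s_i$ is a uniformly random size-$\secParam$ set even against a malicious server --- and symmetrically $\P_i$ is uniform for a malicious $i$ when the server is honest; then \cref{fact:chernoff} bounds the corrupted members of $\P_i$ by $(\alpha+\eps/2)\secParam$ with all but negligible probability, and a union bound over users gives more than $(1-\alpha-\eps/2)\secParam>5\secParam/6$ honest members in every relevant $\P_i$. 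The ``not over-sampled'' conjuncts (no user lands in more than $3\secParam$ of the sampled sets $\isample_i$, $\rndsu_i$, query sets, personal committees) are the same Chernoff estimate, since in each sampling round a fixed user is chosen $\secParam$ times in expectation; here one also uses the standing convention that the adversary does not flood honest parties when the server is honest. The small-diameter conjuncts are exactly \cref{lem:smalldiam}: the communication graph contains a sample of $\Gdist{n}{\secParam,\badset}$ whose non-bad vertices perform their $\isample_i$ sampling via their honest-majority PCs (hence uniformly), where $\badset$ is the set of users whose PC is not honest-majority-and-alive --- empty when the server is honest, and of size $<n/3$ when it is malicious, by the bound established in the last paragraph --- so its alive-honest induced subgraph has diameter at most \FILreps except with negligible probability, by \cref{lem:smalldiam} and $\secParam=\omega(\log n)$.

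The agreement conjuncts are the heart of the proof, and where I expect the main difficulty: each honest user only cross-checks consistency with $\secParam$ random peers, so local agreement has to be bootstrapped into global agreement. The explicit checks of \PerComm (Steps~\ref{step:local_abort} and~\ref{step:global_abort}, and the query-set check~\ref{step:queryset_consistency}) ensure that a surviving honest user $i$ has verified, for at least $(1-2\alpha-\eps/2)\secParam$ of its $\secParam$ uniformly random peers $j$, that $c_j$ opens at position $i$ to $(i,\P_i)$. Since the peer set is a uniform sample, Chernoff ``boosts'' this: except with negligible probability, for \emph{every} surviving honest $i$ at least a $(1-2\alpha-\eps)$-fraction of \emph{all} users $l$ have $c_l$ opening at position $i$ to $(i,\P_i)$ --- otherwise the expected number of good peers would be below $(1-2\alpha-\eps)\secParam$ and observing $(1-2\alpha-\eps/2)\secParam$ of them would be too unlikely --- and a union bound over $i$ keeps this negligible. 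A second Chernoff step then shows that for \emph{every} pair of surviving honest users $i,i'$, a $(1-2\alpha-2\eps)=2/3$-fraction of the commitments that $i'$ actually holds are consistent with $(i,\P_i)$; this is exactly the ``agreement on the array of PCs'' property. The remaining conjuncts --- every alive neighbor of $i$ holds $(i,\P_i)$, and every user of $\P_i$ holds $(j,\P_j)$ --- then follow by combining this agreement with the decommitment-verification steps of \CommSetup and $E_{\mathrm{bind}}$, while the in-PC broadcast of \alive in Step~\ref{step:bc_in_pc} of \PerComm is correct because every honest $\P_i$ has corrupted fraction below $1/6$ (so no honest $\P_i$ is wrongly labeled inactive when the server is honest, and when the server is malicious a single aborting PC propagates aborts to all alive PCs within \FILreps rounds along the small-diameter graph).

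Finally, for \PCSGmal I would establish that at least $(1-2\alpha-\eps)n=(5/6-\alpha)n$ honest users remain alive via a dichotomy: either the adversary's blocking makes \emph{every} honest user abort (a degenerate outcome), or the final \alive-feedback rounds (Step~\ref{step:many_remain} of \PerComm and Step~\ref{step:check_alive_setup} of \CommSetup) force the set of blocked-or-aborted honest users to be small, since by a Chernoff estimate, if too many honest users were unresponsive then every remaining honest user would see fewer than $(1-2\alpha-\eps/2)\secParam$ \alive-replies among its $\secParam$ samples and abort --- a contradiction. The protocol's thresholds are chosen precisely so that this bound, together with the at most $\alpha n$ corrupted users, keeps the bad set below $n/3$ (as used in the diameter argument) and leaves at least $(5/6-\alpha)n$ honest users alive. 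Summing the negligible failure probabilities of all conjuncts in both cases completes the proof. The step I expect to cost the most care is the two-level Chernoff bootstrapping of local into global agreement: it must compose cleanly across the many consistency checks of \PerComm and \CommSetup, and must accommodate both that a malicious server may commit a \emph{different} vector to each user (so ``agreement'' really means that a super-majority of the per-user vectors coincide coordinatewise) and the subtle distinction between a user being blocked and a user aborting.
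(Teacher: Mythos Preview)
Your proposal follows the same decomposition as the paper: case-split on whether the server is honest, condition on position binding, handle the honest-majority, not-over-sampled, and many-remain conjuncts by Hoeffding/Chernoff, and invoke \cref{lem:smalldiam} for the diameter. The paper packages this as \cref{lem:PC_secure_hon} and \cref{lem:PC_secure_mal}; your argument for uniformity of $\P_i$ via the hiding/binding of $\phC$ is equivalent to the paper's two-party thought-experiment reduction for \cref{item:PC_many_honest}, and your dichotomy for ``many remain'' is the paper's \cref{item:many_remain}.

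There is, however, a real gap in your agreement step. Your ``second Chernoff'' only shows that for each surviving honest $i'$, a $2/3$-fraction of the commitments $i'$ \emph{collected} (the $c_j$ for $j\in\rndsu_{i'}$) are consistent with $(i,\P_i)$ at position $i$. This matches the informal overview in the introduction, but it is not the conjunct you need: what \PCSGmal requires is that $i'$'s \emph{own} committed vector satisfies $\vv_{i'}(i)=\P_i$, since that is what makes the decommitment steps of \CommSetup hand every user in $\P_{i'}$ the correct value $(j,\P_j)$. Knowing that a majority of \emph{external} commitments $i'$ gathered are good says nothing about $c_{i'}$ itself. The paper proves precisely this stronger statement (``for every alive honest $j\in\LIVE$, the proper decommitment to $c_j$ at position $i$ returns $\P_i$''), and the mechanism is the query-set consistency check of Step~\ref{step:queryset_consistency} --- a step you list among the relevant checks but never actually deploy. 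The paper's second phase argues that once a majority $\LIVE_i$ of users' vectors agree at position $i$, any surviving $j\notin\LIVE_i$ would, with overwhelming probability over $\queryset_j$, have seen a clash between its own commitment and those in $\rndsu_j$ and aborted at Step~\ref{step:queryset_consistency}. Without this upgrade from ``a majority of held commitments agree'' to ``every alive honest user's own vector agrees'', your final hand-wave (``the remaining conjuncts then follow by combining this agreement with the decommitment-verification steps of \CommSetup'') does not deliver the conjuncts of \PCSG as stated.
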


Observe that by construction, if a PC contains at most $(\alpha+\eps/2)\secParam$ corrupted users and it is labeled active at the end of the execution, then all of its users hold the neighbors of the user corresponding to the PC, and PCs of the neighbors. Thus, \cref{thm:PC_secure} directly follows from the next two lemmata, which split (and simplify) the statement depending on whether the server is honest or corrupted.
\begin{lemma}\label{lem:PC_secure_hon}
 	Let $\alpha<\changed{1/8}$ be a constant and assume the existence of position binding vector commitment schemes. Let $\adv$ be a \ppt adversary \changed{for \cref{proto:setup}}, corrupting a set $\corrset\su[n]$ of at most $\alpha n$ users and which does \emph{not} corrupt the server. Then the following hold.
	\begin{enumerate}
		\item\label{item:all_PC_many_honest} Every personal committee (including those of malicious users that did not abort) will contain less than $\changed{\secParam/8}$ corrupted users with overwhelming probability. That is, let $\corrset'\su\corrset$ denote the set of malicious users that are labeled active by the server by the end of the protocol's execution. Then
		$$\pr{\exists i\in\honset\cup\corrset':\abs{\P_i\cap\corrset}\geq\paren{\alpha+\frac{\eps}{2}}\cdot\pcsize}\leq n\cdot e^{-\frac{\eps^2}{2}\cdot\pcsize}.$$
		
		\item\label{item:all_remain} All honest users remain alive at the end of the execution, except with negligible probability. 
	\end{enumerate}
\end{lemma}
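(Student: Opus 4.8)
The plan is to establish the two parts in order, using Part~\ref{item:all_PC_many_honest} to supply the randomness structure that Part~\ref{item:all_remain} exploits. For Part~\ref{item:all_PC_many_honest}, the key point is that when the server is honest, for \emph{every} user $i$ (honest or corrupted) the personal committee $\P_i=r_i\xor s_i$ is distributed as a uniformly random size-$\pcsize$ subset of $[n]$, up to negligible error: user $i$ sends $\hc_i\from\phCCom(r_i)$ \emph{before} the honest server replies with the uniform string $s_i$, so by the (computational) binding of $\phC$, except with negligible probability any later decommitment of a corrupted $i$ opens $\hc_i$ to the $r_i$ fixed at commit time, whence $\P_i$ is the $\xor$ of an a priori fixed string with a fresh independent uniform string, hence uniform, independently across $i$. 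Fixing $i$ and writing $\mathsf{bad}_i$ for the event $\abs{\P_i\cap\corrset}\ge(\alpha+\eps/2)\pcsize$, the one-sided Hoeffding bound (\cref{fact:hoeffding}) applied with ambient set $[n]$, subset $\corrset$ (of size $\le\alpha n$), sample $\P_i$ of size $\pcsize$, and slack $t=(\eps/2)\pcsize$, together with $\pcsize\abs\corrset/n\le\alpha\pcsize$, gives $\pr{\mathsf{bad}_i}\le e^{-2t^2/\pcsize}=e^{-\eps^2\pcsize/2}$, up to the negligible probability of a binding failure for $i$. Since the event that $i$ is active and $\mathsf{bad}_i$ holds is contained in $\mathsf{bad}_i$, a union bound over the at most $n$ indices gives the stated $n\cdot e^{-\eps^2\pcsize/2}$ (plus a negligible additive term), which is negligible as $\pcsize=\secParam=\omega(\log n)$.

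For Part~\ref{item:all_remain} I would fix, once and for all, a ``good event'' $E$ --- shown to hold except with negligible probability --- and then argue \emph{deterministically}, by induction over the steps of both protocols, that conditioned on $E$ no honest user ever aborts. The event $E$ collects: (i) the event of Part~\ref{item:all_PC_many_honest}, so every personal committee has more than $(1-\alpha-\eps/2)\pcsize>5\pcsize/6$ honest members; (ii) for every $j$, the number of parties that sample $j$ --- into an outgoing set, into a personal committee, or, in \cref{proto:setup}, as a neighboring committee --- is at most $3\pcsize$ (for the uniformly sampled contributions this follows from \cref{fact:chernoff}, using that every personal committee is uniform by (i) and, with an honest server, every committee's hybrid functionality $\nxtmsg_i$ proceeds honestly because it receives more than $(1-\alpha-\eps/2)\pcsize$ honest inputs, so committee-to-committee sampling is uniform too; the residual contribution of corrupted \emph{users} is bounded by the no-DDoS convention of \cref{sec:Background}); and (iii) for every honest user $i$, each size-$\pcsize$ set it samples contains more than $(1-\alpha-\eps/2)\pcsize$ honest users. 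Each ingredient of $E$ fails with probability at most $n\cdot e^{-\Omega(\pcsize)}$, so $E$ holds with overwhelming probability.

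Conditioned on $E$, I would walk through \cref{proto:pc} and then \cref{proto:setup}, matching each abort check to one of three kinds. (a) Checks that the honest server decommitted correctly (a user verifying the opening of its own committee; all the tests where the server opens a collected commitment $c_j$ at a position): these always pass by correctness of the vector-commitment scheme, since the honest server commits to one consistent vector $\vv$ and opens it at the requested positions. (b) Checks that a user or committee was not addressed by too many others: these do not fire under $E$ together with the no-DDoS convention. (c) Checks that enough \emph{consistent} replies arrived (the ``$(1-2\alpha-\eps/2)\pcsize$ consistent responses'' tests, the query-set consistency test, the \alive-count aborts, and the neighbor-liveness checks of \cref{proto:setup}): by the inductive hypothesis every honest counterparty is still active, hence --- the honest server forwarding its messages --- it replies with information the honest server validates, and by $E$ more than $(1-\alpha-\eps/2)\pcsize>(1-2\alpha-\eps/2)\pcsize$ of the sampled parties are honest, so the threshold is met. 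The only checks needing a bit more care are the intra-committee broadcast of \alive and the ensuing deactivation rule: by Part~\ref{item:all_PC_many_honest} each committee has fewer than $(\alpha+\eps/2)\pcsize<\pcsize/6$ corrupted members, so the broadcast subprotocol (tolerating a $<1/6$ fraction of cheaters) delivers \alive from all honest members, at most an $(\alpha+\eps/2)$-fraction fails to broadcast, and no honest committee is deactivated; in \cref{proto:setup} the honest server keeps every committee's functionality proceeding honestly, so every committee sends \alive to all its neighbors in each iteration and none aborts.

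I expect the main obstacle to be the bookkeeping in Part~\ref{item:all_remain}: systematically matching every abort condition of both protocols to one of the three kinds, and, in particular, handling the apparent circularity --- that an honest user's not aborting at a ``too-few-replies'' check relies on other honest users still being active --- which is exactly why the argument is organized as a deterministic induction over steps under a single, fixed good event $E$. A secondary subtlety is the claim in Part~\ref{item:all_PC_many_honest} that \emph{every} personal committee is uniform, including that of a corrupted user who may decide whether to remain active after seeing $s_i$; this is what lets us bound the in-degrees of \emph{all} vertices (not only the honest ones) in Part~\ref{item:all_remain}, and it is where the binding of $\phC$ is used.
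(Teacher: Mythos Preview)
Your proposal is correct and follows essentially the same approach as the paper. For Part~\ref{item:all_PC_many_honest} you and the paper argue identically: binding of $\phC$ forces every active user's $\P_i$ to be uniform, then Hoeffding plus a union bound. For Part~\ref{item:all_remain} the paper's proof is only two sentences --- it observes that with an honest server all commitments are to the same vector, so the only possible aborts are (i) being sampled by too many others (Steps~\ref{step:too_many_neighbors} and~\ref{step:too_many_PCs}), handled by Chernoff, and (ii) a PC being deactivated, handled by Part~\ref{item:all_PC_many_honest} plus broadcast security. Your fixed ``good event'' $E$ and step-by-step induction amount to a more careful unpacking of exactly this, and in fact you also cover the abort conditions of \cref{proto:setup} (the $\sabs{\sampledme_i}>3\secParam$ check and the neighbor-\alive loop), which the paper's proof does not address explicitly.
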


\begin{lemma}\label{lem:PC_secure_mal}
	Let $\alpha<\changed{1/8}$ be a constant and assume the existence of position binding vector commitment schemes. Let $\adv$ be a \ppt adversary \changed{for \cref{proto:setup}}, corrupting a set $\corrset\su[n]$ of at most $\alpha n$ users and which also corrupts the server. Let $\LIVE\su\honset$ denote the set of honest users that did not abort the execution of the protocol. Assume that $\LIVE\ne\emptyset$. Then the following hold.
	\begin{enumerate}
		\item\label{item:many_remain} $|\LIVE|\geq(1-2\alpha-\eps)n=(\changed{7/8}-\alpha)n$ except with negligible probability.
		
		\item\label{item:PC_many_honest} Every personal committee of an alive honest user in will contain less than $\changed{\secParam/8}$ malicious users, except with negligible probability. Formally,
		$$\pr{\exists i\in\LIVE:\abs{\P_i\cap\corrset}\geq\paren{\alpha+\frac{\eps}{2}}\cdot\pcsize}\leq n\cdot e^{-\frac{\eps^2}{2}\cdot\pcsize}.$$
		In particular, the PCs of alive honest users are active at the end of the execution.
	
		\item\label{item:small_diam} The graph over the active honest PCs induced by the sets $\sset{\totalsample_i}_i$ (\ie $\sset{\P_i,\P_j}$ is an edge if and only if $i\in\totalsampled_{j}$ or $j\in\totalsample_i$), has diameter at most \FILreps, except with negligible probability. Moreover, each user has at most $4\secParam$ neighbors.
		
		\item\label{item:agreement} For every alive honest user  $i\in\LIVE$, all its neighbors $j\in\totalsampled_i$ and the users in $\P_j$, hold $\P_i$ (as it was sent to them at Step~\ref{step:send_info_to_neighbors}).
	\end{enumerate}
\end{lemma}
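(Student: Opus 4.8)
The plan is to prove the four items in the stated order, since item~\ref{item:small_diam} relies on item~\ref{item:many_remain} and item~\ref{item:agreement} relies on items~\ref{item:many_remain}--\ref{item:small_diam}; throughout I would condition on position binding of $\VC$ holding (it fails with only negligible probability) and invoke Chernoff's bound and the Hoeffding bound for random subsets (\cref{fact:chernoff,fact:hoeffding}). I would dispatch item~\ref{item:PC_many_honest} first, as it is essentially a stand-alone coin-tossing argument. For an honest user $i$, its personal committee is $\P_i=r_i\oplus s_i$ (Step~\ref{step:PC_sample}), where $r_i$ is uniform and is committed under the perfectly hiding scheme $\phC$ \emph{before} the server picks $s_i$; hence $s_i$ is independent of $r_i$, so $\P_i$ is, up to negligible statistical error, a uniformly random $\pcsize$-tuple of indices, independent of the statically fixed corrupted set $\corrset$. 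Each of its $\pcsize$ entries thus lands in $\corrset$ independently with probability at most $\alpha$, so $\ex{\sabs{\P_i\cap\corrset}}\le\alpha\pcsize$ and a standard concentration bound gives $\pr{\sabs{\P_i\cap\corrset}\ge(\alpha+\eps/2)\pcsize}\le e^{-\eps^2\pcsize/2}$; a union bound over the at most $n$ alive users yields the stated bound. Since $\alpha+\eps/2<1/6$, every alive honest PC has more than $5\pcsize/6$ honest members, so it passes the within-PC liveness check around Step~\ref{step:bc_in_pc} and stays active, which also justifies treating these PCs as honest parties for the rest of the analysis.

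For item~\ref{item:many_remain} the strategy is two-staged: first lower-bound the set of honest users still alive at the end of \PerComm using the liveness checks that rely on \emph{fresh, independent} randomness, and then argue this set does not shrink inside \CommSetup. For the first stage, fix any honest $i\in\LIVE$ (which exists by hypothesis). At the final liveness round of \PerComm (Step~\ref{step:many_remain} and the check following it), user $i$ samples a uniform $\secParam$-subset and must receive at least $(1-2\alpha-\eps/2)\secParam$ \alive replies; the set of users that can possibly reply is contained in $\corrset$ together with the set $A$ of honest users still alive at that round, and $A\cup\corrset$ is determined \emph{independently} of $i$'s fresh sample, so \cref{fact:hoeffding} forces $\sabs{A\cup\corrset}\ge(1-2\alpha-\eps)n$ except with probability $e^{-\eps^2\secParam/2}$; union-bounding over honest users makes this simultaneous, and bookkeeping with $\sabs{\corrset}\le\alpha n$ (together with the analogous check at Step~\ref{step:global_abort}) yields $\sabs{A}\ge(1-2\alpha-\eps)n$. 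The consistency checks at Steps~\ref{step:global_abort} and~\ref{step:queryset_consistency}, combined with position binding and \cref{fact:hoeffding} applied to the uniform sets $\rndsu^{\out}_i$ and $\queryset_i$, additionally force all surviving honest users to agree on one committed PC-vector; this is what later drives item~\ref{item:agreement}. For the second stage, no honest user aborts during Steps~2--7 of \CommSetup except with negligible probability (each such abort is triggered only when some uniform $\secParam$-set is oversampled by a factor $3$, ruled out by Chernoff), and at Step~\ref{step:check_alive_setup} the sampled communication graph acts as a ``scream'' channel: if one honest PC aborts, then within \FILreps rounds every honest PC reachable from it aborts, so since $\LIVE\ne\emptyset$ no honest PC aborts there either. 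Hence $\LIVE$ coincides with the set of honest survivors of \PerComm, and $\sabs{\LIVE}\ge(1-2\alpha-\eps)n=(5/6-\alpha)n$.

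Items~\ref{item:small_diam} and~\ref{item:agreement} then follow from the machinery in place. For item~\ref{item:small_diam}: by item~\ref{item:many_remain} the set $\badset$ of indices that are malicious or not alive has size at most $n-\sabs{\LIVE}\le(1/6+\alpha)n<n/3$, so \cref{lem:smalldiam} with $d=\secParam$ and $\diamLength=\log_{\secParam/4}(n/3)$ bounds the diameter of the induced graph over the alive honest PCs by \FILreps, except with probability $n^2\cdot\FILreps\cdot e^{-\secParam/8}$, which is negligible since $\secParam=\omega(\log n)$; the undirected graph over these PCs has only more edges, so its diameter is no larger, and each PC has at most $\secParam$ self-sampled plus at most $3\secParam$ incoming-sampled neighbors (Step~\ref{step:too_many_neighbors_graph}), i.e.\ at most $4\secParam$. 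For item~\ref{item:agreement}: conditioned on items~\ref{item:many_remain}--\ref{item:small_diam}, every alive honest $i$ disseminates $(i,\P_i)$ to its neighbors $j\in\totalsample_i$ and to the users of $\P_j$ during the dissemination steps of the setup (Step~\ref{step:send_info_to_neighbors} and its analogues); all these PCs are alive and honest-majority, hence they faithfully store the received value, and a corrupted server blocking any such message would trigger an abort which, by the scream argument of item~\ref{item:small_diam}, propagates to all honest PCs and contradicts $\LIVE\ne\emptyset$.

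The main obstacle is item~\ref{item:many_remain}, and specifically its apparent circularity: the small-diameter ``scream'' used to show $\LIVE$ does not shrink inside \CommSetup needs $\sabs{\badset}<n/3$, i.e.\ it needs the very lower bound on $\sabs{\LIVE}$ one is trying to prove. The resolution is to keep the two sources of randomness strictly separate --- the per-user liveness checks with \emph{independent} fresh samples (Steps~\ref{step:global_abort} and~\ref{step:many_remain} of \PerComm) drive the Hoeffding argument \emph{without} reference to the graph, and only once that yields $\sabs{\LIVE}>2n/3$ does \cref{lem:smalldiam} become applicable --- and to be scrupulous that the ``responsive'' set against which each fresh sample is compared is fixed \emph{before} that sample is drawn, so that conditioning preserves uniformity. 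Making the constants close up (the threshold $(1-2\alpha-\eps/2)\secParam$ in the checks versus the target fraction $1-2\alpha-\eps$, with the $\le\alpha n$ malicious repliers and all the Chernoff/Hoeffding slack absorbed into $\eps$) is the delicate accounting, and is exactly why the protocol uses $\eps=1/6-\alpha$ and its particular threshold constants.
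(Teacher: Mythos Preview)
For Items~\ref{item:many_remain}--\ref{item:small_diam} your plan is essentially the paper's: the coin-tossing argument for Item~\ref{item:PC_many_honest} (the paper phrases it as a reduction to a two-party thought experiment, but the content is identical), Hoeffding on the fresh liveness sample at Step~\ref{step:many_remain} for Item~\ref{item:many_remain}, and \cref{lem:smalldiam} for Item~\ref{item:small_diam}. The circularity you flag in your last paragraph, and your resolution of it (establish the $|\LIVE|$ lower bound from the per-user sampling checks alone, only then invoke the graph lemma), also mirror the paper's dependency structure.

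The genuine gap is your argument for Item~\ref{item:agreement}. Your ``dissemination + scream'' mechanism addresses only the case where the server \emph{blocks} a message; it does not address the case where the server \emph{lies in its commitments}. The graph neighbors $j\in\totalsample_i$ (and the users in $\P_j$) do not learn $\P_i$ via a direct authenticated message from user $i$; they learn it by having the server decommit $c_k$ at position $i$, for each user $k\in\P_j$. A malicious server could have committed each such $k$ to a vector $\vv_k$ with $\vv_k(i)\neq\P_i$ and then decommit that wrong value \emph{correctly} --- position binding is satisfied, no local verification fails, no scream is triggered. (Separately, Step~\ref{step:send_info_to_neighbors} sends $(\pp_i,c_i)$ to the set $\rndsu^{\intext}_i$, which is unrelated to the graph-neighbor set $\totalsample_i$; you are conflating the two.)

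What is actually needed is the two-step commitment-consistency argument you allude to in your Item~\ref{item:many_remain} paragraph but never carry out, and this is exactly the paper's proof. Step one: since $i\in\LIVE$ did not abort at Step~\ref{step:global_abort}, Hoeffding on the uniform set $\rndsu^{\out}_i$ forces $\LIVE_i:=\{j\in\LIVE:\vv_j(i)=\P_i\}$ to have size exceeding $n/2$; otherwise too few of the users $i$ sampled would have been consistent with the server and $i$ would have aborted. Step two: any $j\in\LIVE\setminus\LIVE_i$ would have aborted at Step~\ref{step:queryset_consistency}, because its independent uniform query set $\queryset_j$ hits the large set $\LIVE_i$ with overwhelming probability, and at any such hit the server cannot open $c_j$ consistently with the commitment held by a member of $\LIVE_i\cap\rndsu_j$. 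Together these give $\LIVE_i=\LIVE$ for every $i\in\LIVE$, i.e., every surviving honest user's committed vector agrees with $\P_i$ at position $i$; the graph and the scream mechanism play no role in this item.
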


We first prove \cref{lem:PC_secure_hon}, where the server is assumed to be honest.

\begin{proofof}{\cref{lem:PC_secure_hon}}	
	We first prove \cref{item:all_PC_many_honest}. First, by the binding property of the commitment scheme, any malicious user will decommit properly, except with negligible probability. Therefore, $\P_i$ will be uniformly random for all $i\in\corrset'$ except with negligible probability. Now, observe that for every $i\in\honset\cup\corrset'$ it holds that

	$$\ex{\abs{\P_i\cap\corrset'}}\leq\abs{\corrset}\cdot\frac{\pcsize}{n}\leq\alpha\pcsize.$$
	By Hoeffding's inequality (\cref{fact:hoeffding}), for every user $i\in\honset\cup\corrset$, it holds that
	\begin{align*}
		\pr{\abs{\P_i\cap\corrset}\geq\paren{\alpha+\frac{\eps}{2}}\cdot\pcsize}
		\leq\pr{\abs{\P_i\cap\corrset}-\ex{\abs{\P_i\cap\corrset}}\geq\frac{\eps}{2}\cdot\pcsize}
		\leq e^{-\frac{\eps^2}{2}\cdot\pcsize}.
	\end{align*}
	The claim now follows from the union bound.
	
	We now show \cref{item:all_remain}. Since the server is honest, all users will obtain a commitment to the same vector. Thus, all users remain if and only if no user will be sampled by too many other users (in Steps~\ref{step:too_many_neighbors} and \ref{step:too_many_PCs}) and if no personal committee will be labeled inactive. The former holds with negligible probability and it directly follows from Chernoff's inequality (\cref{fact:chernoff}) and the union bound. Then latter follows from \cref{item:all_PC_many_honest} and the security of the broadcast protocol.
\end{proofof}

We now prove \cref{lem:PC_secure_mal}, where the server is assumed to be malicious.

\begin{proofof}{\cref{lem:PC_secure_mal}}
	For simplicity, throughout the entire proof we assume the malicious server always opened the commitments properly. Furthermore, for an alive honest user $i\in\LIVE$ we let $\vv_i$ denote the vector to which the (malicious) server send a commitment to user $i$.\footnote{Note that the existence of $\vv_i$ is only guaranteed from a computational perspective, namely, the computationally bounded malicious server cannot decommit to any other value. To formalize this argument, observe that if there exists another vector $\vv'_i$ to which the server is able to decommit without user $i$ aborting, then this directly translate to an attacker breaking the binding assumption of the vector commitment scheme.}
	
	\paragraph{Proof of \cref{item:many_remain}.} Assume that less than $(1-2\alpha-\eps)n$ honest users are alive, and let $\rndsuT_i$ denote the set of users sampled by user $i$ at Step~\ref{step:many_remain}. Then $\ex{|\rndsuT_i\cap\LIVE|}<(1-2\alpha-\eps)\secParam$. Therefore, by Hoeffding's inequality (\cref{fact:hoeffding}) user $i$ aborts at Step~\ref{step:many_remain} except with probability at most
	$$\pr{|\rndsuT_i\cap\LIVE|\geq\paren{1-2\alpha-\frac{\eps}{2}}\cdot\secParam}
	\leq\pr{|\rndsuT_i\cap\LIVE|-\ex{|\rndsuT_i\cap\LIVE|}\geq\frac{\eps}{2}\cdot\secParam}
	\leq e^{-\frac{\eps^2}{2}\cdot\secParam}.
	$$
	\cref{item:many_remain} now follows from the union bound.
	
	\paragraph{Proof of \cref{item:PC_many_honest}.} Assume towards contradiction that the inequality is false for some adversary $\adv$ corrupting the server. Then there exists an honest user $i\in\LIVE$ whose PC does not contain at least $(\alpha+\eps/2)\secParam$ honest users, with probability at least $e^{-\eps^2\cdot\pcsize/2}$. Consider the following thought experiment, where two parties, a user $u$ and a server, interact. 
	At the beginning of the interaction, the (malicious) server chooses two subsets $\corrset$ and $\LIVE$, where $|\corrset|\leq\alpha n$. The server and user $u$	agree on a random subset $\P\su[n]$ of size $\pcsize$, using the same interaction as in Step~\ref{step:PC_sample} of the protocol.
	We say that the server wins if it succeeds in forcing the output $\P$ to admit $|\P\cap\corrset|\geq(\alpha+\eps/2)\secParam$. Similarly to the case where the server is honest (see \cref{item:all_PC_many_honest} of \cref{lem:PC_secure_hon}), by \cref{fact:hoeffding}, for every malicious server (that decommits properly) it holds that the probability that the adversary wins is
	\begin{align*}
		\pr{\abs{\P\cap\corrset}\geq\paren{\alpha+\frac{\eps}{2}}\cdot\pcsize}\leq e^{-\frac{\eps^2}{2}\cdot\pcsize}.
	\end{align*} 
	We now show that using the above adversary $\adv$ for the multiparty protocol, the server in the two-party experiment can achieve its goal with probability higher than the above quantity. The server (in the two-party setting) sets $\corrset$ to be the set of users corrupted by $\adv$, and interacts with the user $u$ the same as $\adv$ interacts with user $i$, emulating in its head the entire multiparty protocol (except the interaction with $i$). Clearly, the server in the two-party protocol succeeds with too high probability, resulting in a contradiction.
	
	\paragraph{Proof of \cref{item:small_diam}.} Consider the directed graph $G$ over the users (and not the PCs), where the edges between honest alive users are defined via the sets $\sset{\isample_i}_i$ (\ie there is an edge from $i$ to $j$ if and only if $j\in\isampled_i$). Then at least $(1-2\alpha-\eps/2)n>\changed{3n/4}$ of its vertices are users in $\LIVE$. Now,  by \cref{item:PC_many_honest} there exists at least one alive honest PC. Therefore, by \cref{lem:smalldiam} applied for the graph $G$ with $\badset:=[n]\setminus\LIVE$, it follows that if one PC of an alive honest user aborted, then all such PCs would have aborted at Step~\ref{step:check_alive_setup}, except with negligible probability. Thus, the PCs are all active, and the corresponding graph has diameter \FILreps, except with negligible probability. 
	
	To see the ``moreover'' part, observe that if a user has more than $4\secParam$ neighbors, then more than $3\secParam$ sampled it. Thus, it must have aborted at Step~\ref{step:too_many_neighbors}. 
	
	\paragraph{Proof of \cref{item:agreement}.} Fix $i\in\LIVE$. We prove the stronger statement asserting that for every alive honest user $j\in\LIVE$, the proper decommitment to $c_j$ at position $i$ returns $\P_i$. We show this in two steps. First, we show that for all $i\in\LIVE$, it holds that the majority (rather than all) of alive honest users $j$ are such that the proper decommitment of $c_j$ at position $i$ results in $\P_i$. In the second step, we conclude that the above statement holds for
	\emph{all} users $j\in\LIVE$.%
	
	Before presenting the formal proof, let us first provide an intuitive explanation. For the first step, assume towards contradiction that the majority of alive honest users $j$ disagree with user $i$ at position $i$ in their committed vector. Then the set of users $\rndsu_i$ to which user $i$ received the commitments at Step~\ref{step:get_commitments}, on expectation will contain a majority of users $j$ that disagree at position $i$, thus user $i$ will sample many disagreeing users with overwhelming probability, causing it to abort. For the second step of the proof, if most users agree at position $i$, then the query set $\queryset$ of any user that disagrees with the majority, will contain at least one of those in the majority with overwhelming probability, thus causing the user in the minority to abort.
	
	We now formalize the above intuition. Let $\LIVE_i:=\sset{j\in\LIVE:\vv_j(i)=\P_i}$ denote the set of all alive honest users $j$ that agree with $i$ at position $i$. We first show that $|\LIVE_i|\geq(1/2+\eps/2)n=(\changed{9/16}-\alpha/2)n>n/2$. Assuming otherwise, it holds that $\ex{\rndsu_i\cap\LIVE_i}<(1/2+\eps/2)\secParam$, hence by \cref{fact:hoeffding} it follows that
	\begin{align*}
		\pr{\abs{\rndsu_i\cap\LIVE_i}\geq\paren{\frac12+\eps}\cdot\secParam}
		\leq\pr{\abs{\rndsu_i\cap\LIVE_i}-\ex{\abs{\rndsu_i\cap\LIVE_i}}\geq\frac{\eps}{2}\cdot\secParam}
		\leq e^{-\frac{\eps^2}{2}\cdot\secParam}.
	\end{align*}
	Thus, except with negligible probability, user $i$ sampled less than 
	$$\paren{\frac12+\eps}\secParam=\paren{\changed{\frac{5}{8}}-\alpha}\secParam<\paren{\changed{\frac{15}{16}}-\frac{3\alpha}{2}}=\paren{1-2\alpha-\frac{\eps}{2}}\secParam$$
	users with consistent information.
	However, since we assume $i$ did not abort, it follows that \emph{all} honest users it sampled must be from $\LIVE_i$, resulting in a contradiction. Therefore, the event $|\LIVE_i|<(1/2+\eps/2)n$ occurs with negligible probability.
	
	To conclude the proof of \cref{item:agreement}, we next show that any user $j\notin\LIVE_i$ aborts Step~\ref{step:queryset_consistency}, except with negligible probability. Indeed, the probability that $j$ does not abort is at most
	$$\pr{\queryset_j\cap\LIVE_i=\emptyset}\leq\pr{|\LIVE_i|<(1/2+\eps/2)n}+\pr{\queryset_j\cap\LIVE_i=\emptyset\cond|\LIVE_i|\geq(1/2+\eps/2)n}.$$
	The above quantity is negligible since
	$$\pr{\queryset_j\cap\LIVE_i=\emptyset\cond|\LIVE_i|\geq(1/2+\eps/2)n}\leq\paren{1-\frac{(1/2+\eps/2)n}{n}}^{\secParam}<e^{-\frac{1+\eps}{2}\cdot\secParam}.
	$$
\end{proofof}


\section{Towards Reducing User Complexities: Committee Election}\label{sec:committee}
%
Similarly to \cite{BCDH18,BCG21,BGT13}, a key component of our construction is a committee election protocol. Our protocol follows ideas similar to Feige's lightest-bin protocol~\cite{Feige99}, but is adjusted to the special \LOVE setting, where the users are polylogarithmic (in the number of users $n$) and communication be blocked by the server. These requirements make the process of jointly agreeing on a single committee quite challenging.

\changed{We show how the parties can perform this task assuming the \PCSG event occurred, namely, the parties are replaced with personal committees and they hold together a graph with small diameter. %
}
Specifically, even if the server is malicious, then with overwhelming probability more than \changed{7/8} of users in the elected committee are honest (note that this guarantee is not the same as in Feige's protocol). In addition, the committee will be of size $\Theta(\secParam)=\omega(\log n)$.

We next formally define the committee election functionality $\felectstar$. To simplify the presentation, we split the description into two cases, depending on whether the server is honest or malicious. Note that unlike in Feige's protocols, here a malicious server can abort some honest users, hence the users do not know the number of remaining users. %

\paragraph{Ideal world for electing a committee assuming an honest server.}

We next describe the interaction in the ideal world of \felectstar assuming the server is honest. Let $\adv$ be an adversary corrupting a subset $\corrset\su[n]$ of the users, and which does \emph{not} corrupt the server.
\begin{description}
	\item[Inputs:] All parties hold the security parameter $1^{\secParam}$ and the number of users (held in unary by the server and in binary by the users). The adversary is given auxiliary input $\aux\in\zos$.
	
	\item[Adversary aborts some malicious users:] The adversary sends to $\trustp$ a set $\LIVE\su[n]$ of alive users, where $\honset\su\LIVE$.
	
	\item[Trusted party chooses a committee:] The trusted party does the following.
	\begin{enumerate}[topsep=0pt]
		\item Set $b=\ceil{|\LIVE|/\comsize}$ to be the number of bins.
		
		\item For every alive (possibly malicious) user $i\in\LIVE$, sample a bin $x_i\in[b]$ independently and uniformly at random.
		
		\item Set $\C$ to be the lightest bin and send it to all parties (note that $|\C|\leq\secParam$ by the pigeonhole principle).
	\end{enumerate}
	
	\item[Output:] Each honest user and the server output whatever it received from the trusted party, the corrupted parties output nothing, and the adversary outputs some function of its view.
\end{description}

\paragraph{Ideal world for electing a committee assuming a malicious server.}
We next describe the interaction in the ideal world of \felectstar assuming the server is corrupted. Let $\adv$ be an adversary corrupting a subset $\corrset\su[n]$ of the users, which also corrupts the server. %

\begin{description}
	\item[Inputs:] All parties hold the security parameter $1^{\secParam}$ and the number of users (held in unary by the server and in binary by the users). The adversary is given auxiliary input $\aux\in\zos$.
	
	\item[Adversary aborts some users and choose number of bins:] The adversary either sends $\abort$ to $\trustp$ or sends it a set $\LIVE\su[n]$ of alive (possibly malicious) users of size at least $\sabs{\LIVE}\geq\changed{7n/8}$, and the number of bins $b$, where $b\geq\ceil{\changed{7n/8\secParam}}$.
	
	\item[Trusted party splits users into bins:] If $\trustp$ received $\abort$, then it sends $\bot$ to all parties and halts. Otherwise, for every alive honest user $i\in\LIVE\cap\honset$, the trusted party samples a bin $x_i\in[b]$ independently and uniformly at random. It then sends $\sset{(i,x_i)}_{i\in\LIVE\cap\honset}$ to the adversary.
	
	\item[The adversary chooses a bin and adds corrupted users:] For every $j\in[b]$, let $\bin_j=\sset{i\in\LIVE\cap\honset:x_i=j}$ denote the set of honest users that sampled the \jth bin. The adversary sends back either $\abort$ or a set $\C$ of size at most $\comsize$, such that there exists $j\in[b]$ for which it holds that $\bin_j\su\C$ and $\C\setminus\bin_j\su\corrset$ (note that such a $j$ always exists if the server is honest, hence, when such a $j$ does not exist then it implies the server must behaved maliciously).
	
	\item[The trusted party sends outputs:] If $\trustp$ received $\abort$ it sends $\bot$ to all parties and halts. Otherwise, it sends $\C$ to all users in $\LIVE$ and sends $\bot$ to all other users.
	
	\item[Output:] Each honest user outputs whatever it received from the trusted party, the corrupted parties output nothing, and the adversary outputs some function of its view.
\end{description}

We prove that conditioned on the event \PCSG, the parties can compute $\felectstar$ securely. Formally, we prove the following.
\begin{theorem}\label{thm:committee}
	Let $\alpha<\changed{1/8}$ be a constant (\ie it does not depend on $\secParam$ or the number of users $n$). Then there exists an $n$-user \LOVE protocol computing $\felectstar$, such that conditioned on the event \PCSG, the protocol is $\alpha n$-secure. %
\end{theorem}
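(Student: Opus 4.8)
The plan is to exhibit the committee-election protocol \Committee{} and analyze it in the hybrid model in which, by the event $\PCSG$, every alive user $i$ has already been replaced by its personal committee $\P_i$ acting as one virtual party through the next-message functionalities $\nxtmsg_i$ together with a $(1-\alpha-\eps/2)\secParam$-out-of-$\secParam$ secret sharing of $\P_i$'s view; security in the real model then follows by the composition theorem (\cref{thm:Composition}), using the implementation of the $\nxtmsg_i$ calls from \cref{sec:comm_love}. In this hybrid model \Committee{} is the $\PCSG$-aware analogue of Feige's lightest-bin protocol: each $\P_i$ samples a uniform bin $b_i\in[b]$, where $b$ is the number of bins induced by the (server-reported) count of alive users, and sends $b_i$ to the server; the server replies with the set $\C$ of PCs in the lightest bin and its index $b^\ast$; each $\P_i$ then checks (i) $i\in\C\iff b_i=b^\ast$, (ii) $|\C|\le\comsize$, (iii) that every neighbor in the $\PCSG$ graph reported the same $(\C,b^\ast)$, and (iv) runs $\FILreps$ rounds of aliveness checks against its neighbors, aborting at the first failure. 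An alive $\P_i$ that never aborts outputs $\C$, as does the server.

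First I would isolate two structural facts about executions of \Committee{} under $\PCSG$. \emph{All-or-nothing abort:} the PCs carrying an honest majority (those of alive honest users, and---when the server is honest---those of \emph{all} alive users) induce a connected subgraph of diameter at most $\FILreps$, so if one of them aborts in some round then, through the aliveness checks, all of them have aborted $\FILreps$ rounds later; hence they either all abort or all survive. \emph{Agreement:} if none aborts, connectivity and check (iii) force a single pair $(\C,b^\ast)$ common to all honest-majority PCs; check (i) forces $\C\cap\honset\cap\LIVE$ to equal $\bin_{b^\ast}\eqdef\sset{i\in\honset\cap\LIVE:b_i=b^\ast}$ with $\C\setminus\bin_{b^\ast}\subseteq\corrset$; and check (ii) forces $|\C|\le\comsize$. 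When the server is honest, the honest majority inside \emph{every} PC makes every $\nxtmsg_i$ behave honestly, so all bins are uniform, no PC ever aborts, and $\C$ is genuinely the lightest bin.

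For the \emph{honest-server} case the simulator $\Sim$ runs $\adv$ internally against simulated-honest PCs and a simulated-honest server, reads off the set $\LIVE$ of alive users (all honest ones by $\PCSGhon$, plus whichever malicious ones remained), sends $\LIVE$ to the trusted party and receives $\C$. The adversary's view consists only of the secret shares held by its $<\secParam/6$ corrupted members inside each PC it touches---independent of that PC's bin by privacy of the secret sharing---together with $\C$, the membership bits of the PCs it partly controls, the fixed $\PCSG$ graph, and the aliveness messages; $\Sim$ reproduces exactly this, invoking the secret-sharing simulator for the shares and reading membership off the $\C$ returned by the trusted party. Indistinguishability is immediate: in both worlds $\C$ is distributed as the lightest of $|\LIVE|$ i.i.d.\ uniform bins, and conditioned on $\C$ the rest of $\adv$'s view is independent of the bins, so the two ensembles coincide up to errors already absorbed into $\PCSG$.

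The \emph{malicious-server} case is the main obstacle, and it is handled by using the extra information the functionality hands $\Sim$. Here $\Sim$ also emulates the corrupted server. It fixes the alive set it will address---at least $(5/6-\alpha)n$ alive honest users by $\PCSGmal$, padded with corrupted users to meet the functionality's size requirement, which is \wlg{} since $\adv$ anyway controls membership of corrupted users in $\C$---and sends this set and the number of bins used in the execution to the trusted party. It receives the sampled honest bins $\sset{(i,x_i)}_{i\in\LIVE\cap\honset}$ and plants them as the bins of the honest PCs in its internal simulation; this re-embedding is the delicate step, and it is sound precisely because in the real protocol each honest PC's bin is freshly and uniformly sampled and, until it is sent, independent of everything the $<\secParam/6$ corrupted members and the server see (threshold secret-sharing privacy). $\Sim$ then runs $\adv$'s server, reads off the announced $(\C,b^\ast)$, and tests whether the honest PCs would abort---membership inconsistent with the $x_i$'s, $|\C|>\comsize$, a neighbor disagreement, or a failed aliveness round---sending \abort{} to the trusted party if so, and otherwise sending the $\C$ that by the agreement fact has the legal form $\bin_{b^\ast}\cup(\C\cap\corrset)$ with $|\C|\le\comsize$. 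Proving indistinguishability then reduces to checking that this re-embedding is distribution-preserving and that $\Sim$'s abort test matches the protocol's abort conditions; both follow from the two structural facts together with secret-sharing privacy and the binding property of the vector commitments, with a negligible residual gap---as is the probability that $\PCSG$ fails at all (\cref{thm:PC_secure}). Finally I would remark that the ``vast honest majority of the elected committee'' guarantee of \cref{thm:intro_committee}, a property of $\felectstar$ rather than of security, follows by the same Chernoff-and-union-bound argument as \cref{lem:FeigeHonestEvenSplit}: over the trusted party's sampling every bin has more than $\comsize/2$ honest users except with negligible probability, so $|\bin_{b^\ast}|$ dominates $|\C\cap\corrset|$ even after a further $1/6$-fraction of $\C$ is blocked.
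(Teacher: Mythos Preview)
Your proposal is correct and follows essentially the same approach as the paper: you present the \Committee{} protocol in the $\nxtmsg$-hybrid model (conditioned on $\PCSG$), prove $\alpha n$-security there via simulators that mirror the paper's constructions for the honest-server and malicious-server cases, and appeal to the composition theorem together with the implementation of $\nxtmsg$ in \cref{sec:comm_love} to conclude. Your ``all-or-nothing abort'' and ``agreement'' structural facts are exactly the reasoning the paper uses, and your malicious-server simulator---receiving the honest bins from $\trustp$, planting them, and then reading off $(\C,b^\ast)$ from the adversary's server---matches the paper's simulator step for step; the paper's write-up is just somewhat terser and does not separately isolate the two structural facts.
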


The proof of \cref{thm:committee} is presented in \cref{subsec:committee}. Then, in \cref{sec:many_committees} we present a protocol for electing many committees, such that each committee will be assigned to one user. This will later be used in \cref{sec:proof,sec:combine} for the construction of secure protocols for general \LOVE functionalities. Before proving the theorem, we first show that it contains a vast majority of honest users. In fact, we take into account the number of users stated to be alive by the adversary. This will later allow us to claim that users in the committee can securely compute any functionality (see \changed{\cref{sec:comm_love}}). Recall that we let $\eps=\changed{1/8}-\alpha>0$.
\begin{lemma}\label{lem:committee_vast_hon_maj}
	Let $\alpha<\changed{1/8}$, let $\adv$ be an adversary corrupting $\alpha n$ of the users (which possibly corrupts the server), and let $\beta=|([n]\setminus\LIVE)\cup\corrset|/n$ denote the fraction of users that are either malicious or not alive, among the set of \changed{all} users.  Then in an execution of $\felectstar$ holds that if the parties output a committee $\C$, then
	$$\pr{\sabs{\C\setminus\corrset}<\paren{1-\beta-\eps/2}\cdot\secParam} <n\cdot e^{-\frac{\eps^2}{8}\cdot\secParam}.$$
	In particular, $\C$ contains strictly more than \changed{3/4} fraction of honest users, except with negligible probability. Moreover, if the server is honest then all honest users always output $\C$.
\end{lemma}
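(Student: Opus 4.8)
The plan is to directly analyze the two ideal-world experiments for $\felectstar$ and show that in each case the lightest bin cannot contain too few honest users, using Feige's analysis (via \cref{lem:FeigeHonestEvenSplit}). First I would set up notation: let $\LIVE$ be the set of alive users reported by the adversary, let $b$ be the number of bins, and recall that by the definition of the ideal functionality, if the server is honest then $b=\ceil{|\LIVE|/\secParam}$ and every alive user (honest or malicious) samples a bin uniformly, while if the server is malicious then $b\geq\ceil{5n/(6\secParam)}$ and only the alive honest users $\LIVE\cap\honset$ sample a bin uniformly, after which the adversary picks a set $\C$ containing some honest bin $\bin_j$ entirely, with $\C\setminus\bin_j\su\corrset$. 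In both cases the honest users inside $\C$ are exactly those honest users that landed in the chosen bin $j$, so $|\C\setminus\corrset|\geq|\bin_j|=X_j$ where $X_j$ denotes the number of alive honest users that sampled bin $j$.

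The key step is then a Chernoff/Hoeffding bound over the random bin assignments of the alive honest users, followed by a union bound over all $b$ bins. There are $|\LIVE\cap\honset|\geq(1-\beta)n$ alive honest users, each landing in a uniformly random bin among $b\leq \ceil{n/\secParam}$ (in the honest-server case) or $b$ arbitrary but with $b\geq\ceil{5n/(6\secParam)}$ and still $b=O(n/\secParam)$ bins (the adversary is constrained, so I would note $b$ is polynomially bounded; this is where I must be slightly careful about the malicious-server case, since the adversary chooses $b$ — but the functionality guarantees $b\geq\ceil{5n/(6\secParam)}$, and the relevant expectation is $\E[X_j]=|\LIVE\cap\honset|/b$, which is at least roughly $(1-\beta)\secParam/1$ up to the rounding, so I'd want $\E[X_j]\geq(1-\beta-\eps/4)\secParam$ for the chosen-bin analysis). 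Concretely I would apply the one-sided Hoeffding/Chernoff bound (\cref{fact:hoeffding} applied to the random subset of honest users landing in a fixed bin, or directly \cref{fact:chernoff}) to get
$$\pr{X_j<\paren{1-\beta-\eps/2}\secParam}\leq e^{-\frac{\eps^2}{8}\secParam}$$
for each fixed bin $j$, and then a union bound over $j\in[b]$, with $b\leq n$, gives the stated $n\cdot e^{-\eps^2\secParam/8}$. This is essentially a restatement of \cref{lem:FeigeHonestEvenSplit} with the parameters $n'\mapsto\secParam$, $\alpha\mapsto\beta$, $\alpha'-\alpha\mapsto\eps/2$; I would either cite that lemma directly or reprove the one-line Chernoff bound.

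To conclude the "in particular" statement, I would observe that $\beta\leq 2\alpha$ (the malicious users are at most $\alpha n$ and, conditioned on \PCSG, the non-alive honest users are at most another $\alpha n$, since by \cref{thm:PC_secure}/\cref{lem:PC_secure_mal} at least $(5/6-\alpha)n$ honest users remain alive), hence $1-\beta-\eps/2\geq 1-2\alpha-\eps/2 = 1-2\alpha-(1/6-\alpha)/2 = 11/12 - 3\alpha/2 > 2/3$ for $\alpha<1/6$; combined with $|\C|\leq\secParam$ this gives that strictly more than $2/3$ of $\C$ is honest except with negligible probability. Finally, the "moreover" clause (honest server $\Rightarrow$ all honest users output $\C$) is immediate from the honest-server ideal world, where $\trustp$ sends the same $\C$ to all parties and, conditioned on \PCSG, no honest user aborts. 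The main obstacle I anticipate is bookkeeping the constants and the rounding in $b$ carefully enough in the malicious-server case to land the expectation $\E[X_j]$ above $(1-\beta-\eps/4)\secParam$ so that the deviation bound yields the clean $\eps/2$ slack in the statement; everything else is a routine Chernoff-plus-union-bound argument lifted from Feige's protocol.
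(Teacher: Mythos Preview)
Your approach is essentially the paper's: apply \cref{lem:FeigeHonestEvenSplit} (Chernoff plus union bound over bins) to the alive honest users, then read off the ``in particular'' by bounding $\beta$. The paper's entire proof is one line citing \cref{lem:FeigeHonestEvenSplit} and then computing $\beta\le 1/6+\alpha<1/3$.

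One correction to your bookkeeping: the bound $\beta\le 2\alpha$ is not valid. In the malicious-server ideal world the only constraint is $|\LIVE|\ge 5n/6$, so $|[n]\setminus\LIVE|$ can be as large as $n/6$, and these blocked users may all be honest; hence the number of non-alive honest users can be $n/6$, not $\alpha n$. The correct bound (and the one the paper uses) is $\beta=\tfrac{|([n]\setminus\LIVE)\cup\corrset|}{n}\le \tfrac{1}{6}+\alpha$. Your conclusion for the ``in particular'' clause still goes through with this corrected bound: $\beta+\eps/2\le \tfrac16+\alpha+\tfrac12(\tfrac16-\alpha)=\tfrac14+\tfrac{\alpha}{2}<\tfrac13$, so $1-\beta-\eps/2>2/3$.

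Your worry about the adversary choosing $b$ in the malicious case is legitimate; the paper does not explicitly upper-bound $b$ in the ideal-world description either, and the application of \cref{lem:FeigeHonestEvenSplit} implicitly uses $b\le\lceil n/\secParam\rceil$ (which is what the actual protocol and simulator produce, since $b=\lceil n'/\secParam\rceil$ with $n'$ the claimed number of alive users). So this is a shared looseness rather than a gap in your argument relative to the paper's.
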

The proof follows from an immediate application of \cref{lem:FeigeHonestEvenSplit}. \changed{To see why the fraction of honest users in $\C$ is more than $3/4$, observe that
$$\beta=\frac{|([n]\setminus\LIVE)\cup\corrset|}{n}\leq\frac18+\alpha<\frac14.$$}


\subsection{Committee Election Protocol}\label{subsec:committee}
We are now ready to introduce our protocol for electing a committee of size $\comsize$. Given the event \PCSG, the protocol for electing a committee proceeds as follows. First, the server will send to all personal committees the number of remaining users. Each PC then checks with its neighbors if they received the same number. For $O(\log n/\log\secParam)$, each PC notify its neighbors that it did not abort. 

The PCs then execute the following variant of Feige's protocol: each PC randomly choose a bin and send it to the server. The server will send back the set of users $\C$ who chose the lightest bin alongside the index of this bin. Each PC $\P_i$ then verifies that $i\in\C$ if $\P_i$ chose the bin, and $i\notin\C$ otherwise. Additionally, the PCs will check that $\C$ is not larger than $\secParam$. Finally, each PC will ask if its neighbors on the graph are still active for $O(\log n/\log\secParam)$ steps. 

Similarly to \cref{proto:setup}, we describe the protocol in a hybrid world, where the hybrid functionalities compute the ``next-message'' function of each PC.%
We also abuse notions and describe the protocol as if each PC is a single party. Formally, whenever we say that the server sends a message to $\P_i$, it means that it (supposedly) shares it among the users of $\P_i$. Similarly, when we say that $\P_i$ sends a message to the server, it means that every user in $\P_i$ sends its share of the message to the server. Finally, whenever a PC $\P_i$ sends a message $\mathsf{msg}$ to another PC $\P_j$, it means that each user in $\P_i$ shares its share of $\mathsf{msg}$ among the users in $\P_j$.
We let $\nxtmsg=\sset{\nxtmsg_i}_{i=1}^n$ and describe the protocol in the $\nxtmsg$-hybrid model. To obtain a protocol in the real world (\ie without access to $\nxtmsg$), the parties will implement each call via the protocol from \changed{\cref{sec:comm_love}}.

\begin{protocol}{\Committee}\label{protocol:comm}
{\bf Common inputs:} All parties hold the security parameter $1^{\secParam}$ and the number of users $n$ (held in binary by the users, and in unary by the server). 

{\bf Event assumption:} We assume that the event \PCSG occurred. Recall that an honest server holds the list $\LIVE=\sset{(i,\P_i)}_i$ of all alive users and their PCs. Additionally, it holds the graph $G=(\LIVE,E)$ induced by the neighbors of each alive user. The neighbors $\totalsample_i$ of users $i$ is held by user $i$ and all users in its personal committee $\P_i$.

\smallskip
{\bf The users agree on the number of alive users:}
\begin{enumerate}[topsep=0pt]
	\item The server sends to all PCs the number of alive users. Let $n'$ denote this value.
	
	\item Each PC $\P_i$ sends $n'$ to all of its neighbors, \ie to all $\P_j$ where $j\in\totalsample_i$.
	
	\item\label{step:compare_size} A PC aborts if it received a value from one of its neighbors that differs from the value it received from the server.
	
	\item\label{step:check_alive_committee_size} For \FILreps iterations, each personal committee $\P_i$ sends \alive to every $\P_j$ where $j\in\totalsample_i$. At any iteration, if $\P_i$ did not receive the message \alive from any of its neighbors, it aborts.
\end{enumerate}

\smallskip
{\bf The parties run a variant of Feige's protocol:}
\begin{enumerate}[topsep=0pt]
	\item Let $b=\ceil{n'/\secParam}$ denote the number of bins. Each PC $\P_i$ samples a bin $x_i\in[\numbin]$ uniformly at random and sends it to the server.
	
	\item The server:
	\begin{enumerate}[topsep=0pt]
		\item Let $x=\minority\sof{x_1,\ldots,x_{|\cV|}}$.
	
		\item Let $\C=\sset{i\in\cV:x_i=x}$.
		
		\item Send $\C$ and $x$ to all personal committees.
	\end{enumerate}
	
	\item Each personal committee $\P_i$ sends $\C$ and $x$ to each of its neighbors, \ie to every $\P_j$ where $j\in\totalsample_i$.
	
	\item A personal committee $\P_i$ sends $\bot$ to every neighbor $\P_j$ where $j\in\totalsample_i$ and aborts if one of the following holds.
	\begin{itemize}[topsep=0pt]
		\item It received different $\C$ or $x$ from one of its neighbors, or
		
		\item $\sabs{\C}>\secParam$, or
		
		\item $x_i=x$ and $i\notin\C$, or
		
		\item $x_i\ne x$ and $i\in\C$.
	\end{itemize}
	
	\item\label{step:check_alive_committee} For \FILreps iterations, each personal committee $\P_i$ sends \alive to every $\P_j$ where $j\in\totalsample_i$. At any iteration, if $\P_i$ did not receive the message \alive from any of its neighbors, it aborts.
	
	\item Each user outputs $\C$. %
\end{enumerate}
\end{protocol}

Clearly, since we assume \PCSG to occur, no user is in more than $3\secParam$ PCs, thus the running time of all users is polylogarithmic in $n$. The next lemma states that the protocol securely computes $\felectstar$. In \changed{\cref{sec:comm_love}}, we show how implement the $\nxtmsg$ hybrid functionalities, with security under parallel composition. Together with the composition theorem, this proves \cref{thm:committee}. %

\begin{lemma}\label{thm:committee_hybrid}
	Let $\alpha<\changed{1/8}$ be a constant. Then protocol \Committee computes $\felectstar$ with $\alpha n$-security in the $\nxtmsg$-hybrid model, conditioned on the event \PCSG.
\end{lemma}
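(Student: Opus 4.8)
The plan is to construct a simulator $\Sim$ in the $\felectstar$-ideal world for any real-world adversary $\adv$ attacking \Committee in the $\nxtmsg$-hybrid model, splitting the argument according to whether the server is honest or corrupted. The key structural fact I would exploit throughout is that, conditioned on \PCSG, every personal committee that is active has at most $\secParam/6 < (\alpha + \eps/2)\secParam$ corrupted users, so each $\nxtmsg_i$ call behaves like an honest party: it either outputs $\bot$ to everyone (if too few shares arrive) or computes the prescribed next-message function honestly and hands out fresh Shamir shares. Thus, from the simulator's point of view, the PCs are effectively $n$ additional honest parties wired into a graph of diameter at most \FILreps, and the only ``real'' adversarial power is (i) the corrupted server's choice of $n'$, of $\C$, and of $x$, and (ii) its ability to block inter-PC messages, which by the diameter bound means that if \emph{any} active honest PC aborts then within \FILreps\ rounds \emph{all} active honest PCs abort (via the \alive-propagation steps).

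For the \textbf{honest-server} case, I would have $\Sim$ run $\adv$ internally, emulating all $\nxtmsg_i$ calls honestly and emulating the honest server; whatever set of malicious users $\adv$ causes to be ``not alive'' during \PerComm/\CommSetup determines the set $\LIVE$ that $\Sim$ forwards to the trusted party. $\Sim$ receives $\C$ from the trusted party and must make the internal execution consistent with it: since in the real execution the bins of honest PCs are uniform and independent (the $\nxtmsg_i$ calls sample them honestly) and the server picks the lightest bin, $\Sim$ can sample the honest PCs' bins conditioned on the lightest bin being exactly $\C\cap\honset$, then let $\adv$ choose the bins of malicious PCs; because $\C$ from the functionality was itself produced by the identical lightest-bin rule on $|\LIVE|$ uniform bins, this conditional sampling is statistically faithful, and indistinguishability of the views reduces to the hiding/binding of the commitments used inside \PerComm\ (already handled by \PCSG\ holding). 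No honest PC aborts here since the server is honest, matching the ideal world where honest users always output $\C$.

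For the \textbf{corrupted-server} case, $\Sim$ emulates the $\nxtmsg_i$ calls honestly, observes the value $n'$ the corrupted server sends (if neighbors disagree, or if the \alive-propagation fails anywhere, $\Sim$ sends $\abort$ to the trusted party, since by the diameter bound all active honest PCs abort); otherwise it checks $n' \ge 5n/6$, which holds by \PCSG\ (at least $(5/6-\alpha)n$ honest users are alive and every active honest user received the same $n'$ from the server — else it aborts), sets $b = \lceil n'/\secParam\rceil \ge \lceil 5n/6\secParam\rceil$, sends $\LIVE$ and $b$ to the trusted party, receives the honest users' sampled bins $\{(i,x_i)\}$, feeds these into the emulated $\nxtmsg_i$ calls, lets $\adv$ (as the server) produce $\C$ and $x$, and then checks the four abort conditions of the protocol against what the active honest PCs would see: $|\C|\le\secParam$; for each active honest $i$, $x_i = x \iff i\in\C$; and graph-consistency of $\C,x$ across neighbors. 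If all pass, the induced $j := x$ satisfies $\bin_j \subseteq \C$ and $\C\setminus\bin_j\subseteq\corrset$ (precisely the functionality's admissibility condition), so $\Sim$ forwards $\C$; if any active honest PC would abort, $\Sim$ sends $\abort$. Correctness of output delivery in the non-abort case follows because all active honest users received the same $\C$ (diameter-bound consistency) and output it, exactly as the functionality dictates.

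The main obstacle I expect is the faithful emulation of the parallel $\nxtmsg_i$ hybrid calls together with the subtle point that these are \emph{stateful} functionalities whose internal Shamir shares must stay consistent across rounds while the corrupted server adaptively blocks messages: one must argue that blocking a message between PCs is indistinguishable from (and is in fact handled identically to) the case where a share simply fails to arrive, so that the ``$\bot$ to everyone or honest output'' dichotomy is preserved, and that the adversary never learns more than a threshold of shares of any honest PC's view (which is where $\secParam/6 < (1-\alpha-\eps/2)\secParam$ privacy of the secret sharing is used). Once the $\nxtmsg_i$'s are argued to behave as honest virtual parties, the rest is a fairly mechanical reduction to the statistical faithfulness of the conditional bin-sampling plus the binding of the vector commitments, with the composition theorem (\cref{thm:Composition}) then lifting the $\nxtmsg$-hybrid result to the real world once \cref{sec:comm_love} supplies parallel-composable implementations.
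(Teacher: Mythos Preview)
Your malicious-server case is essentially the paper's proof: the simulator reads off the server's $n'$ (aborting if it is inconsistent across honest PCs or below $5n/6$), forwards $\LIVE$ and $b=\lceil n'/\secParam\rceil$ to the trusted party, receives the honest users' bins, feeds them to the adversary, and checks the four abort conditions on the returned $(\C,x)$ before forwarding $\C$ or $\abort$. The paper also explicitly simulates the malicious PCs' messages to their neighbors in the compare and \alive\ phases, which you handle under ``graph-consistency,'' and then argues the two worlds are \emph{identically} distributed conditioned on \PCSG---no computational indistinguishability is needed at this level.

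Your honest-server case is overcomplicated and contains a misreading of the hybrid model. First, when the server is honest, $\nxtmsg_i$ computes honestly whenever the server \emph{or} user $i$ is honest, so even the PCs of corrupted users sample uniform bins; the adversary does \emph{not} get to choose the bins of malicious PCs. Second, and more importantly, the conditional bin-sampling is unnecessary: in the $\nxtmsg$-hybrid, the adversary's entire view consists of at most $(\alpha+\eps/2)\secParam$ Shamir shares from each PC, which is below the $(1-\alpha-\eps/2)\secParam$ threshold and hence independent of the reconstructed values (bins, $\C$, \alive\ messages). The paper's argument is just this: the adversary's view is a collection of uniform random strings independent of the protocol's actual state, and the marginal distribution of $\C$ is identical to the ideal functionality's (uniform bins over $\LIVE$, lightest bin). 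So the simulator need not ``make the internal execution consistent'' with the trusted party's $\C$ at all.

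Finally, the ``main obstacle'' you flag---stateful $\nxtmsg_i$ hybrids, blocked messages vs.\ missing shares, sub-threshold privacy across rounds---is not an obstacle in the $\nxtmsg$-hybrid model, because the lemma is stated there and the functionalities are ideal by assumption. Those concerns belong to \cref{sec:comm_love} and to the composition step, which is outside the scope of this lemma.
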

\begin{proof}
	Fix an adversary $\adv$ corrupting a subset $\corrset\su[n]$ of the users, of size at most $|\corrset|\leq\alpha n$. We separate the proof into two cases, depending on whether the server is honest or corrupted.
	
	\paragraph{Honest server.}
	Since \PCSG is assumed to have occurred, it follows that all PCs honest users are active, and all PCs (whether of malicious or honest users) contain at least $(1-\alpha-\eps/2)\secParam>\changed{7\secParam/8}$ honest users. Therefore, the adversary cannot prevent any of the calls to each $\nxtmsg_i$ functionality to be delayed. Furthermore, at any round of the protocol, the adversary sees at most $(\alpha+\eps/2)\secParam<\changed{\secParam/8}$ shares from each PC. Thus, its view is comprised of only random messages sampled uniformly at random and independently.
	
	\paragraph{Malicious server.}
	We next define the simulator $\Sim_{\adv}$. To simplify the presentation, we do not concern ourselves with the view of malicious users in honest PCs, similarly to the case where the server was honest, they correspond to shares that reveal no information to $\adv$. Furthermore, we assume the server does not block any message sent from one honest PC to another, as this will immediately cause all honest users to abort. The simulator does as follows.
	\begin{enumerate}
		\item{\bf Simulate ``the users agree on the number of alive users'':}
		\begin{enumerate}			
			\item Query the adversary $\adv$ for the number of alive users that it sends to the honest PCs.\footnote{Formally it is given as shares, on for each party. However, it is clear that the simulator can reconstruct the message.} If two of them differ or the number is less than \changed{$7n/8$}, then send \abort to the trusted party $\trustp$, output whatever $\adv$ outputs, and halt.
			
			\item Otherwise, let $n'$ denote the number of alive users as was sent by the server. 
			
			\item Query $\adv$ for the messages that each malicious PC sends to its neighbors (recall that messages sent to other PCs are considered DDoS attacks, and where handled in \cref{sec:Background}). If any of the message differ from $n'$ during the simulation of Step~\ref{step:compare_size}, or differ from \alive during the simulation of Step~\ref{step:check_alive_committee_size}, then send \abort to $\trustp$, output whatever $\adv$ outputs, and halt.
			
			\item Otherwise, send to $\trustp$ the set $\LIVE$ of alive users as given by the event \PCSG (note that it could be the case that $n'\ne|\LIVE|$), and the number of bins $b=\ceil{n'/\secParam}$.
			
			\item The simulator obtains from the trusted party the bins that each alive honest user sampled. That is, it receives $\sset{(i,x_i)}_{i\in\LIVE\cap\honset}$, where $x_i\in[b]$.
		\end{enumerate}
		
		\item{\bf Simulate ``the parties run a variant of Feige's protocol'':}
		\begin{enumerate}
			\item Send the collection of choices $\sset{(i,x_i)}_{i\in\LIVE\cap\honset}$ to $\adv$.
			
			\item The adversary replies with a set $C$ and a bin-index $x$ for every alive honest PC. If one of the following occur, then send \abort to $\trustp$, output whatever $\adv$ outputs, and halt.
			\begin{itemize}
				\item Two PCs received different messages.
				\item $|\C|>\secParam$.
				\item There exists $i\in\LIVE\cap\honset$ such that either $x_i=x$ and $i\notin\C$, or $x_i\ne x$ and $i\in\C$.
			\end{itemize}
		
			\item Query the adversary for the message that the malicious PCs sends during Step~\ref{step:check_alive_committee}. If any of them is not \alive, then send \abort to $\trustp$, output whatever $\adv$ outputs, and halt.
			
			\item Otherwise, send $\C$ to $\trustp$, output whatever $\adv$ outputs, and halt.
		\end{enumerate}
	\end{enumerate}

	We claim that real (hybrid) world is identically distributed as the ideal world. Clearly, the messages the adversary sees in the real world are identically distributed as the messages it receives from the simulator in the ideal world. In particular, it responds with the same messages. Moreover, by the assumption that the event \PCSG occurred, it follows that if any PC does not send the message \alive in the protocol at Steps~\ref{step:check_alive_committee_size} and \ref{step:check_alive_committee}, then all users abort. By the construction of the simulator, this occurs in the real world with exactly the same probability as in the ideal world. Finally, observe the committee $\C$ send by the simulator in the ideal world satisfy the constrains it must abide by, namely, it is of size at most $\comsize$, and there exists $j\in[b]$ for which it holds that $\bin_j\su\C$ and $\C\setminus\bin_j\su\corrset$, where $\bin_j=\sset{i\in\LIVE\cap\honset:x_i=j}$ is the \jth bin.
\end{proof}

\subsection{From a Single Committee to Many Committees}\label{sec:many_committees}

Recall that our final goal is to construct a secure protocol for computing an arbitrary function over the users' inputs. The first step towards constructing such a protocol, is to elect \emph{many committees}, where each will effectively act as an honest party, replacing one user. Once the committees are elected, we let each of them hold an input of one party, and then run a secure protocol with the server (see \cref{sec:proof,sec:combine}). We show how to securely elect $n+1$ committees so that each committee contains sufficiently many honest users, and can thus be treated as an honest party.

We next formally describe the functionality, denoted $\fmany$. Roughly, the first committee sampled by the functionality, is sampled similarly to $\felectstar$, while the other are sampled honestly regardless of whether the server is corrupted or not. The functionality then sends a different committee to each user. Additionally, the users holds the committees in a ``tree-like'' structure, that is, all users in one committee will hold all users in the corresponding children and parent committees.

\paragraph{Ideal world for electing many committees assuming an honest server.}

We next describe the interaction in the ideal world of \fmany assuming the server is honest. Let $\adv$ be an adversary corrupting a subset $\corrset\su[n]$ of the users, and which does \emph{not} corrupt the server.

\begin{description}
	\item[Inputs:] All parties hold the security parameter $1^{\secParam}$ and the number of users (held in unary by the server and in binary by the users). The adversary is given auxiliary input $\aux\in\zos$.
	
	\item[Adversary aborts some malicious users:] The adversary sends to $\trustp$ a set $\LIVE\su[n]$ of alive users, where $\honset\su\LIVE$.
	
	\item[Trusted party samples committees:] The trusted party does the following.
	\begin{enumerate}[topsep=0pt]
		\item Set $b=\ceil{|\LIVE|/\comsize}$ to be the number of bins.
		
		\item For every alive (possibly malicious) user $i\in\LIVE$, sample a bin $x_i\in[b]$ independently and uniformly at random.
		
		\item Set $\C_0$ to be the lightest bin (note that $|\C_0|\leq\secParam$ by the pigeonhole principle).
		
		\item For every $i\in[n]$, sample a committee $\C_i\su\LIVE$ of size $\secParam$ uniformly at random.
		
		\item Send $\sset{\C_i}_{i=0}^n$ to the server, and for every user $i\in[n]$, send $\C_i$ to user $i$, and to all users in $\C_{2i+1}$ and $\C_{2i+2}$ (assuming $2i+1,2i+2\in[n]$) and vice versa.
	\end{enumerate}
	
	\item[Output:] Each honest user and the server output whatever they received from the trusted party, the corrupted parties output nothing, and the adversary outputs some function of its view.
\end{description}

\paragraph{Ideal world for electing many committees assuming a malicious server.}
We next describe the interaction in the ideal world of \fmany assuming the server is corrupted. Let $\adv$ be an adversary corrupting a subset $\corrset\su[n]$ of the users, which also corrupts the server. %

\begin{description}
	\item[Inputs:] All parties hold the security parameter $1^{\secParam}$ and the number of users (held in unary by the server and in binary by the users). The adversary is given auxiliary input $\aux\in\zos$.
	
	\item[Adversary aborts some users and choose number of bins:] The adversary either sends $\abort$ to $\trustp$ or sends it a set $\LIVE\su[n]$ of alive (possibly malicious) users of size at least $\sabs{\LIVE}\geq\changed{7n/8}$, and the number of bins $b$, where $b\geq\ceil{\changed{7n/8\secParam}}$.
	
	\item[Trusted party splits users into bins:] If $\trustp$ received $\abort$, then it sends $\bot$ to all parties and halts. Otherwise, for every alive honest user $i\in\LIVE\cap\honset$, the trusted party samples a bin $x_i\in[b]$ independently and uniformly at random. It then sends $\sset{(i,x_i)}_{i\in\LIVE\cap\honset}$ to the adversary.
	
	\item[The adversary chooses a bin and adds corrupted users:] For every $j\in[b]$, let $\bin_j=\sset{i\in\LIVE\cap\honset:x_i=j}$ denote the set of honest users that sampled the \jth bin. The adversary sends back either $\abort$ or a set $\C_0$ of size at most $\comsize$, such that there exists $j\in[b]$ for which it holds that $\bin_j\su\C_0$ and $\C_0\setminus\bin_j\su\corrset$ (note that such a $j$ always exists if the server is honest, hence, when such a $j$ does not exist then it implies the server must behaved maliciously).
	
	\item[Trusted party samples other committees and sends output:] If $\trustp$ received $\abort$ it sends $\bot$ to all parties and halts. Otherwise, it does the following.
	\begin{enumerate}
		\item For every $i\in[n]$, sample a committee $\C_i\su[n]$ of size $\secParam$ uniformly at random.
		
		\item Send $\sset{\C_i}_{i=1}^n$ to the server.
	\end{enumerate}

	\item[The adversary chooses whether to end to computation or not:] The adversary replies with either \abort or a new set of alive $\LIVE'\su[n]$ of users of size at least $|\LIVE'|\geq\changed{7n/8}$. In the former case, $\trustp$ send $\bot$ to all parties and halt. In the latter case, for every alive user $i\in\LIVE'$, send $\C_i$ to user $i$, and to all (alive) users in $\C_{2i+1}$ and $\C_{2i+2}$ (assuming $2i+1,2i+2\in[n]$) and vice versa.
	
	\item[Output:] Each honest user outputs whatever it received from the trusted party, the corrupted parties output nothing, and the adversary outputs some function of its view.
\end{description}

We prove the following result, asserting there exists a secure protocol for computing \fmany assuming the event \PCSG occurred.
\begin{lemma}\label{lem:many_committees}
	Let $\alpha<\changed{1/8}$ be a constant. Then there exists an $n$-user \LOVE protocol computing $\fmany$, such that conditioned on the event \PCSG, the protocol is $\alpha n$-secure.
\end{lemma}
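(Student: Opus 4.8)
The plan is to realize \fmany{} by a ``tree-of-committees'' construction built on top of the single-committee protocol \Committee{} of \cref{thm:committee}. First, the parties run \Committee{} to elect the root committee $\C_0$; conditioned on \PCSG{} this yields, by \cref{lem:committee_vast_hon_maj}, a committee of size $\comsize$ on which all alive users agree and which contains strictly more than a $2/3$ fraction of honest users (and when the server is malicious, any abort by an honest member of $\C_0$ propagates through the small-diameter graph so that all honest users abort). Hence $\C_0$ may be treated as an honest party that runs an honest-majority MPC among its own members (it retains a strict honest majority even after a malicious server blocks an additional $\alpha$-fraction of the users; the internal MPC instances are implemented via \cref{sec:comm_love}).

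Next I would implement a tree process in the spirit of \cite{BGT13,BCDH18,BCG21}: over $O(\log n)$ levels, each committee $\C_v$ spawns two children $\C_{2v+1},\C_{2v+2}$. Concretely, $\C_0$'s internal MPC generates a signature key pair $(\pk,\sk)$ and a master seed; it floods $\pk$ to all personal committees along the graph (taking $\FILreps=\polylog(n)$ rounds, since the graph has diameter at most $\FILreps$), and hands each child it spawns a secret-sharing of $\sk$ together with a pseudorandom sub-seed. To spawn $\C_{2v+1}$, committee $\C_v$ uses its sub-seed to sample (inside its MPC) a uniformly random size-$\comsize$ subset of $[n]$, signs the pair $(2v+1,\C_{2v+1})$ under $\sk$, and delivers the signed assignment to each newly chosen member, who accepts only if the signature verifies against $\pk$; symmetrically the members report back to $\C_v$ so that the ``vice versa'' adjacency required by \fmany{} holds. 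After every level, as in \cref{proto:setup}, the committees run $\FILreps$ rounds of \CheckAlive{} over the graph, so that if the server blocks enough messages to disrupt some committee's computation, the induced abort reaches all honest users. A Chernoff bound shows each user lies in only $\polylog(n)$ committees, so user complexity stays polylogarithmic and the server's work is $\poly(n)$.

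For the security argument I would distinguish the two server cases. When the server is honest, a Chernoff-plus-union-bound argument over the $n+1$ (pseudo)random committees shows that, except with negligible probability, every committee has strictly more than a $2/3$ fraction of honest users (the corrupted fraction among $\LIVE$ is below $1/4$), so every internal MPC succeeds and the adversary's view consists only of shares of committees it belongs to; the simulator extracts $\LIVE$, forwards it to $\trustp$, and simulates the rest from the received outputs. When the server is malicious, I would compose the simulator of \cref{thm:committee} for the root election (which extracts $\LIVE$ and $\C_0$ and detects aborts) with a simulator for the tree process that reads off the adversary's routing/blocking decisions and invokes \fmany{} with the appropriate $\LIVE,b,\C_0$ and post-tree alive set $\LIVE'$, simulating all honest-committee messages as uniform shares. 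Indistinguishability then reduces to (i) a hybrid over the pseudorandom generator, replacing the derived sub-seeds by true randomness and matching the uniform sampling of $\C_1,\dots,\C_n$ in the functionality; (ii) existential unforgeability of the signature scheme, which prevents the malicious server from making an honest user accept any committee assignment not produced by its honest parent; and (iii) the abort-propagation property of the graph (from \PCSG{} / \cref{lem:smalldiam}), which makes the real-world abort pattern coincide with the $\abort$ option in the ideal world. Finally, replacing the $\nxtmsg$ and coin-tossing hybrids by their implementations from \cref{sec:comm_love} and applying the composition theorem (\cref{thm:Composition}), all conditioned on \PCSG{}, completes the proof.

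I expect the main obstacle to be orchestrating the tree process against a malicious server that controls all routing: the server must not be able to drive a wedge between a parent committee and the fresh members of its child (nor make different honest users believe in different trees), while every honest user's communication and computation remain polylogarithmic. This is precisely what forces the combined use of digital signatures (to authenticate spawned committees), the small-diameter graph (to broadcast $\pk$ and to propagate aborts), and honest-majority MPC inside each committee, and getting the interplay of these three mechanisms right --- together with the concurrent-composition bookkeeping for the many committee-MPC instances and the ``good event'' conditioning, handled as elsewhere in the paper via the non-simulation-based treatment of the underlying coin-tossing --- is the technical heart of the lemma.
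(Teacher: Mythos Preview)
Your proposal is correct and follows essentially the paper's approach (protocol \ManyCommittees): elect $\C_0$ via \Committee, have it generate signature keys and flood $\pk$ over the PC graph, then grow a binary tree of committees that inherit $\sk$ and authenticate themselves to their assigned users via signatures, with abort propagation through the graph and a Chernoff/union-bound argument (via \cref{lem:committee_vast_hon_maj}) establishing honest majorities in all $n{+}1$ committees. The one minor divergence is that the paper does not use a PRG-derived master seed---each committee samples its children with fresh uniform randomness inside its honest-majority MPC (the $\nxtmsg$ hybrid)---so your hybrid step (i) is unnecessary in the paper's treatment.
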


We now briefly explain the idea for sampling many committees. First, the parties will generate a global common committee $\C_0$ known to all parties, using \cref{protocol:comm}. Then, $\C_0$ constructs a tree of $n+1$ committees, with itself as the root. This is done by having each node (committee) sample two children (\ie two new committees). Doing this for $\log n$ iterations results in the foramtion of  $n+1$ committees. By \cref{lem:committee_vast_hon_maj} and the union bound, with overwhelming probability, all of these committees will have a vast majority of honest users. Hence, they can be effectively viewed as honest parties. Moreover, it is highly unlikely that any user will appear in more then $\polylog(n)$ committees, hence efficiency is maintained.

After the tree of (honest) committees is completed, each committee is assigned to one user, that is, committee $\C_i$ will be assigned to user $i$. 
There are two main difficulties in doing so. The first one is that the committee must prove to the user it was sampled honestly during the creation of the tree. The second issue is that the committee cannot distinguish an aborting user from one that is blocked by the server, hence the input could be lost. 

To overcome to the first issue, we let $\C_0$ sample signature keys, and pass them to all committees in the tree via its children. Thus, each committee can prove honesty by signing a random message. For a user to be able to verify the signed message, $\C_0$ must deliver it the public key. \changed{To do so efficiently, we use the assumption that the event \PCSG occurred. Recall that the event asserts that each user has an associated personal committees replacing it, and furthermore, the PCs agree on a communication graph of diameter $O(\log n/\log\secParam)$, in the sense that each PC knows its set of neighbors.

With this in mind, we let all users in $\C_0$ send the public key to their respective PCs, as well as to the server. Each PC then compares the public key to the one sent to the server. If they are the same, then it passes it to all other PCs along the communication graph. Otherwise, it sends $\bot$ to all its neighbors (without aborting). If there is no value that appears more than (roughly) 3/4 of the time, then the users know the server is malicious. Finally, all PCs send the public key to their associated user (assuming the user is still active). By the small diameter property of the graph (assumed to hold by the \PCSG event), it follows that in $O(\log n/\log\secParam)$ rounds all users, will obtain the public key.}

As for the second issue, we let the committees request a response from its assigned user, and simply aggregate along the tree responds. If too many users did not respond, then they necessarily know the server is corrupted and they all abort (note that this holds true regardless of how many users aborted previously).

We next formally describe the above process. Similarly to protocols \CommSetup and \Committee, the protocol is described in a hybrid world, denoted $\nxtmsg$, where the hybrid functionalities compute the next-message function of each committee and each PC. The $\nxtmsg$ functionalities that correspond to the committees, are defined similarly to those that correspond to PCs, with the difference being the sharing threshold of the sharing scheme used for the inputs and outputs of the participating users. Specifically, the shares are generated in a $(1-\beta-\eps/2)\secParam$-out-of-$\secParam$ Shamir's secret sharing scheme, where $\beta=|\LIVE\cap\corrset|/|\LIVE|$ is the fraction of malicious users among the alive users.\footnote{Formally, the functionality cannot obtain $\LIVE$, and hence it cannot compute $\beta$. However, the fraction of alive users can be estimated by sampling $\secParam$ random users and requesting a response. To simplify the presentation, we assume the functionality receives $\beta$. We stress that the implementation provided in \changed{\cref{sec:comm_love}}, does not assume this.} 
In \cref{sec:comm_love}, we show how to implemented these functionalities. 

If at any point during the computation, the server blocks the communication between two committees (\ie the users in the receiving committee do not hold sufficiently many shares), then the receiving committee aborts. If the parent (global) committee aborts, it sends this information to $\secParam$ random PCs who passes this information through the communication graph. To simplify the presentation of the protocol, we will not concern ourselves with these issues. We are now ready to describe the protocol. It is presented in the $\sset{\nxtmsg,\felectstar}$-hybrid model.\\

\begin{protocol}{\ManyCommittees}\label{protocol:many_comm}
{\bf Common inputs:} All parties hold the security parameter $1^{\secParam}$ and the number of users $n$ (held in binary by the users, and in unary by the server). 

{\bf Event assumption:} We assume that the event \PCSG occurred. Recall that an honest server holds the list $\LIVE=\sset{(i,\P_i)}_i$ of all alive users and their PCs. Additionally, it holds the graph $G=(\LIVE,E)$ induced by the neighbors of each alive user. The neighbors $\totalsample_i$ of users $i$ is held by user $i$ and all users in its personal committee $\P_i$.

\smallskip
{\bf Sample first committee and send public verification key to all users:}
	\begin{enumerate}[topsep=0pt,itemsep=1pt]
		\item\label{step:First_Committee} The parties call $\felectstar$. If they receive $\bot$, the abort. Otherwise, let $\C_0$ denote the sampled committee.
	
		\item $\C_0$ prepares signature keys $(\pk,\sk)\from\SigGen\sof{1^{\secParam}}$. It then sends $\pk$ to the server and to all users $i\in\C_0$.

		\item For every $i\in\C_0$, the server sends $\pk$ to PC $\P_i$.
		
		\item Each PC $\P_i$ sets $\pk_i$ to be an empty array of length $|\C_0|$, indexed with the users in $\C_0$. If $\P_i$ received the same value $\pk$ from user $i$ and from the server, it sets $\pk_i(i)=\pk$. Otherwise, it sets $\pk_i(i)=\bot$. It then sends $\pk_i$ to all its neighbors $\P_j$, \ie where $j\in\totalsample_i$.
		
		\item\label{step:update_keys} For \FILreps iterations, each PC updates its array according to values it received from its neighbors, and sends the updated array to its neighbors.
		
		\item If there exists a value in $\pk_i$ that appears at least $(\changed{3/4}-\eps/2)\secParam$ times (where $\eps=\changed{1/8}-\alpha$), then $\P_i$ sets it to be the public key and sends this value to users $i$. Note that by \cref{lem:committee_vast_hon_maj}, with overwhelming probability, this value will be the public key $\pk$ that was sampled by $\C_0$. To alleviate notations, we will use $\pk$ to denote this value.
	\end{enumerate}
	
\smallskip
{\bf Sample the rest of the committees:}
	\begin{enumerate}[topsep=0pt,itemsep=1pt]
		\item For $i=0$ to $\ceil{\log n}-1$:
		\begin{enumerate}[topsep=0pt,itemsep=1pt]
			\item Committee $\C_i$ samples two more committees $\C_{2i+1},\C_{2i+2}\su[n]$, each of size $\comsize$, independently and uniformly at random, and sends them the secret signature key $\sk$ (held shared among the users in the committee) and its identity $\C_i$.
		
			\item For $b\in\sset{1,2}$ such that $2i+b\in[n]$:
		
			\begin{enumerate}[leftmargin=15pt,rightmargin=10pt,itemsep=1pt,topsep=0pt]
				\item Committee $\C_{2i+b}$ sends to user $2i+b$ the pair $(r_{2i+b},\sigma_{2i+b})$, where $r_{2i+b}\from\zo^{\secParam}$ is a random string and $\sigma_{2i+b}\from\SigSign_{\sk}\sof{r_{2i+b}}$ is its signature.
			
				\item If $\SigVerify_{\pk}\sof{r_{2i+b},\sigma_{2i+b}}=1$, then user $2i+b$ responds to $\C_{2i+b}$ with the message \alive. Otherwise, if it never received a signed message it aborts.
			\end{enumerate} 
		\end{enumerate}
		
		\item\label{step:aggregate} The committees aggregate the total number of \alive message they received. Let $\ell$ denote the final result, held by $\C_0$. 
	
		\item If $\ell<(1-\alpha-\eps/2)n$ then each committee notifies its user to abort.
	
		\item Otherwise, an honest user $i$ outputs $\C_i$, and for every $j\in\sset{0,\ldots,n}$, where $i\in\C_j$, the user outputs $\C_{2j+1}$, $\C_{2j+2}$, and $\C_{\floor{(j-1)/2}}$ as well (assuming the indexes are in $\sset{0,\ldots,n}$). An honest server outputs $\sset{\C_i}_{i=0}^n$.
	\end{enumerate}
\end{protocol}
By \cref{fact:chernoff}, no user will be in more than $3\secParam$ committees except with negligible probability, hence the protocol maintains efficiency. As for security, the intuition is that by \cref{lem:committee_vast_hon_maj} and the union bound, all committees sampled during the execution of the protocol will contain a vast honest majority with overwhelming probability. In particular, the number of honest users is sufficient so that the calls to $\nxtmsg$ will be be executed honestly. Finally, since the adversary cannot forge a signed random message, no user will appear in a maliciously fabricated committee. We next formalize this intuition. %

\begin{lemma}\label{thm:many_committees}
	Let $\alpha<\changed{1/8}$ be a constant. Then protocol \ManyCommittees computes $\fmany$ with $\alpha n$-security in the $\sset{\nxtmsg,\felectstar}$-hybrid model, conditioned on the event \PCSG.
\end{lemma}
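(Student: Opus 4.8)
The plan is to prove the lemma via the real-vs-ideal paradigm, building a simulator $\Sim_{\adv}$ and splitting the analysis into the honest-server and malicious-server cases, exactly as in the proof of \cref{thm:committee_hybrid}. Before constructing the simulator I would isolate a ``good event'' $\goodset$ that holds, conditioned on \PCSG, except with negligible probability, and argue that conditioned on $\goodset$ the committees $\C_0,\C_1,\dots,\C_n$ all behave as honest parties. Concretely, $\goodset$ is the conjunction of: (i) no user lies in more than $3\secParam$ of the committees $\sset{\C_i}_{i=0}^n$ --- this follows from \cref{fact:chernoff} (each $\C_i$ for $i\ge 1$ is a uniformly random $\secParam$-subset and $\C_0$ is a lightest bin of size $\le\secParam$) together with a union bound, and it is what keeps the users polylogarithmic; and (ii) every committee sampled during the execution contains strictly more than $2/3$ honest users --- for $\C_0$ this is \cref{lem:committee_vast_hon_maj}, for $\C_1,\dots,\C_n$ it is a direct application of \cref{fact:hoeffding} to a uniform random subset, and a union bound over the $n+1$ committees gives the claim. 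Conditioned on $\goodset$, every committee has at least $(1-\beta-\eps/2)\secParam$ honest members (where $\beta=\sabs{\LIVE\cap\corrset}/\sabs{\LIVE}$), so by the definition of the $\nxtmsg$ functionalities every call is executed as prescribed, and the key $\sk$ held shared inside the committees never reaches the adversary.

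Next I would analyze the public-key dissemination phase, using that \PCSG guarantees the subgraph over the personal committees of alive honest users has diameter \FILreps. By part (ii) of $\goodset$, more than $2/3$ of the users in $\C_0$ are honest and all of them forward the genuine $\pk$ to both their PC and the server; hence every honest PC $\P_i$ initializes $\pk_i$ with the correct value in more than $2/3$ of the coordinates it controls, and after \FILreps rounds of propagation along the small-diameter honest-PC subgraph every active honest PC holds an array in which the genuine $\pk$ appears at least $(2/3+\eps/2)\secParam$ times, while no other value can (a malicious server can feed inconsistent arrays through the few malicious PCs but cannot make two honest PCs disagree on a coordinate they both control). Thus, except with negligible probability, every active honest user receives the true $\pk$. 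Combining this with the unforgeability of $\Sig$ (reducing a forger to an adversary for \ManyCommittees, legitimate since $\goodset$ keeps $\sk$ inside the honest committees), any committee that an honest user $2i+b$ accepts --- and therefore outputs --- must be one that was actually sampled in the tree-building loop, so the tree of committees output by the honest users coincides with the one the trusted party of $\fmany$ would produce from $\C_0$ and fresh uniform committees.

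With these structural facts in hand the simulator is routine. For an honest server, $\adv$'s view consists only of sub-threshold Shamir shares (independent of the committees' contents) and the signed random strings handed to the malicious users it controls; $\Sim_{\adv}$ sends $\honset$ together with whatever malicious users $\adv$ keeps alive as $\LIVE$ to the trusted party, then simulates these shares and signatures, and the two views are identically distributed up to the negligible error above. For a malicious server, $\Sim_{\adv}$ runs $\adv$ internally: it simulates the call to $\felectstar$ by itself playing that functionality's malicious-server ideal world with $\adv$ (forwarding $\adv$'s $\abort$ / $(\LIVE,b)$ / bin choices / committee $\C_0$ to $\trustp$, which expects exactly the same messages in the first phase of $\fmany$); it then simulates the public-key phase by querying $\adv$ for the messages the malicious PCs send and sending $\abort$ to $\trustp$ whenever this would make an honest PC abort or prevent the required $(2/3+\eps/2)\secParam$-majority from forming; it simulates the tree-building by itself sampling $\C_1,\dots,\C_n$ uniformly, handing them to $\adv$ as $\sset{\C_i}_{i=1}^n$ and producing the signed random strings for the malicious users' committees; and finally it queries $\adv$ for the alive-aggregation outcome $\LIVE'$ and its final abort decision, forwarding $\abort$ or $\LIVE'$ to $\trustp$ accordingly, and outputs whatever $\adv$ outputs. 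I expect the main obstacle to be this public-key dissemination step: one has to combine the small-diameter guarantee of \PCSG for the honest-PC subgraph with the vast-honest-majority of $\C_0$ from \cref{lem:committee_vast_hon_maj} to show that a malicious server --- which may route garbage through the malicious PCs --- still cannot prevent the honest PCs from converging on the true $\pk$ nor make them converge on a wrong one; the remaining ingredients (efficiency by Chernoff, honest majority of all committees by a union bound, no fake committees by unforgeability) are comparatively standard.
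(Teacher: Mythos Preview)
Your plan is essentially the paper's own proof: same honest/malicious split, same reliance on \cref{lem:committee_vast_hon_maj} plus a union bound to get honest majorities in every $\C_i$, same use of the signature scheme's unforgeability to rule out fabricated committees, and the same treatment of the public-key dissemination via the small-diameter guarantee of \PCSG. The paper's write-up is in fact terser than yours on the $\pk$-propagation step, so your more careful discussion there is fine.

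One technical correction to your simulator in the malicious-server case: you write that $\Sim_{\adv}$ ``simulates the tree-building by itself sampling $\C_1,\dots,\C_n$ uniformly, handing them to $\adv$.'' That breaks the joint distribution. In the ideal world for $\fmany$ the trusted party samples $\C_1,\dots,\C_n$ and sends them to the (corrupted) server, and those same committees are what the honest users eventually output. Your simulator must therefore \emph{receive} $\sset{\C_i}_{i=1}^n$ from $\trustp$ after forwarding $\C_0$, and pass those on to $\adv$; if it samples its own copies, the adversary's view becomes independent of the honest users' outputs and the indistinguishability claim fails. This is exactly what the paper's simulator does, and once you make that one-line fix your argument goes through.
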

\begin{proof}
	Fix an adversary $\adv$ corrupting a subset $\corrset\su[n]$ of the users, of size at most $|\corrset|\leq\alpha n$. The case where the server is honest follows similar reasoning to the proof of \cref{thm:committee_hybrid} and is therefore omitted.
	
	We next assume the adversary corrupts the server. We assume \wlg that the adversary did not send \abort to the functionality \felectstar, as this causes all users to abort. Let $\LIVE\su[n]$ denote the set of alive users the adversary sent to the functionality \felectstar at the start of the protocol (where $|\LIVE|\geq\changed{7n/8}$). Observe that by \cref{lem:committee_vast_hon_maj} and the union bound, it follows that
	$$\abs{\C_i\setminus\corrset}\geq(\beta+\eps/2)\secParam$$
	for all $i\in\sset{0,\ldots,n}$, except with negligible probability, where $\beta=\changed{|([n]\setminus\LIVE)\cup\corrset|/n}$ is the fraction of users that are either not alive or malicious, among the total number of users. In the rest of the proof we condition on this event occurring. 
	
	We next define the simulator $\Sim_{\adv}$. Similarly to the proof of \cref{thm:committee_hybrid}, to simplify the presentation, we do not concern ourselves with the view of malicious users in honest PCs or in committees, as they correspond to random share. Furthermore, we assume the server does not block any message sent from one honest PC to another, as this will immediately cause all honest users to abort. The simulator does the following.
	\begin{enumerate}
		\item{\bf Simulate ``Sample first committee and send public verification key to all users'':}
		\begin{enumerate}
			\item Simulate the interaction between $\adv$ and the functionality \felectstar:
			\begin{enumerate}
				\item Query $\adv$ for the set $\LIVE$ of alive users, and the number of bins $b$. Send these to the trusted party, and obtain $\sset{(i,x_i)}_{i\in\LIVE\cap\honset}$.
			
				\item Send $\sset{(i,x_i)}_{i\in\LIVE\cap\honset}$ to the adversary $\adv$, who responds with a set $\C_0$ of size at most $\secParam$, such that $\bin_j\su\C_0$ and $\C_0\setminus\bin_j\su\corrset$ for some $j\in[b]$, where $\bin_j=\sset{i\in\LIVE\cap\honset:x_i=j}$, is the \jth bin.
			\end{enumerate}
			
			\item Generate signature keys $(\pk,\sk)\from\SigGen(1^{\secParam})$ and send $\pk$ to the adversary.
			
			\item For every honest $i\in\C_0$, the adversary sends a value $\pk'_i$ to the PC $\P_i$.
			
			\item If less than $(\changed{3/4}-\eps/2)\secParam$ of them received $\pk$ from $\adv$, then send \abort to the trusted party $\trustp$, output whatever $\adv$ outputs, and halt (note that simulating the view of $\adv$ during Step~\ref{step:update_keys} is redundant since the adversary can compute it by itself).
		\end{enumerate}
		
		\item{\bf Simulate ``Sample the rest of the committees'':}
		\begin{enumerate}
			\item Send $\C_0$ to the trusted party, who replies with committees $\sset{\C_i}_{i=1}^n$, where $\C_i\su[n]$ is of size $\secParam$ for all $i\in[n]$.
			
			\item For $i=0$ to $\ceil{\log n}-1$:
			\begin{enumerate}
				\item Send $\C_{2i+1}$ and $\C_{2i+2}$ to $\adv$ (and shares of $\sk$).
				
				\item For $b\in\sset{1,2}$ such that $2i+b\in\corrset$ is corrupted, send to $\adv$ the pair $(r_{2i+b},\sigma_{2i+b})$, where $r_{2i+b}\from\zo^{\secParam}$ is a random string and $\sigma_{2i+b}\from\SigSign_{\sk}\sof{r_{2i+b}}$ is its signature.
				
				\item\label{step:block_alive} For $b\in\sset{1,2}$ such that $2i+b\in\honset$ is honest, the adversary can either block the message \alive that user $2i+b$ is suppose to send, or not.\footnote{Observe that the simulator can obtain the set of blocked users from the malicious server, since at the beginning of every round, the server expects to receive the set of pairs $(i,j)$, where user $i$ sends a message to user $j$.}
			\end{enumerate}
			\item If the server blocked at least \changed{$7n/8$} users during Step~\ref{step:block_alive}, then send abort to $\trustp$, output whatever $\adv$ outputs, and halt.
			
			\item Otherwise send the set $\LIVE'\su[n]$ of all users that are either malicious or honest that were not blocked, to $\trustp$, output whatever $\adv$ outputs, and halt.
		\end{enumerate}
	\end{enumerate}
Clearly, the view of $\adv$ in both worlds are identically distributed. In particular, the adversary's replies are identical. We next show that, conditioned on the view of $\adv$ being the same in both worlds, the output of the honest users in both worlds is statistically close. 

Indeed, in the ideal world, the committees are sampled honestly by the trusted party. Conversely, in the real world, they are sampled honestly if and only if all of them contain sufficiently many honest users, which holds due to \cref{lem:committee_vast_hon_maj} except with negligible probability. Additionally, in the ideal world no honest user will output a committee fabricated by the adversary. In the real world, such an event can occur if and only if the adversary forged a signature of a $\secParam$-length random message. By the security of the signature scheme, the adversary will be able to do this with only a negligible probability.
\end{proof}

\changed{\begin{remark}
Recall that when we formally defined the GMPC model in \cref{sec:Background}, we noted the adversary can perform a DDoS attack, even without corrupting the server. We proposed two solutions to this issue, where one of them was a combinatorial solution, that in general could allow the adversary to cause a small fraction of honest users to abort even when the server is honest. 

Observe that given these committees, they parties can overcome this issue, and have the committees represent honest users that aborted during the election of the first committee. To do so, the users that previously aborted, can obtain the public signature keys by requesting it from $\secParam$ randomly sampled PCs and comparing the information. Since the server is honest, the information will always be consistent (since all PC have more than 7/8 fraction honest users in them), hence the users will not abort. Furthermore, the attacker cannot perform a DDoS attack at this stage, since the server knows which users suppose to interact with each other, hence it can block the malicious users in case they attack.
\end{remark}}

\section{Server's Proof of Correctness of  Computation}\label{sec:proof}

We present a protocol allowing a server to compute an arbitrary function over the users' inputs, while proving to the users that the computation was done correctly. For the entire section, we assume that \emph{all users are honest}, while only the server might be corrupted. We further assume the users are connected via a binary tree network, with the root of the tree known to all parties. These assumption are then removed in \cref{sec:combine} by electing many committees using the protocol from \cref{sec:committee}, and letting each committee replace a user.

Before presenting the protocol in \cref{sec:arbitrary},  we present in \cref{sec:VerComp} a protocol that allows the users to verify a certain computation by the server was done properly.

\subsection{Verifying the Computation}\label{sec:VerComp}

We present a protocol that allows a prover to prove a $\operatorname{P}$ statement\footnote{Though we state and prove assuming the language belong to P, the same protocol works for any language in $\operatorname{NP}$, assuming the prover is given the witness to the input.} to $n+1$ (honest) verifiers, so that a cheating prover is caught with overwhelming probability. The prover holds an input $x$ of length $\Tilde{O}(n)$, and each verifier holds a single $\polylog(n)$-sized substring of $x$. We assume the verifiers run in time that is polylogarithmic in the number of users and polynomial in the security parameters. Furthermore, the verifiers are connected to each other via a binary tree network. 

In the construction of the protocol we use \emph{probabilistically
checkable proofs of proximity} (PCPP) \cite{BGH+04a,DR04}. Roughly speaking, a PCPP allows a verifier to be convinced that the input is close to being in a language $L$. In more details, the verifier has an explicit input $x$ and an implicit input $y$ given as an oracle. The verifier accepts with high probability if $y$ is close to some $y'$ such that $(x,y')\in L$. 

We now present the protocol. We first describe its setting. There is a single verifier $\Vc$, $n$ helping parties $\Hc_1,\ldots,\Hc_n$, and a single prover $\Pc$. The prover is connected to every
other party via a secure point-to-point channel, and the verifier and the helpers are connected via a binary tree network. Specifically, the root of the tree is $\Vc$ whose identity is known to all parties, and is connected to both $\Hc_1$ and $\Hc_2$. Additionally, for every $i\in[n]$, helper $\Hc_i$ is connected to $\Hc_{2i+1}$ and $\Hc_{2i+2}$ (assuming $2i+1,2i+2\in[n]$).

The goal of the prover is to prove some $\operatorname{P}$ statement (encoded as $x\in L$ for some language $L\in\operatorname{P}$) to the verifier. Our setup assumption is that the string $x$ is divided among the helpers and the verifier (in addition to the prover knowing $x$ in its entirety). The main difficulty is that the verifier and the helpers are polylogarithmic in $n$, and hence, any one of them cannot even read the entire statement. To overcome this, we let them work together as one verifying unit, where the helpers assist the (designated) verifier in completing this process. 

As a first step,  the prover encodes $x$ using a linear error-correcting code, partitions the codeword, and sends to the verifier and each helper a single $\polylog(n)$-sized substring of the codeword. The verifier and the helpers then check that the encoding was done properly. Since the ECC is linear, this can be done as follows.  The verifier samples a random subset of size $\polylog(n)$ of the rows in the generating matrix of the ECC. Each helper computes its part of the encoding of the input with respect to this subset of rows. 

To complete the verification of the encoding, the helpers sum up their computed values, and send the sum to the verifier. 
 This summation can be performed efficiently, by propagating the summed values along the topology of the tree, where the helper at each node sums the values it receives from its children and  pass it on to its parent. In addition, the verifier gathers the encoding values attributed with the selected subset of rows, directly from the helpers that were assigned this values (by the server). The verifier then compares the two values for each selected row, and accepts the encoding if they all match.

 If the ECC verification is accepted, the prover generates a PCPP proof for the encoding of the language, which is then verified by the verifier. To implement the oracle access of the PCPP's verifier, the protocol's verifier $\Vc$ will use the aid of the helpers. Specifically, since they hold the encoding of the input, $\Vc$ can simply query them via the correct path on the tree topology. 
 
 Intuitively, if the ECC can tolerate more errors than the distance parameter $\delta$ of the PCPP proof, then accepting the proof indicates that the codeword is close to a codeword in the encoding of the language, thus the original input $x$ is in the language.

In order to simplify the presentation, whenever we say the verifier sends a message to some helper or vice versa, this is shorthand for having the helpers pass the message along the tree to its correct destination. Additionally, we will assume that $n+2$ is a power of 2, \ie the binary tree is full. \changed{Finally, we do not assume the prover can block message sent between two other (honest) parties. This will be handled in \cref{sec:combine}.}

In the following, we let $L\su\zos\times\zos$ be a pair language in $\operatorname{P}$, and let $\LECC$ be an efficient $[3(n+1)\log^c n,(n+1)\log^c n, 2(n+1)\log^c+1]$ linear error-correcting code, where $c\in\NN$ is a constant (possibly 0). Additionally, we denote by $G$ be the generator matrix of $\LECC$ and we assume that any entry in $G$ can be computed in time $\polylog(n)$.\footnote{Reed-Solomon encoding is an example of such encoding.} Define $L'=\sset{(x,\LECCEnc(y)):(x,y)\in L}$.  We let $\delta(m)=m/\log^d m$, where $d\in\NN$ is a constant, be the distance parameter of the PCPP proof for the pair language $L'$, as given by \cref{cor:PCPP}. Finally, we let $\VC=(\VCSetup,\VCCom,\VCOpen,\VCVerify)$ be a vector commitment scheme.\\

\begin{protocol}{\ProofOfCorrectness}
{\bf Inputs:} Each helper $\Hc_i$ holds an input $x_i\in\zo^{\log^c n}$, where $c\in\NN$ is some constant, the verifier holds $x_0\in\zo^{\log^c n}$, and the prover $\Pc$ holds $(x_0,\ldots,x_n)$. 
	
{\bf Common inputs:} All parties hold the security parameter $1^{\secParam}$, the number of helpers $n$ (held in binary by the verifier and the helpers, and in unary by the prover), and a public input $\pub\in\zo^{\poly(\secParam)}$. 
	
{\bf Goal:} $\Pc$ wishes to prove the statement $(\pub,(x_0,\ldots,x_n))\in L$. 
	
\smallskip
		
\begin{enumerate}[leftmargin=15pt,rightmargin=10pt,itemsep=1pt,topsep=0pt]
	\item The prover $\Pc$ computes $\codex=\LECCEnc(x_0,\ldots,x_n)$, where we view each $x_i$ as a field element.
		
	\item $\Pc$ partitions $\codex$ into $n+1$ strings $(\codex_0,\ldots,\codex_n)=\codex$, where $\codex_i\in\zo^{3\log^c n}$ for all $i\in\sset{0,\ldots,n}$. It then sends $\codex_0$ to $\Vc$ and for all $i\in[n]$ it sends $\codex_i$ to $\Hc_i$.
		
	\item The verifier and the helpers check that the encoding was done correctly:
	\begin{enumerate}[topsep=0pt,itemsep=1pt]
		\item $\Vc$ samples a set $\cS\su\sset{0,\ldots,n}$ of size $\secParam\cdot\log^{d} 4n$ uniformly at random and sends $\cS$ to all helpers (recall that $d$ is the exponent of the logarithm for the distance parameter $\delta$ of the PCPP proof for $L'$).
		
		\item For every $i\in[n]$ and every $j\in\cS$, helper $\Hc_i$ computes $\codex_j[i]:=G(j,i)\cdot x_i$.
		
		\item For every $i\in\cS\setminus\sset{0}$ helper $\Hc_i$ sends $\codex_i$ to $\Vc$.
		
		\item For $\ell=1$ to $\log(n+2)-1$:
		\begin{enumerate}[topsep=0pt,itemsep=1pt]
			\item For every node $i\in\sset{(n+2)/2^{\ell}-1,\ldots,(n+2)/2^{\ell-1}-2}$ on the $\ellth$ level and every $j\in\cS$, helper $\Hc_i$ computes $$\codes_j[i]:=\codes_j[2i+1]+\codes_j[2i+2]+\codex_j[i],$$
			where $\codes_j[2i+1]=\codes_j[2i+2]=0$ if $2i+1,2i+2>n$.
			
			\item Helper $\Hc_i$ sends $\codes$ to its parent $\Hc_{\floor{(i-1)/2}}$ if $i>2$, and to $\Vc$ otherwise.
		\end{enumerate}
			
		\item For every $j\in\cS$, $\Vc$ computes 
	$\codes_j:=\codes_j[1]+\codes_j[2]+G(j,0)\cdot x_0.$
			
		\item If there exists $j\in\cS$ such that $\codex_j\ne\codes_j$, then $\Vc$ sends $\bot$ to all parties and outputs \rej.
	\end{enumerate}
		
	\item The prover $\Pc$ generates a PCPP proof $\pi$ for the assertion $(\pub,\codex)\in L'$.\footnote{Observe that this step can be done in parallel to the verification of the LECC.}
	
	\item $\Vc$ computes $\pp\from\VCSetup\sof{1^{\secParam},T(\secParam,n),\zo}$, where $T(\secParam,n)$ is the time it takes for $\Pc$ to generate $\pi$, and sends $\pp$ to $\Pc$.
	
	\item $\Pc$ sends to $\Vc$ a commitment to $\pi$.
		
	\item The verifier $\Vc$ executes the PCPP's verifier $\Vc_{\PCPP}$ for $\pi$, with the oracle access being implemented as follows:
	\begin{itemize}[topsep=0pt,itemsep=1pt]
		\item Whenever $\Vc$ queries a bit from $\pi$, it sends the query to the prover who responds with a decommitment of the bit.\footnote{Observe that since the verifier is honest, it will never query $\pi$ on an index larger than $|\pi|\leq T(\secParam,n)$, hence the prover can always decommit.} If the verification of the decommitment rejects, then the verifier sends $\bot$ to all parties and outputs \rej.
			
		\item Whenever $\Vc$ queries a bit from the implicit input $\codex_i$ for some $i\in[n]$, it asks from the corresponding helper $\Hc_i$ to send the bit to it.
	\end{itemize}
		
	\item $\Vc$ outputs whatever $\Vc_{\PCPP}$ outputs.
\end{enumerate}
\end{protocol}

Clearly, the verifier $\Vc$ and helpers $\Hc_1,\ldots,\Hc_n$ are all polynomial in $\log n$ and $\secParam$. The next lemma states the correctness and soundness of the protocol.

\begin{lemma}\label{lem:proof}
	Assume position binding vector commitment scheme exists. If $(\pub,(x_0,\ldots,x_n))\in L$ and the prover is honest, then $\Vc$ always outputs \acc. If $(\pub,(x_0,\ldots,x_n))\notin L$, then for any malicious \ppt (in $\secParam$ and $n$) prover $\Pc^*$, it holds that $\Vc$ outputs \acc with negligible probability.
\end{lemma}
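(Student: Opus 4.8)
The plan is to prove completeness and soundness of \ProofOfCorrectness separately, where completeness is essentially a routine check that the honest execution goes through, and soundness is the heart of the matter and proceeds via a chain of three reductions.

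\textbf{Completeness.} Suppose $(\pub,(x_0,\ldots,x_n))\in L$ and the prover is honest. First I would check that the ECC-verification sub-phase always accepts: since $\codex=\LECCEnc(x_0,\ldots,x_n)$ and $\LECC$ is linear, for every row index $j$ the $j$-th coordinate of the codeword equals $\sum_{i=0}^n G(j,i)\cdot x_i$, which is exactly what the tree-summation computes for $\codes_j$ at the root, while $\codex_j$ is the symbol that the honest prover actually placed at position $j$. These agree, so $\Vc$ does not reject in this phase. Next, since $(\pub,\codex)\in L'$ by definition of $L'$, the honest prover can produce (in polynomial time in $|\pub|+|\codex|=\poly(n,\secParam)$, by \cref{cor:PCPP}, using the witness) a PCPP proof $\pi$ that $\Vc_{\PCPP}$ accepts with probability $1$. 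The verifier commits via \VC and, since the prover is honest, all decommitments of queried bits of $\pi$ verify, and the implicit-input queries are answered correctly by the helpers. Hence $\Vc$ outputs \acc with probability $1$.

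\textbf{Soundness.} Fix $(\pub,(x_0,\ldots,x_n))\notin L$ and a malicious \ppt prover $\Pc^*$. Let $\codey=(\codey_0,\ldots,\codey_n)$ denote the string actually held by the verifier and helpers after the distribution step (i.e.\ $\codey_0$ is what $\Vc$ received, $\codey_i$ what $\Hc_i$ received). I distinguish two cases according to the Hamming distance of $\codey$ from the set of legal codewords $\LECCEnc(\{0,1\}^{(n+1)\log^c n})$.

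Case 1: $\codey$ is $2(n+1)\log^c n$-far (i.e.\ more than the minimum distance, roughly) from every legal codeword — or, more precisely, far enough that no legal codeword agrees with it on a $(1-\delta')$-fraction of coordinates, for the appropriate parameter. Then the ECC-verification phase catches it: if $\codey$ differs from $\LECCEnc$ of its own ``message part'' on many coordinates, then a uniformly random set $\cS$ of $\secParam\cdot\log^d 4n$ row-indices hits a disagreeing coordinate except with probability $(1-\text{(fraction of disagreements)})^{|\cS|}$, which is $e^{-\Omega(\secParam\cdot\log^{d} n\cdot(\text{disagree-fraction}))}$; since the disagreement fraction is bounded below by a $1/\polylog$ quantity (the gap between the ECC's tolerance $2(n+1)\log^c n$ and $n+1$), this is negligible. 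Here I must be careful that the tree-summation actually computes $\sum_i G(j,i) x_i$ for the helpers' inputs $x_i$ and compares it to $\codey_j$, not to anything the prover controls at verification time — so a far-from-codeword $\codey$ is rejected w.o.p.

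Case 2: $\codey$ is close to some legal codeword $\codex^*=\LECCEnc(w)$. Since the ECC has distance $2(n+1)\log^c n + 1 > 2\delta(3(n+1)\log^c n)$ (this is the numerical inequality ``the ECC tolerates more errors than the PCPP distance parameter $\delta$'' — I should verify $2(n+1)\log^c n + 1 > 2\cdot \frac{3(n+1)\log^c n}{\log^d(3(n+1)\log^c n)}$, which holds for $d\ge 1$ and large $n$), there is a \emph{unique} such nearby codeword $\codex^*$, and moreover $\codey$ is $\delta$-far from every codeword \emph{other} than $\codex^*$. Now since $(\pub,(x_0,\ldots,x_n))\notin L$: if the decoded message $w\neq(x_0,\ldots,x_n)$ as field elements then again $\codey$ (being close to $\codex^*=\LECCEnc(w)$) is far from $\LECCEnc(x_0,\ldots,x_n)$, and the ECC-phase catches the discrepancy between $\codey_j$ and $\sum_i G(j,i)x_i$ w.o.p.; otherwise $w=(x_0,\ldots,x_n)$, so $(\pub,\codex^*)\notin L'$, and since $\codey$ is $\delta$-far from $L'_\pub\cap\{0,1\}^{|\codey|}$ (as the only codeword it is close to is $\codex^*\notin L'_\pub$, and all of $L'_\pub$ consists of codewords), the PCPP soundness guarantee (\cref{cor:PCPP}, soundness error $e^{-\log^2 m}$) applies: for \emph{any} proof oracle $\pi$, $\Vc_{\PCPP}^{\codey,\pi}$ accepts with probability at most $e^{-\log^2 m}$, negligible.

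The remaining subtlety — and I expect this to be the main obstacle — is that the prover does not supply $\pi$ as a fixed oracle: it commits to $\pi$ and answers the verifier's queries with decommitments, and a cheating prover might try to be adaptive, answering queries inconsistently with any fixed string. Here I invoke position binding of \VC: conditioned on no decommitment ever failing verification (otherwise $\Vc$ already outputs \rej), the answers the prover gives to any two distinct queries at the same position must be equal except with negligible probability, so there is a well-defined (computationally-determined) string $\pi^*$ consistent with all the prover's answers — formally, an adaptive prover that makes $\Vc$ accept with non-negligible probability while being inconsistent yields an adversary breaking position binding. Thus we may treat the prover's $\pi$-answers as a fixed oracle $\pi^*$ and apply PCPP soundness to it. The implicit-input oracle is answered honestly by the honest helpers, so it equals $\codey$ as required. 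Combining: in every case $\Vc$ outputs \acc with at most negligible probability. A union bound over the (constantly many) cases and the (negligible) failure events completes the proof. $\qed$
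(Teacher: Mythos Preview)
Your proof is correct and shares the paper's core structure (ECC sampling test, PCPP soundness, and VC position binding to pin down the prover's proof oracle), but your case analysis is more elaborate than necessary. The paper splits directly on the Hamming distance between the purported encoding (which you call $\codey$; the paper calls it $\codex^*$) and the \emph{true} encoding $\codex=\LECCEnc(x_0,\ldots,x_n)$: if this distance is at least $\delta$, the random set $\cS$ hits a disagreeing coordinate with overwhelming probability; if it is below $\delta$, then since $(\pub,\codex)\notin L'$ and every element of $L'_{\pub}$ is a codeword at distance at least $2(n+1)\log^c n+1>2\delta$ from $\codex$, the purported encoding is $\delta$-far from $L'_{\pub}$ and PCPP soundness applies directly. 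Your Case~1 (``far from every codeword'') and the sub-case $w\neq(x_0,\ldots,x_n)$ of your Case~2 both collapse into the paper's first case, because the ECC-verification phase does not test proximity to \emph{some} codeword --- it compares the purported encoding at the sampled positions to the specific codeword $\codex$ that the helpers compute from their actual inputs via $\sum_i G(j,i)x_i$. Recognizing this removes the detour through decoding to an intermediate message $w$ and makes the threshold unambiguous (your Case~1 threshold is left as ``the appropriate parameter'', and the parenthetical about ``the gap between the ECC's tolerance $2(n+1)\log^c n$ and $n+1$'' does not parse). Your explicit reduction of adaptive decommitments to VC position binding, where the paper simply writes ``for simplicity we assume $\Pc^*$ always decommits properly'', is a welcome addition.
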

\begin{proof}
	Clearly, if $(\pub,(x_0,\ldots,x_n))\in L$ and the prover is honest, then by the completeness of the PCPP proof, and the fact that for every $j\in\cS$ it holds that
	$$\codes_j=\sum_{i=0}^n G(j,i)\cdot x_i,$$
	it follows that $\Vc$ accepts the proof with probability 1.
	 
	Next, assume that $(\pub,(x_0,\ldots,x_n))\notin L$ and fix a malicious \ppt prover $\Pc^*$. For simplicity we assume that $\Pc^*$ always decommits properly whenever it is required to do so. Let $\codex^*$ be the purported encoding sent to the helpers and the verifier, and let $\cD$ be the set of indexes $i\in\sset{0,\ldots,n}$ such that $\codex^*_i\ne\codex_i$. Recall that the encoding of any given input $x_i$ can be computed efficiently by $\Vc$ and the helpers. Moreover, this is done for all inputs $x_i$, where $i\in\cS$. Therefore, as
	$$\delta\of{\abs{\codex}}=\frac{\abs{\codex}}{\log^d\abs{\codex}}=\frac{3(n+1)\log^c n}{\log^d\of{3(n+1)\log^c n}}\geq\frac{3(n+1)\log^c n}{\log^d 4n},$$
	for all sufficiently large $n$'s, it follows that if $|\cD|\geq\delta$
	then the probability that $\Vc$ accepts the encoding is at most
	\begin{align*}
		\pr{\Vc\text{ accepts encoding}}\leq\pr{\cS\cap\cD=\emptyset}\leq\paren{1-\frac{\abs{\cD}}{n+1}}^{\abs{\cS}}\leq e^{-\frac{\frac{3(n+1)\log^c n}{\log^d 4n}\cdot\secParam\cdot\log^{d} 4n}{n+1}}=e^{-3\secParam\cdot\log^{c}n}.
	\end{align*}
	
	We may now assume that $|\cD|<\delta(|\codex|)$, that is, $\Delta(\codex^*,\codex)<\delta(|\codex|)$. Since the LECC can tolerate $(n+1)\log^c n\geq\delta(|\codex|)$ errors and by the assumption that $(\pub,\codex)\notin L'$, it follows that for any possible codeword $\codey\in\zo^{|\codex^*|}$ where $(\pub,\codey)\in L'$, it holds that $\Delta(\codex^*,\codey)\geq\delta(|\codex^*|)$. Thus, $(\pub,\codex^*)$ is such that $\codex^*$ is $\delta$-far from any codeword $\codey$ satisfying $(\pub,\codey)\in L'$, hence the soundness of the PCPP implies that the verifier accepts with only negligible probability.
\end{proof}

\subsection{Computing Short Output Functionalities}\label{sec:arbitrary}
We are now ready to present a protocol for computing any function with a short output, assuming that all users are honest. Let us first describe the setting. Similarly to the previous section, there are $n+1$ users $\hatPc_0,\ldots,\hatPc_n$ connected via a binary tree network, and a server that is connected to every party. The root of the tree is $\hatPc_0$ whose identity is known to all parties. The parties wishe to compute a function $f:\sparen{\zo^m}^{n+1}\mapsto\zo^{m\cdot(n+1)}$, where $m=m(n,\secParam)=\poly(\log n,\secParam)$, over the inputs of the users such that only the server obtains the output.

Given \ProofOfCorrectness the idea is rather simple: The users first encrypt their inputs using a fully homomorphic encryption scheme. Then, the server homomorphically evaluates the function over the encrypted inputs, and proves to the users it did so honestly using \ProofOfCorrectness. If the proof is accepted, the users decrypt the output. \changed{Finally, since we aim to handle the case where there are $n$ users and the server has an input, we will assume that $\hatPc_0$ has no input and the server sends to it an input at the start of the protocol.} \changed{Similarly to \cref{sec:VerComp}, in order to simplify the presentation in this section, we do not assume the server can block message sent between two other (honest) users. This will be handled in \cref{sec:combine}.}

In the following, write the function $f$ as $f(x)=(f_0(x),\ldots,f_n(x))$, where $f_i(x)\in\zo^m$ is the \ith block of the output. Additionally, we let $\FHE=(\FHEGen,\FHEEnc,\FHEDec,\FHEEval)$ be a fully homomorphic encryption scheme and let $s=s(\secParam)$ be the number of random bits used in $\FHEEval$. Finally, let $\PRG:\zo^{n+1}\mapsto\zo^{s\cdot(n+1)}$ be a pseudorandom generator, and write it as $\PRG(r)=(\PRG_0(r),\ldots,\PRG_n(r))$ where  $\PRG_i:\zo^{n+1}\mapsto\zo^{s}$ for every $i\in\sset{0,\ldots,n}$.

\begin{protocol}{\GenProto}\label{protocol:general_func}

\noindent {\bf Inputs:} Each user $\hatPc_i$, where $i\in[n]$, holds an input $x_i$. The server holds $x_0$. 

\noindent {\bf Common inputs:} All parties hold the security parameter $1^{\secParam}$ and the number of users $n+1$ (held in binary by the users and in unary by the server).

	\begin{enumerate}[leftmargin=15pt,rightmargin=10pt,itemsep=1pt,topsep=0pt]
		\item The server sends $x_0$ to $\hatPc_0$.
		
		\item $\hatPc_0$ generates FHE keys $(\pk,\sk)\from\FHEGen\sof{1^{\secParam}}$, sends $(\pk,\sk)$ to all other users, and sends the public key $\pk$ to the server.
		
		\item Each user $\hatPc_i$, where $i\in\sset{0,\ldots,n}$, encrypts its input $x_i$ to obtain $\encin_i\from\FHEEnc_{\pk}\sof{x_i}$, sample a uniform random bit $r_i\from\zo$, and sends $\encin_i$ and $r_i$ to the server.
		
		\item The server homomorphically evaluates all functions $f_i$, where $i\in\sset{0,\ldots,n}$, over the encrypted inputs using the randomness provided by the users, and obtains the encrypted outputs $$\encout_i=\FHEEval_{\pk}\sof{f_i,\encin_0,\ldots,\encin_n;\PRG_i(r_0,\ldots,r_n)}.$$

		\item The server sends $\encout_i$ to $\hatPc_i$, for every $i\in\sset{0,\ldots,n}$.

		\item The parties execute protocol \ProofOfCorrectness, where the server takes the role of the prover $\Pc$, user $\hatPc_0$ takes the role of the verifier $\Vc$, and for every $i\in[n]$ user $\hatPc_i$ takes the role of the helper $\Hc_i$, for the pair language $L$ defined as follow:

\begin{equation*}
	L=\left\{
	\paren{\pk,\paren{z_0,\ldots,z_n}}:\forall i\in\sset{0,\ldots,n}\quad
	\begin{aligned}
		z_i&=\paren{r_i,\encin_i,\encout_i}\text{, where}\\
		\encout_{i}&=\FHEEval_{\pk}\of{f_i,\encin_0,\ldots,\encin_n;\PRG_i(r_0,\ldots,r_n)}%
	\end{aligned}
	\right\}
\end{equation*}
Observe that $L\in\operatorname{P}$, hence the protocol can be executed efficiently.

	\item $\hatPc_0$ sends all parties the output it obtained from the computation.
		
	\item If the output is \rej then all users abort.
		
	\item\label{step:decrypt_output} Otherwise, each user $\hatPc_i$ decrypts $\encout_i$ to obtain $y_i\from\FHEDec_{\sk}\sof{\encout_i}$, and sends the decrypted values back to the server. %
		
	\item The server outputs $(y_0,\ldots,y_n)$.
	\end{enumerate}
\end{protocol}

\begin{lemma}\label{lem:general_only_mal_server}
	Let $m=m(\secParam,n)=\poly(\log n,\secParam)$ and let $f:\sparen{\zo^m}^{n+1}\mapsto\zo^{m\cdot(n+1)}$ be a \LOVE functionality. Assume the existence of pseudorandom generators, position binding vector commitment schemes, and $n$-secure fully homomorphic encryption schemes. Then \GenProto computes $f$ with security against any \ppt adversary corrupting the server.
\end{lemma}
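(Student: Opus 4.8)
The plan is to exhibit a simulator $\Sim$ in the ideal world (with guaranteed output delivery, since all users are honest and only the server may be corrupted) and argue indistinguishability via a hybrid argument. First I would observe that the only corrupted party is the server, so $\Sim$ must: (i) feed the server an input $x_0$ on behalf of $\hatPc_0$ extracted from the adversary in Step~1; (ii) produce, for the adversary, a transcript consisting of the public key $\pk$, the ciphertexts $\encin_0,\dots,\encin_n$, the random bits $r_0,\dots,r_n$, the execution of \ProofOfCorrectness from the verifier/helpers' side, the decision bit of $\hatPc_0$, and the decrypted values $y_0,\dots,y_n$ sent in Step~\ref{step:decrypt_output}. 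The simulator will run $\FHEGen$ itself (it does \emph{not} know $\sk$ is needed to be hidden — wait, it does know $\sk$ since it generates it), sample the $r_i$'s honestly, and crucially encrypt \emph{zeros} in place of the honest users' inputs, i.e.\ set $\encin_i\from\FHEEnc_{\pk}(0)$ for all $i$. After receiving the server's purported evaluated ciphertexts $\encout_i$ and its behavior in \ProofOfCorrectness, $\Sim$ uses \cref{lem:proof}: either the verifier $\hatPc_0$ rejects (then $\Sim$ sends a special abort-like instruction — but here there is no abort in the ideal model, so instead $\Sim$ must have the honest server in the ideal world still get an output; the subtlety is handled below), or it accepts, in which case by soundness of \ProofOfCorrectness the $\encout_i$'s are (whp) honest evaluations of the $\encin_i$'s. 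Then $\Sim$ obtains the real output $(y_0,\dots,y_n)=f(x_0,x_1,\dots,x_n)$ from the trusted party — but it needs to send the server decrypted values consistent with this output even though it encrypted zeros; since $\Sim$ generated $\sk$, it can simply send the \emph{true} output blocks $y_i=f_i(x)$ directly (it knows them from the trusted party), overriding whatever $\FHEDec_{\sk}(\encout_i)$ would give.

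The hybrid argument proceeds in the following order. $\mathsf{Hyb}_0$ is the real execution. In $\mathsf{Hyb}_1$, I would replace the decrypted values sent in Step~\ref{step:decrypt_output} by the correct output blocks $f_i(x)$ whenever the proof was accepted; this is indistinguishable because, conditioned on acceptance, \cref{lem:proof} guarantees that except with negligible probability $\encout_i=\FHEEval_{\pk}(f_i,\encin_0,\dots,\encin_n;\PRG_i(\cdot))$, and by the correctness (and compactness) of the FHE scheme together with the pseudorandomness of $\PRG$ — more precisely, $\FHEDec_{\sk}(\FHEEval_{\pk}(f_i,\FHEEnc_{\pk}(x_0),\dots;\rho))=f_i(x)$ for truly random $\rho$, and $\PRG_i(r_0,\dots,r_n)$ is computationally indistinguishable from such $\rho$ — the decrypted value equals $f_i(x)$ up to negligible error. (One must be slightly careful: $\FHEEval$'s correctness as stated is for a fixed circuit $C$ and does not explicitly mention the randomness; I would invoke the standard convention that the evaluated ciphertext decrypts correctly regardless of the evaluation randomness, or fold the randomness into the circuit description.) In $\mathsf{Hyb}_2$, I would replace each honest user's ciphertext $\encin_i\from\FHEEnc_{\pk}(x_i)$ by $\encin_i\from\FHEEnc_{\pk}(0)$, one user at a time; each such swap is indistinguishable by $\FHECPA$-security of the FHE scheme (here we need the \emph{$n$-secure} version, since the reduction runs in time polynomial in $n$ — the server and the whole transcript are of size $\poly(n)$ — and this is exactly why the lemma's hypothesis demands $n$-secure FHE). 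After $\mathsf{Hyb}_2$ the ciphertexts carry no information about the honest inputs, and the transcript is exactly what $\Sim$ produces, so $\mathsf{Hyb}_2$ is (a rewriting of) the ideal execution.

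I expect the main obstacle to be the handling of the ``no-abort'' nature of the ideal model together with the rejection branch of \ProofOfCorrectness. In the real protocol, if the server cheats in the evaluation or the proof, the verifier $\hatPc_0$ outputs \rej and the users abort — but in the ideal functionality for a \LOVE functionality with a corrupted server, the trusted party always sends $\outValue=f(\vx')$ to the (honest) server, and the server simply ``outputs nothing'' if corrupted. So a corrupted real-world server that causes rejection produces the view where it learns the abort, whereas in the ideal world $\Sim$ must produce a consistent view; the resolution is that $\Sim$, upon detecting (by simulating $\hatPc_0$'s checks) that the proof would be rejected, simply instructs the simulated $\hatPc_0$ to announce \rej and produces no Step~\ref{step:decrypt_output} messages — and since the server is corrupted its output is ignored anyway, the joint distribution of (adversary's view, honest server's output) still matches, because the honest server's output in the ideal world is the prescribed $f(\vx')$ but the real-world honest-server case is vacuous here (the server is corrupted in this sub-case). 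Conversely, when the server is honest throughout, the real protocol never rejects by \cref{lem:proof}'s completeness, and the decrypted values equal $f(x)$ exactly, so the server's output is correct. Pinning down that the joint real/ideal distributions agree in every sub-case — honest-server-all-the-way, corrupted-server-that-rejects, corrupted-server-that-passes — is the bookkeeping-heavy part; the cryptographic content is just the two-step hybrid (FHE correctness + $\PRG$ security, then $n$-secure FHE-CPA) plus the soundness of \ProofOfCorrectness from \cref{lem:proof}.
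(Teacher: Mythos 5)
Your proposal is correct and follows essentially the same approach as the paper: the same simulator (extract $x_0$ from the adversary's Step~1 message, generate the FHE keys, send dummy/zero encryptions and honest random bits, run the \ProofOfCorrectness verification locally, and forward the trusted-party output $(y_0,\dots,y_n)$ only when the proof accepts), followed by an indistinguishability argument invoking the soundness of \ProofOfCorrectness (\cref{lem:proof}), FHE correctness together with $\PRG$ security, and $n$-secure FHE-CPA. Your explicit two-step hybrid is a slightly more careful writeup of the paper's one-paragraph indistinguishability claim, and your treatment of the rejection branch captures the same observation the paper makes (that abort occurs with indistinguishable probability in both worlds, and the corrupted server's output is discarded in the joint distribution).
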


\begin{proof}
	Correctness clearly holds. Assume that the server is corrupted by an adversary $\adv$. Its simulator $\Sim_{\adv}$ works as follows.
	
	\begin{enumerate}
		\item Query $\adv$ to obtain the input it sends to $\hatPc_0$, and send it to the trusted party.
		
		\item Receive an output $(y_0,\ldots,y_n)$ from the trusted party (recall that the server has no input and all users are honest).
		
		\item Generate FHE keys $(\pk,\sk)\from\FHEGen\sof{1^{\secParam}}$.
		
		\item Compute $n+1$ dummy encryptions, sample $n+1$ random bits $r_0,\ldots,r_n$, and send them and the public key $\pk$ to $\adv$.
		
		\item The adversary $\adv$ replies with purported encrypted values $(\encout_0,\ldots,\encout_n)$.
		
		\item Simulate \ProofOfCorrectness:
		\begin{enumerate}
			\item The adversary sends a purported codeword $\codex_i$ for each $i\in\sset{0,\ldots,n}$.
			
			\item Perform the same error correction verification as done in the real world. If the verification fails send $\bot$ to $\adv$, output whatever $\adv$ outputs, and halt.
			
			\item Otherwise, execute the PCPP verifier $\Vc_{\PCPP}$, while querying $\adv$ whenever $\Vc_{\PCPP}$ queries the proof.
			
			\item If $\Vc_{\PCPP}$ rejects the proof, send $\bot$ to $\adv$, output whatever $\adv$ outputs, and halt.
		\end{enumerate}
	
		\item If there was no abort during the simulation of \ProofOfCorrectness, send to $\adv$ the output $(y_0,\ldots,y_n)$ as given by the trusted party, output whatever $\adv$ outputs, and halt.
	\end{enumerate}

By the semantic security of the FHE scheme, it follows that the adversary's view in the real world is indistinguishable from the view generated in the ideal world. In particular, $\adv$'s replies are indistinguishable. Next, by \cref{lem:proof}, except with negligible probability, either all users abort and $\adv$ does not obtain the output in both worlds, or no user aborts and $\adv$ receives an output in both worlds. By the definition of the language $L$, the output in the real world are the decryptions of
$$\encout_i=\FHEEval_{\pk}\of{f_i,\encin_0,\ldots,\encin_n;\PRG_i(r_0,\ldots,r_n)}.$$
Observe that if the distribution of the above $(\encout_0,\ldots,\encout_n)$ can be distinguished from the encryptions 
$$\paren{\FHEEnc_{\pk}\of{y_i}}_{i=0}^n$$
in the ideal world, then $\FHEEval$ can be used to break the security of the PRG $\PRG$. Thus, the outputs of $\adv$ in both worlds are indistinguishable.
\end{proof}

\section{Putting It All Together}\label{sec:combine}

In this section, we combine the committee-tree election protocol from \cref{sec:committee} and the protocol for proving correct computation from \cref{sec:proof}, and present a protocol for securely computing any arbitrary short-output functionality assuming at most $\alpha$-fraction of the users are corrupted, where $\alpha<\changed{1/8}$ is a constant. Formally, we assume the parties are given ideal access to the functionality $\fmany$ for sampling the tree of committees. This assumption can be removed by implementing the functionality using \cref{protocol:many_comm}, however, this comes at the cost of conditioning on the event \PCSG defined in \cref{sec:setup}. 

\begin{theorem}\label{thm:main}
	Let $\alpha<\changed{1/8}$, let $m=m(\secParam,n)=\poly(\log n,\secParam)$, and let $f:\sparen{\zo^m}^{n+1}\mapsto\zo^{m\cdot(n+1)}$ be a \LOVE functionality. Assume the existence of pseudorandom generators and $n$-secure fully homomorphic encryption schemes. Then there exists an $\alpha n$-secure $n$-user \LOVE protocol computing $f$ in the $\fmany$-hybrid model.
\end{theorem}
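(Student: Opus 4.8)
The plan is to compile the all-honest-users protocol \GenProto (\cref{protocol:general_func}, whose security against a corrupted server is \cref{lem:general_only_mal_server}) into a protocol tolerating $\alpha n$ corrupted users, using the committee tree delivered by $\fmany$. First the parties invoke $\fmany$, obtaining committees $\C_0,\dots,\C_n$ arranged in a binary tree with root $\C_0$; by the specification of $\fmany$, every alive user $i$ learns the set $\C_i\su[n]$ that forms its committee, and the users of $\C_j$ also learn $\C_{2j+1},\C_{2j+2},\C_{\lfloor (j-1)/2\rfloor}$. By \cref{lem:committee_vast_hon_maj} and a union bound over the $\le n+1$ committees, every committee contains strictly more than a $2/3$-fraction of honest users except with negligible probability; we condition on this event throughout. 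Each committee $\C_i$ then simulates user $\hatPc_i$ of \GenProto: the committee holds $\hatPc_i$'s state as a robust secret sharing among its members, and each per-round ``next-message'' computation of $\hatPc_i$ is carried out by the members of $\C_i$ via the $\nxtmsg$ functionality used in \cref{sec:committee}, whose real-world implementation with security under parallel composition is given in \cref{sec:comm_love}. Since fewer than a third of each committee is corrupted, this internal computation always terminates and is simulatable, so every honest committee behaves externally exactly like an honest $\hatPc_i$; consequently the only remaining adversarial power is that of a corrupted server, which is precisely the setting handled by \cref{lem:general_only_mal_server}.

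\textbf{Inputs and the blocking set.} To provide committee $\C_i$ with the input $x_i$, user $i$ robustly secret-shares $x_i$ among the members of $\C_i$ (which it knows, by the guarantee of $\fmany$), and the server provides $x_0$ to $\C_0$. Committees aggregate up the tree the number of users that delivered an input; if this count falls below $(1-\alpha-\eps/2)n$ the root signals a global abort. Honest users that were already blocked by the server during $\fmany$ (the honest users not in the set $\LIVE'$ returned by $\fmany$), and honest users that the server blocks from reaching their committee at this stage, have their inputs replaced by defaults inside \GenProto; these users form the blocking set $\block$ that the simulator sends to the trusted party, and the aggregation check together with $|\LIVE'|\ge 5n/6$ keeps $|\block|$ within the bound allowed by the ideal model.

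\textbf{Handling a blocking server.} Both \GenProto and the sub-protocol \ProofOfCorrectness were analyzed assuming the server cannot block messages between honest parties; we lift this assumption using the committee tree itself as an alarm network. After every communication round of the emulated execution, each committee exchanges \alive messages with its parent and its two children for $O(\log n/\log\secParam)$ rounds and aborts if a neighbour stays silent; since every committee is honest and the tree has depth $O(\log n)$, any committee that aborts --- because it caught the server cheating, because it was blocked, or because a \GenProto/\ProofOfCorrectness check failed --- causes all committees, and hence all alive users, to abort within $O(\log n)$ rounds. If the server is honest it never blocks and no committee ever aborts, so output delivery to the server is guaranteed. If the server is corrupted, a global abort is harmless: in the corrupted-server ideal world all honest users output nothing in any case, so the simulator (which still receives the output from the trusted party) simply discards it and outputs the aborting transcript it produced for the adversary.

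\textbf{Simulator and main obstacle.} The simulator is a composition of: the $\fmany$-simulator (to simulate the committee-election phase, extract the corrupted committees, and read off $\LIVE'$ and hence $\block$); for every committee with corrupted members, the internal-MPC simulator (the corrupted members' views are sharings below the reconstruction threshold, hence reveal nothing about honest committee states); and the \GenProto-simulator of \cref{lem:general_only_mal_server} run against the ``virtual server'', which is where FHE semantic security, the PRG, and the soundness of \ProofOfCorrectness (\cref{lem:proof}) are invoked. Correctness of the composition follows from the composition theorem (\cref{thm:Composition}), plus the facts that (i) each honest committee faithfully emulates an honest $\hatPc_i$, so the guarantees of \cref{lem:general_only_mal_server} transfer verbatim; and (ii) no user lands in more than $\polylog(n)$ committees (\cref{fact:chernoff}), keeping all user complexities polylogarithmic. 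The step I expect to be most delicate is this composition: the $n$ internal committee-MPCs, the nested call to \ProofOfCorrectness, and the per-round alive checks all run in parallel, so one must rely on the security-under-parallel-composition of the $\nxtmsg$ implementations and carefully synchronize the abort events, verifying that the joint distribution of the adversary's view and the honest parties' outputs matches the ideal one in each of the three regimes (honest server; corrupted server with global abort; corrupted server without abort).
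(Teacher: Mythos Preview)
Your proposal is correct and follows essentially the same approach as the paper: invoke $\fmany$, have each committee $\C_i$ emulate $\hatPc_i$ of \GenProto via the $\nxtmsg$ machinery (implemented in \cref{sec:comm_love}), and reduce to \cref{lem:general_only_mal_server}. The paper's protocol \MainProto and \cref{thm:general_hybrid} match your outline almost step for step, including the input-count aggregation and the simulator that composes the $\fmany$ interaction with the \GenProto simulator.

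Two small deviations worth noting. First, the paper uses the threshold $(1-\alpha)n$ (not $(1-\alpha-\eps/2)n$) for the input-aggregation check; this matters because the ideal world caps $|\block|\le\alpha n$, and your looser threshold could in principle let $|\block|$ exceed $\alpha n$ for small $\alpha$. Second, the paper does not add a separate per-round \alive layer on the committee tree; instead it relies on the fact that a blocked committee causes its $\nxtmsg$ call to output~$\bot$, which its members then treat as a global abort (and, implicitly, on the tree/graph structure to propagate this). Your explicit tree-based \alive mechanism is a legitimate way to make this propagation concrete, but note that the binary tree has depth $\Theta(\log n)$, so propagation takes $O(\log n)$ rounds rather than the $O(\log n/\log\secParam)$ you wrote at one point (you correct this later in the same paragraph).
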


The idea is to have the parties generate the committees in a tree-like structure by calling $\fmany$, and let each committee $\C_i$ sampled by the functionality emulate a single user $\hatPc_i$ from protocol \cref{protocol:general_func}. This result in a secure protocol computing any \LOVE functionality in the hybrid model. Similarly to the definitions used in \cref{sec:many_committees}, we formalize this by letting the parties have access to the ``next-message'' function of each committee. In \changed{\cref{sec:comm_love}} we show to implement these.

We first define the ``next-message'' hybrid functionality of the \ith committee $\C_i$, denoted $\nxtmsg_i$. The inputs of the users to each call of $\nxtmsg_i$ are a $(1-\beta-\eps/2)\secParam$-out-of-$\secParam$ Shamir's secret sharing scheme of the view of $\C_i$, where $\beta$ is the fraction of malicious users among the alive users. If at least $(1-\beta-\eps/2)\secParam$ inputs are provided to the functionality, then it proceeds to compute the next-message as specified by the protocol, and sharing the output among the users in $\C_i$. If less then $(1-\beta-\eps/2)\secParam$ inputs are provided (\eg the user is corrupted or a corrupted server blocked an honest user), the functionality sends $\bot$ to all of its users. If a user obtain $\bot$ as the output from some $\nxtmsg_i$, then it aborts.

We abuse notions and describe the protocol as if each committee is a single party. Formally, whenever we say that the server sends a message to $\C_i$, it means that it (supposedly) shares it among the users of $\C_i$. Similarly, when we say that $\C_i$ sends a message to the server, it means that every user in $\C_i$ sends its share of the message to the server. Finally, whenever a committee $\C_i$ sends a message $\mathsf{msg}$ to another committee $\C_j$, it means that each user in $\C_i$ shares its share of $\mathsf{msg}$ among the users in $\C_j$. In the following, we let $\nxtmsg=\sset{\nxtmsg_i}_{i=0}^n$. 

\begin{protocol}{\MainProto}\label{protocol:main}
	{\bf Inputs:} Each user $i\in[n]$ holds an input $x_i$. The server holds $x_0$. 
	
	{\bf Common inputs:} All parties hold the security parameter $1^{\secParam}$ and the number of users $n$ (held in binary by the users, and in unary by the server).

	\smallskip
	
	\begin{enumerate}[leftmargin=15pt,rightmargin=10pt,itemsep=1pt,topsep=0pt]
		\item The parties call \fmany to elect $n+1$ committees $\sset{\C_i}_{i=0}^n$. Recall that each user $i$ holds the committee $\C_i$, and all committee are held in a tree structure, \ie the users in $\C_i$ holds $\C_i$, $\C_{2i+1}$, $\C_{2i+2}$, and $\C_{\floor{(i-1)/2}}$ (assuming the indexes are in $\sset{0,\ldots,n}$).
		
		\item Each user $i$ sends its input $x_i$ to $\C_i$ (held shared). If user $i$ was blocked, then $x_i$ is replaced with a default value.\footnote{Formally, $x_i$ is held shared by the users in $\C_i$, and is supposed to be reconstructed by $\nxtmsg_i$. If the functionality cannot reconstruct the input, then instead of aborting, it replaces it with a default value.}
		
		\item The committees aggregate the total number of inputs they received. If it is less than $(1-\alpha)n$, then they abort (note that with overwhelming probability every user belong to at least one committee, hence all users abort).
		
		\item The parties emulate an execution of \GenProto, with each committee $\C_i$ taking the role of user $\hatPc_i$. Specifically, whenever $\hatPc_i$ supposes to compute a message, the users in $\C_i$ call $\nxtmsg_i$ to obtain the shares of this message.
		\begin{itemize}[topsep=0pt]
			\item If at any point during the emulation, the functionality $\nxtmsg_i$ outputs $\bot$ to the users in $\C_i$, then they all abort.
		\end{itemize}
		
		\item The server outputs whatever it receives from the emulation of \GenProto.
	\end{enumerate}
\end{protocol}

The following lemma asserts the security of \MainProto. Combined with the secure implementation of $\nxtmsg$ we will present later in \changed{\cref{sec:comm_love}} and the composition theorem, this proves \cref{thm:main}.

\begin{lemma}\label{thm:general_hybrid}
Let $\alpha<\changed{1/8}$, let $m=m(\secParam,n)=\poly(\log n,\secParam)$, and let $f:\sparen{\zo^m}^{n+1}\mapsto\zo^{m\cdot(n+1)}$ be a \LOVE functionality. Assume the existence of pseudorandom generators and $n$-secure fully homomorphic encryption schemes. Then \MainProto is an $\alpha n$-secure $n$-user \LOVE protocol computing $f$ in the $\sset{\fmany,\nxtmsg}$-hybrid world. %
\end{lemma}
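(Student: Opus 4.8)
The plan is to prove \cref{thm:general_hybrid} via the standard real vs.\ ideal paradigm, working in the $\set{\fmany,\nxtmsg}$-hybrid model and treating each committee $\C_i$ as if it were a single honest party $\hatPc_i$ executing \cref{protocol:general_func}. The key observation that makes this clean is the structure of the $\nxtmsg_i$ functionalities: a committee $\C_i$ produces a meaningful next message exactly when at least $(1-\beta-\eps/2)\secParam$ of its users contribute shares, and by \cref{lem:many_committees} (the guarantee provided by $\fmany$) every committee sampled by $\fmany$ contains more than a $(1-\beta-\eps/2)$-fraction of honest users except with negligible probability. Hence, when the server is \emph{honest}, no $\nxtmsg_i$ ever outputs $\bot$ (every honest user always supplies its share), so the emulation of \GenProto proceeds to completion and correctness of $f$ follows directly from the correctness of \cref{protocol:general_func} together with the input-replacement convention for blocked users. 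When the server is \emph{corrupted}, it can block honest users (up to $\alpha n$ of them, which is already folded into $\beta$), and a committee aborts only when it fails to receive enough shares — which the simulator can detect, since a corrupted server learns the communication pattern.

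First I would handle the honest-server case: the simulator $\Sim_\adv$ receives the inputs of the corrupted users (and the server's input $x_0$) from $\adv$, forwards them to the ideal functionality computing $f$ (with the set $\block$ of blocked users being empty, since an honest server blocks nobody in an honest execution, modulo the DDoS simplification already assumed away in \cref{sec:Background}), and then simulates $\adv$'s view: the shares of committee messages that corrupted users see inside honest committees. Because each honest committee $\C_i$ has more than $(1-\beta-\eps/2)\secParam > (5/6)\secParam$ honest members and the sharing is $(1-\beta-\eps/2)\secParam$-out-of-$\secParam$, the corrupted users in $\C_i$ see strictly fewer than the threshold number of shares, so their joint view is a collection of independent uniformly random field elements — trivial to simulate. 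The only subtlety is that $\adv$ also controls honest committees' interaction \emph{with the server} in the sense that corrupted committee members send their shares to the server; but since the server is honest here, these are just forwarded faithfully, and the output $(y_0,\dots,y_n)$ the server reconstructs equals $f$ applied to the (default-patched) inputs.

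Second, for the corrupted-server case, the simulator composes two pieces: the simulator for $\fmany$ (malicious-server branch) from \cref{lem:many_committees}, which yields the committee structure, the set $\LIVE$ and $\LIVE'$ of alive users, and lets $\adv$ choose $\C_0$; and the simulator $\Sim_\adv$ for \GenProto from \cref{lem:general_only_mal_server}, which handles the FHE key generation by $\C_0$, the dummy encryptions, the \ProofOfCorrectness sub-protocol (error-correction check plus PCPP verification against committed proof bits), and the final output delivery. I would have $\Sim$ extract the set of blocked honest users from $\adv$ (available round-by-round from the communication pattern the corrupted server sees), set $\block$ accordingly with $|\block|\le\alpha n$, and feed the corrupted users' inputs plus default inputs for blocked users to the ideal $f$. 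Whenever $\Sim$'s emulation of a $\nxtmsg_i$ call would output $\bot$ — because the server blocked so many users of $\C_i$ that the threshold is not met — or whenever \ProofOfCorrectness rejects, $\Sim$ sends the abort/blocking information to the trusted party appropriately and simulates the honest committees aborting; by the small-diameter graph guarantee already baked into $\fmany$'s output and the committee-tree aggregation step, this propagates to all honest users just as in the real execution.

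The main obstacle will be arguing indistinguishability of the emulated \GenProto transcript when the committees are \emph{not} monolithic honest parties but shared among users, some of whom $\adv$ controls: I must show that the views of corrupted committee members add nothing beyond what the single-party simulator of \cref{lem:general_only_mal_server} already produces. This reduces to the privacy of the $(1-\beta-\eps/2)\secParam$-out-of-$\secParam$ Shamir scheme — each committee has at most $(\beta+\eps/2)\secParam < \secParam/3$ corrupted users, strictly below threshold, so every partial view is perfectly simulable by fresh randomness — combined with the observation that the \emph{reconstructed} messages flowing between committees are exactly the messages of the idealized parties $\hatPc_i$ in \cref{protocol:general_func}. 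Thus the hybrid-model composition theorem (\cref{thm:Composition}) lets me glue the $\fmany$-simulator, the $\nxtmsg$-simulator, and the \GenProto-simulator together, and the negligible error terms (from \cref{lem:many_committees}, from the signature/FHE/PRG/VC assumptions invoked in \cref{lem:general_only_mal_server}, and from the Chernoff bound bounding how many committees any user joins) accumulate to a single negligible function, completing the proof. I would end by noting that security holds for any $\alpha<1/6$ because that is precisely the bound under which $\beta=|([n]\setminus\LIVE)\cup\corrset|/n \le 1/6+\alpha < 1/3$ keeps every committee below the one-third corruption threshold required for the secret-sharing-based emulation. \qed
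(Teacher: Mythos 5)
Your proposal lands on the same route as the paper's proof: treat each $\C_i$ as the monolithic party $\hatPc_i$, use the $(1-\beta-\eps/2)\secParam$-out-of-$\secParam$ Shamir privacy to argue that the at-most $(\beta+\eps/2)\secParam$ corrupted members of any committee see only fresh random shares, extract the block set $\block$ from the corrupted server's communication pattern and feed it (with default inputs for blocked users) to the trusted party, and reuse the \GenProto simulator $\Sim'_\adv$ of \cref{lem:general_only_mal_server} for the FHE/PCPP/output transcript. Two framing points should be corrected, though neither changes the substance. First, \MainProto is analyzed in the $\set{\fmany,\nxtmsg}$-hybrid model, so there is no ``$\fmany$-simulator from \cref{lem:many_committees}'' to compose with anything here: $\fmany$ is a trusted party, and the simulator for \MainProto directly emulates $\fmany$'s interface (receives $\adv$'s choice of $\LIVE$, $b$ and $\C_0$, and samples the remaining committees itself), exactly as the paper does. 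The simulator of \cref{lem:many_committees} and the composition theorem \cref{thm:Composition} enter only afterwards, when passing from \cref{thm:general_hybrid} to \cref{thm:main}; invoking \cref{thm:Composition} \emph{inside} the hybrid-world proof is circular. Second, the ``small-diameter graph guarantee'' is not part of $\fmany$'s output — that graph lives in $\CommSetup$ and $\ManyCommittees$ and is gone in the $\fmany$-hybrid world; there, a committee ``aborting'' just means $\nxtmsg_i$ returns $\bot$, and since a corrupted server leaves honest users with no output in the real/ideal experiments, full abort propagation to every user is not even needed for the indistinguishability claim.
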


\begin{proof}
	Fix an adversary $\adv$ corrupting a subset $\corrset\su[n]$ of the users, of size at most $|\corrset|\leq\alpha n$. Observe that if the server is honest, then except with negligible probability, all committees will contain sufficiently many honest users, so that the calls to $\nxtmsg$ will not output $\bot$. Thus, no honest user will ever abort, and the server obtains the output.
	
	Assume that the server is corrupted. Intuitively, security holds, since the view of $\adv$ when emulating \GenProto is, up to shares that reveal no information, exactly the same as in a real execution of \GenProto (with honest parties and no blocking of messages). We next define the simulator $\Sim_{\adv}$. To simplify the presentation, we will not concern ourselves with the shares the adversary receives throughout the execution of the protocol.
	\begin{enumerate}
		\item Query $\adv$ for its inputs to the functionality $\fmany$. If it sent \abort at any point, then all users abort and the protocol halts. In this case, the adversary does no receive any more messages. Output whatever $\adv$ outputs, and halt.
		
		\item Otherwise, the server blocks a set of users $\block\su[n]$ from sending their inputs to $\C_i$.\footnote{Observe that the simulator can obtain this set from the malicious server, since at the beginning of every round, the server expects to receive the set of pairs $(i,j)$, where user $i$ sends a message to user $j$.}
		
		\item If the server blocks more than $\alpha n$ users from providing their inputs to their committees, then the protocol halts and the adversary does no receive any more messages. The simulator outputs whatever $\adv$ outputs, and halt.
		
		\item Otherwise, query $\adv$ for the inputs of the malicious users that it sends to the committees (formally, $\adv$ sends it in the form of shares).
		
		\item Execute the simulator $\Sim'_{\adv}$ for \cref{protocol:general_func} with the following modification.
		\begin{itemize}[topsep=0pt]
			\item Suppose the adversary blocks too many users in some committee $\C_i$ from sending a message to another committee $\C_j$ (hence $\nxtmsg_j$ outputs $\bot$), \emph{before} the parties decrypt the output for the server (see Step~\ref{step:decrypt_output} of \cref{protocol:general_func}). In this case, \MainProto aborts. Then the adversary does no receive any more messages. and we let the simulator output whatever $\adv$ outputs, and halt.
		\end{itemize}
		
		\item Otherwise, send to the trusted party $\trustp$ whatever $\Sim'_{\adv}$ sent alongside the set $\block$ of blocked users. Then, send the output to $\adv$, output whatever it outputs, and halt (recall that this is exactly what $\Sim'_{\adv}$ does in case the adversary did not cause an abort).
	\end{enumerate}

Since the view of $\adv$ when emulating \cref{protocol:general_func} is the same as a real execution (up to shares), it follows that the view generated by $\Sim'_{\adv}$ is indistinguishable. Therefore, the probability the adversary causes all honest users to abort in the real world, is indistinguishable from the ideal world. Finally, conditioned on the protocol not aborting, the output received by the adversary in the ideal world, is the same output it receives when simulating an execution of \cref{protocol:general_func}. Therefore, as shown in \cref{lem:general_only_mal_server}, the output of simulator is indistinguishable from the output of $\adv$ in the real world.
\end{proof}

\subsection{Secure Implementation of The Next-Message Functions}\label{sec:comm_love}

We explain how to implement the next-message functionality $\nxtmsg$ used in our protocols to compute the messages generated for each personal committee or each (global) committee. \changed{Specifically, the implementation will be secure under {\em parallel composition}. The implementation for both cases is done using the same protocol. \changed{Essentially, our protocol is a slight variation of the protocol given in Goldwasser and Lindell~\cite[Corollary 2]{GL03}}. %

\begin{remark}
We view every committee as a different ``computation unit''. In particular, if an honest user aborts from one committee, it does {\em not} automatically abort from other committees it participates in. This is important as otherwise a malicious user $i$ might affect committees it does not participate in, by causing an honest user $j$ to abort from {\em another} committee which both $i$ and $j$ participate in.
\end{remark}

We first show how to implement a broadcast channel. Then, we will show how the parties can securely compute any functionality, given the broadcast channel. We stress that these computations are always performed by subsets of users that are of size that is poly-logarithmic in the number of users. Hence, the protocols implementing these computations may be of polynomial complexities in the number of participants. In the following we fix positive constants $\alpha<\changed{1/8}$ and $\eps=1/8-\alpha$.
}

\paragraph{Computing broadcast.}
We present a deterministic protocol for computing the broadcast functionality. %
First, the sender sends its message to the server, which in turn passes it on to everyone else. Second, all users exchange with each other what they received from the server. A user aborts if less than $1-\alpha-\eps/2$ of the messages it received are the same as what it received from the server. In particular, the sender verifies that at least $1-\alpha-\eps/2$ of the users sent it  its input. Finally, the users check that at most $\alpha+\eps/2$ of them have aborted, and abort if this is not the case. 

Observe that, although when the server is malicious there are roughly $2\alpha$ corrupted users in the (global) committees, when it is honest there are roughly $\alpha$. Thus, any user seeing inconsistencies with more than $\alpha$ users can abort. Now, clearly, if the server is honest then all honest users agree on the output (even if the sender is corrupted). On the other hand, if the server is malicious, then any user that disagrees with at least $\alpha+\eps/2$ of the users aborts. Moreover, if more than $\alpha+\eps/2$ of the users abort at the second step, then everyone will abort at the last step. Finally, as the protocol admits perfect security, by the work of Kushilevitz, Lindell, and Rabin~\cite{KLR10} it follows that it is UC-secure and thus can be securely run in parallel.

Observe that in the above protocol, a malicious server can decide which users to block, depending on the messages it received. Furthermore, it can change the message at the cost of having the sender abort. As we show below, these two issues do not affect the security of the overall protocol.

\paragraph{Computing any functionality.}
We next explain how the user in each PC and committee can securely compute any functionality. The security that the protocol admits is such that if more than $\alpha+\eps/2$ honest users are blocked, then the protocol is secure-with-abort (\ie the adversary can abort the computation after obtaining the output). Otherwise, the protocol admits guaranteed output delivery.  Specifically, the computation will be secure under parallel composition. This will allow the \changed{committees} and PCs of honest users, to effectively be viewed as honest parties (recall that unlike honest users, if the server blocks a PC \changed{or a committee} then all users abort).

We let the users emulate an execution of the protocol of \citet{RB89}. The protocol admits information theoretic security, assuming secure point-to-point channels between every pair of users, an honest majority, and the availability of a broadcast channel. We show that the implementation of broadcast presented above is sufficient for the users to emulate the protocol of \cite{RB89}. Consider the case a user broadcasted a message using the above protocol at some round. %
Then, even though the server can block the users based on the value of the message to be sent, it could have done so at the beginning of the next round. Moreover, the server can only block at most $(\alpha+\eps/2)$-fraction of the honest users (viewed as aborting, malicious parties, with security guaranteed to hold by the adaptive security of \cite{RB89}), in addition to having at most $(2\alpha+\eps/2)$-fraction corrupted users (with overwhelming probability). Furthermore, changing the sender's message each computation of broadcast causes it to abort, hence it can be viewed as one of the aborting malicious parties. Thus, the security guarantees from the broadcast protocol suffices. 

\changed{Next, observe that the users \emph{do not} have secure channels, as a malicious server can block messages between any pair of users. To overcome this, whenever a party does not receive a message from another party, it uses the broadcast protocol to send an accusation as well as its identity. Then, all other users label \emph{both} users as malicious (even if one of them is honest). Note that a malicious server can abort $2\alpha+\eps/2$ honest users by sending false accusations, however, aborting more will cause all users in the committee to abort with overwhelming probability.

Finally, in order to maintain an honest majority, recall that in every (global) committee, a malicious server can have $2\alpha+\eps/2$ malicious users. Since it can abort an additional $2\alpha+\eps/2$, we require that $4\alpha+\eps<1/2$, which indeed holds since $\alpha<\changed{1/8}$.}

We stress that even if the server blocks the communication with different users for each user, the protocol is still secure due to the use of broadcast (recall that the users check that sufficiently many users are active at the end of the broadcast protocol). Furthermore, even though the server can block users adaptively, the protocol remains secure as the protocol of Rabin and Ben-Or~\cite{RB89} is secure against adaptive adversaries.\footnote{\citet{RB89} referred to them as dynamic adversaries.} Finally, as shown by Canetti~\cite{Canetti01} and Kushilevitz el al.~\cite{KLR10}, the protocol of Rabin and Ben-Or~\cite{RB89} is secure under concurrent general composition,\footnote{Formally, for security with concurrent general composition to hold, when a user obtains its input to the protocol it first broadcasts a message indicating it is ready. The actual protocol of \citet*{RB89} starts after all users broadcast the message.} thus the parties can run it in parallel while maintaining overall security.

\changed{
\begin{remark}
We observe that the constant 1/8 can be improved assuming an honest server is allowed to block messages. This is because the server knows whether the sender of the message actually sent it or not, and block it from every other user in case it did not send any message. Therefore, unlike in the proposed solution, a corrupted server can only block $\alpha$ more honest users (in addition to the $2\alpha$ corrupted users there are in each committee). Thus, for honest majority it is required to have $\alpha<1/6$.
\end{remark}
}

\subsection{Computing Shallow and Small Circuits}\label{sec:log_depth}
In this section we sketch a simple protocol for computing any circuit of size $\tilde{O}(n)$, and polylogarithmic depth, fan-in, and fan-out. Unlike the general protocol, here we do not need to assume the existence of a fully homomorphic encryption scheme. Two important example of functionalities computable by such circuits are summation and sorting.

The idea is the following: the parties will use protocol \cref{protocol:many_comm} to generate $\tilde{O}$ committees in a ``tree-like manner'', where each committee corresponds to either a single gate, an input wire, or an output wire in the circuit. Then, using the help of the server, each committee receives the list of committees to which it should send the message according to the circuit. Since with overwhelming probability, each input and output wire, and each gate, will have a committee associated with it, where all committees contain sufficiently many honest users, the committees can perform this computation securely. %

To ensure the server doesn't cheat, and connect committees to incorrect ones, or to fabricated ones, we let each committee sign its identity as well as its associated wire/gate in the circuit (recall that in \cref{protocol:many_comm}, all committees obtain a secret signature key, while all parties obtain the corresponding public key). Then, when the server sends a list of committees, each committee verifies that the signature is valid, and that it is consistent with the circuit.

Finally, as the size of the circuit is $\tilde{O}(n)$, it follows that each user appears in a polylogarithmic number of committees, except with negligible probability. In addition, the complexity of each committee is polylogarithmic since the fan-in, fan-out, and the depth of the circuit are polylogarithmic.

For a fixed circuit $C$, we denote the functionality for sampling the committees by $\fcircuit$.
Formally, we have the following result.

\begin{lemma}
    Let $\alpha<\changed{1/8}$ and let $C$ be a circuit of size $n\cdot\log^c n$, depth at most $\log^c n$, and fan-in and fan-out at most $\log^c n$. Assume the existence of $n$-secure signature schemes and position binding vector commitment schemes. Then there exists an $n$-user \LOVE protocol computing $\fcircuit$, such that conditioned on the event \PCSG, the protocol is $\alpha n$-secure.
\end{lemma}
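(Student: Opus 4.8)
The plan is to follow the template already used for \cref{thm:main} and \cref{lem:many_committees}, specializing the ``tree of committees'' construction to the combinatorial shape of the circuit $C$ rather than the complete binary tree over the users. First I would describe the functionality $\fcircuit$ in the ideal world (both for an honest and for a malicious server), in close analogy with $\fmany$: the trusted party samples a first committee $\C_0$ using the $\felectstar$ mechanism (so a malicious server may bias it subject to the usual $\bin_j\subseteq\C_0$, $\C_0\setminus\bin_j\subseteq\corrset$ constraint), then samples honest committees $\C_w$ for every wire $w$ and $\C_g$ for every gate $g$ of $C$, and finally hands each committee its associated wire/gate label together with the list of committees it should talk to according to the topology of $C$. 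Because $|C|=n\cdot\log^c n = \tilde O(n)$, the number of committees is $\tilde O(n)$, so the union bound over \cref{lem:committee_vast_hon_maj} still gives that \emph{every} committee contains a vast honest majority (more than $2/3$ honest, hence in particular $>(1-\alpha-\eps/2)\secParam$ honest under the implicit secret-sharing threshold) except with negligible probability, and a Chernoff bound (\cref{fact:chernoff}) over the $\tilde O(n)$ committees of size $\secParam$ each shows no user lies in more than $\polylog(n)$ committees except with negligible probability, which is what keeps user complexity polylogarithmic; the bounded fan-in, fan-out and depth of $C$ keep per-committee complexity polylogarithmic as well.

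Next I would give the protocol. Invoke $\felectstar$ to obtain $\C_0$ (conditioned on \PCSG, which \cref{thm:committee} requires); have $\C_0$ generate signature keys $(\pk,\sk)\from\SigGen(1^\secParam)$ and propagate $\pk$ to all users exactly as in \cref{protocol:many_comm} (committee members send $\pk$ to their personal committees, the PCs reconcile it over the small-diameter communication graph guaranteed by \PCSG, and deliver it to their users), passing $\sk$ down a spanning tree of the ``committee-generation tree'' of depth $O(\log n)$ so that every committee holds $\sk$ in shared form. Then $\C_0$ spawns committees for the wires and gates of $C$ in a tree-like manner (the generation tree need not be the circuit — it is just a scheduling device of depth $O(\log n)$ over $\tilde O(n)$ nodes). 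Each committee $\C_v$ signs the pair $(\mathsf{id}(\C_v),\text{label of }v)$ and sends this to the server; the server responds to each committee with the (signed) list of committees it is adjacent to in $C$; each committee verifies every signature under $\pk$ and checks that the claimed adjacency is consistent with the publicly-known circuit $C$, aborting otherwise. As in \cref{protocol:many_comm}, aborts of a committee are propagated: if the root committee aborts it tells $\secParam$ random PCs, which flood the graph, so that either all honest users output a correct committee-assignment or all abort. Finally each user $i$ outputs the committees it belongs to together with (shares of) the wiring information, and an honest server outputs all committees; wherever I ``describe the protocol as if each committee is a single party,'' the $\nxtmsg$ hybrid functionalities of \cref{sec:comm_love} stand in.

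The proof of $\alpha n$-security conditioned on \PCSG then splits on whether the server is corrupted, mirroring the proof of \cref{thm:many_committees}. If the server is honest: by \PCSG all PCs and all committees contain a strong honest majority, so no $\nxtmsg$ call outputs $\bot$, the signature checks all pass and every adjacency matches $C$, hence no honest user aborts and the output equals the ideal one. If the server is corrupted: I construct a simulator that first simulates the $\felectstar$ interaction (as in \cref{thm:many_committees}, querying $\adv$ for $\LIVE$, the bin count $b$ and the biased $\C_0$), simulates the $\pk$-distribution phase (aborting to $\trustp$ exactly when fewer than the required fraction of honest PCs received $\pk$), then sends $\C_0$ to $\trustp$, receives the honestly-sampled committees $\{\C_v\}$, and simulates the signing/wiring rounds — all $\adv$ sees is random shares and correctly-formed signatures of uniformly random labels, which it could generate itself. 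Conditioned on the matching views, the honest users' outputs coincide in the two worlds unless (i) some committee failed the majority bound — negligible by \cref{lem:committee_vast_hon_maj} and the union bound — or (ii) $\adv$ forged a signature on a committee's label, which would break the $n$-security of $\Sig$ (hence negligible), or made the adjacency list inconsistent with $C$, which is publicly checkable and so always caught. Composing with the secure implementation of $\nxtmsg$ from \cref{sec:comm_love} (secure under parallel composition) and the composition theorem (\cref{thm:Composition}) upgrades this hybrid-world statement to the real world, conditioned on \PCSG.

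The main obstacle I expect is bookkeeping rather than a genuinely new idea: making precise that the ``scheduling'' tree used to spawn the $\tilde O(n)$ committees has depth $O(\log n)$ and that each user ends up in only $\polylog(n)$ committees despite the committees now being indexed by an arbitrary bounded-degree, polylog-depth circuit rather than a balanced binary tree, and carefully stating the ideal functionality $\fcircuit$ so that the malicious-server clause allows exactly the deviations the real protocol permits (biasing $\C_0$, blocking users — which here must be handled just as in \MainProto, replacing blocked users' contributions by defaults — and nothing more). Once the functionality is pinned down, every probabilistic claim reduces to an already-proved lemma (\cref{lem:committee_vast_hon_maj}, \cref{fact:chernoff}, \cref{fact:hoeffding}, \PCSG) plus unforgeability of $\Sig$, so no new technical machinery is needed.
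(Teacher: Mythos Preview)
Your proposal is correct and follows essentially the same approach as the paper: the paper only sketches this lemma (in the paragraphs preceding its statement) rather than proving it formally, and your plan is a faithful elaboration of that sketch --- spawn $\tilde O(n)$ committees via the tree-generation mechanism of \cref{protocol:many_comm}, have each committee sign its identity together with its wire/gate label, let the server supply the adjacency lists which committees verify against the public circuit and the signatures, and argue security by the union bound over \cref{lem:committee_vast_hon_maj} plus unforgeability of $\Sig$. Your explicit separation of the depth-$O(\log n)$ ``scheduling tree'' used to spawn committees from the circuit topology itself is a helpful clarification the paper leaves implicit.
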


\begin{theorem}\label{thm:logDepth}
	Let $\alpha<\changed{1/8}$, let $m=m(\secParam,n)=\poly(\log n,\secParam)$ let $c>0$, and let $f:\sparen{\zo^m}^{n+1}\mapsto\zo^m$ be a \LOVE functionality. Assume that $f$ can be computed by a circuit $C$ of size $n\cdot\log^c n$, depth at most $\log^c n$, and fan-in and fan-out at most $\log^c n$. Then there exists an $\alpha n$-secure $n$-user \LOVE protocol computing $f$ in the $\fcircuit$-hybrid world.
\end{theorem}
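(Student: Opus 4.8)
The plan is to reduce \cref{thm:logDepth} to the combination of the committee-tree functionality $\fcircuit$ and the secure next-message implementation of \cref{sec:comm_love}, following the same template as \cref{thm:main} but replacing the FHE-based evaluation of \GenProto with a direct circuit-emulation step. First I would formally specify $\fcircuit$: like $\fmany$, it samples a tree of $\tilde O(n)$ committees conditioned on \PCSG, but each committee is now tagged with the identity of one gate, input wire, or output wire of $C$, together with the adjacency information needed for that committee to know who it should forward values to; $\C_0$ also samples signature keys and broadcasts the public key to all users via the communication graph exactly as in \cref{protocol:many_comm}. The lemma preceding the theorem (asserting $\fcircuit$ is $\alpha n$-securely realizable conditioned on \PCSG) is exactly the analog of \cref{lem:many_committees} and \cref{thm:many_committees}, so I would state its proof as "identical to \cref{thm:many_committees}, with the only change being that the tree has $\tilde O(n)$ nodes labeled by the gates of $C$ and each spawned committee additionally receives the signature of its label"; the union bound over $\tilde O(n)$ committees via \cref{lem:committee_vast_hon_maj} still gives overwhelming probability that every committee has a vast honest majority, and the fan-in/fan-out/depth bounds keep every user in $\polylog(n)$ committees and every committee's workload $\polylog(n)$.

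Next I would describe the main protocol \MainProto-analog in the $\{\fcircuit,\nxtmsg\}$-hybrid world: the parties call $\fcircuit$; each user $i$ routes its input (secret-shared) to the committee tagged with input wire $i$; the committees aggregate the number of inputs actually received and abort if fewer than $(1-\alpha)n$; then they emulate a gate-by-gate evaluation of $C$ where each gate-committee, upon holding shares of its input wire values, calls its $\nxtmsg$ functionality to compute (shares of) its output wire value and forwards them to the committees of the wire's downstream gates — with the server's help to learn the destination list, which each committee verifies using the signatures on gate/wire labels against the public key it holds. Output-wire committees send their reconstructed bits to the server. Because wires carry secret shares and each committee is treated as an honest party (any blocking of a committee triggers a cascade abort through the graph, as in \PCSG), correctness and privacy reduce to the honest-majority MPC inside each committee plus the unforgeability of the signature scheme preventing the server from wiring in fabricated or misconnected committees.

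Then the security proof mirrors \cref{thm:general_hybrid}: if the server is honest, all committees have honest majorities w.o.p., no $\nxtmsg$ call returns $\bot$, and output delivery is guaranteed. If the server is corrupted, I would build a simulator that queries $\adv$ for its $\fcircuit$ inputs (aborting if $\adv$ sends \abort or blocks more than $\alpha n$ input-routings), queries $\adv$ for corrupted users' shared inputs, and then runs the circuit-emulation in its head; since wire values are held only as shares of a $(1-\beta-\eps/2)$-out-of-$\secParam$ scheme with $\beta<1/3$ and $\adv$ sees at most $(2\alpha+\eps/2)\secParam<\secParam/2$ shares per committee, $\adv$'s view consists of uniform independent shares and can be simulated without knowing the honest inputs; the only non-trivial event is $\adv$ forging a signature to insert a bogus committee connection, which happens with negligible probability. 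At the end the simulator sends the corrupted inputs plus the block-set to the trusted party, obtains the output, and (if no abort occurred in the emulation) hands it to $\adv$. Finally, composing with the parallel-secure implementation of $\nxtmsg$ from \cref{sec:comm_love} and \cref{thm:Composition} gives the real-world $\alpha n$-secure protocol, and implementing $\fcircuit$ via the preceding lemma (at the cost of conditioning on \PCSG, which \cref{thm:PC_secure} shows holds w.o.p.) completes \cref{thm:logDepth}.

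The main obstacle I anticipate is making the "emulate $C$ gate-by-gate with committees as parties" step rigorous while keeping user complexity $\polylog(n)$: one must argue that the secret-sharing re-sharing as values pass from one committee to the next (each committee re-shares its output wire value among the downstream committees' members, à la the message-passing convention used throughout \cref{sec:many_committees}) composes correctly under parallel execution of the many $\nxtmsg$ calls, and that the server's adaptive blocking of inter-committee messages is equivalent to an adversary that decided its blocking at the start of the round — this is the same subtlety handled in \cref{sec:comm_love} for the $\nxtmsg$ implementation, so I would invoke it rather than re-prove it, but it needs to be stated carefully. A secondary subtlety is bounding the depth of dependencies: the emulation must proceed in $O(\mathrm{depth}(C))=\polylog(n)$ synchronized layers with $O(\log n/\log\secParam)$ "check-alive" rounds between layers (as in Steps like \ref{step:check_alive_committee}), so that a single committee abort propagates to everyone before the server can exploit it — I would fold this into the protocol description explicitly.
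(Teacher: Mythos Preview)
Your proposal is correct and follows essentially the same approach as the paper. The paper itself gives only a sketch for \cref{thm:logDepth} (the paragraphs preceding the theorem in \cref{sec:log_depth}), and your write-up is a faithful and more detailed elaboration of that sketch: committees tagged with gates/wires via an $\fmany$-variant, signatures on labels so the server cannot miswire or fabricate committees, gate-by-gate emulation through $\nxtmsg$ calls with secret-shared wire values, and a simulator argument mirroring \cref{thm:general_hybrid}; the obstacles you flag (parallel composition of the re-sharing and the layered check-alive rounds) are exactly the ones the paper defers to \cref{sec:comm_love}.
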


\subsubsection{Application to the Shuffle Model of Differential Privacy}\label{sec:shuffle}
Differential privacy~\cite{DMNS06} is a rigorous definition for privacy. It typically involves a server or a data curator that aggregates the data of $n$ individuals (or users) in a way that guarantees that the outcome of the computation does not leak too much information on the data of any single individual. Most of the work on differential privacy focuses on the case where the server is fully trusted with the raw data, and privacy is only provided w.r.t.\ an observer that sees the outcome of the computation. In recent years, however, there has been a growing interest in studying models where the server is untrusted. The first model in this vein, called the {\em local model}~\cite{DMNS06,KLNRS08}, requires each user to randomize its input before sending it to the untrusted server, who aggregates all the noisy reports. The amount of randomization needs to be sufficiently high such that the server learns almost no information about the data of any single user, and hence, this model generally suffers from significant loss of accuracy.\footnote{We stress that the server can still learn global statistics about the population.} 

The {\em shuffle model of differential privacy}~\cite{DMNS06,IKOS06,BittauEMMRLRKTS17,ErlingssonFMRTT19,CheuSUZZ19,BalleBGN19} aims to alleviate the loss of accuracy of the local model while maintaining its trust guarantees. This is achieved by augmenting the local model with a {\em shuffle functionality} that takes messages from the users and delivers them to the server after randomly permuting them. The fruitful line of work on the shuffle model shows that having access to such a shuffle functionality can lead to dramatic accuracy improvements over the plain local model of differential privacy. This naturally raises the question of efficiently implementing such a shuffle functionality. Recently, Bell at al.~\cite{BellBGL020} presented a secure implementation of the shuffle functionality with constant rounds of communication. However, the runtime and the communication complexity of the users in their protocol is linear in $n$ (the number of users), and reducing users' complexities to polylogarithmic in $n$ was stated as an open question. As we next explain, \cref{thm:logDepth} resolves this question as an important special case.

Observe that in order to implement the shuffle functionality, it suffices to implement a sorting functionality, in which the server obtains all the messages in an ascending order. (The reason is that the server can then permute the messages on its own.) Recall that sorting can be done using a sorting networks of depth $O(\log n)$ (and hence size $O(n\log n)$) using  the famed {\em AKS network}~\cite{AKS83}. Instantiating \cref{thm:logDepth} with the AKS network, denoted $\AKS$, we obtain the following theorem.

\begin{theorem}\label{thm:shuffleStatement}
Let $\shuffle:\sparen{\zo^m}^{n}\mapsto\zo^{m\cdot n}$ be the shuffle \LOVE functionality, whose output is the input of the users in an order chosen uniformly at random. Then the following holds. 
Let $\alpha<\changed{1/8}$, let $m=m(\secParam,n)=\poly(\log n,\secParam)$ let $c>0$. Then there exists an $\alpha n$-secure $n$-user \LOVE protocol computing $\shuffle$ in the $\fAKS$-hybrid world.
\end{theorem}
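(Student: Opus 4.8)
The plan is to derive \cref{thm:shuffleStatement} as a direct instantiation of \cref{thm:logDepth}, via the textbook reduction from shuffling to sorting. First I would introduce the \emph{deterministic} $n$-ary functionality $\mathrm{sort}:\sparen{\zo^m}^{n}\mapsto\zo^{m\cdot n}$ that outputs the users' inputs in non-decreasing lexicographic order (ties broken, say, by original index), and observe that it is a \LOVE functionality: only the server receives output, and $\log|\Rng|=m\cdot n=n\cdot\poly(\secParam,\log n)$. The point is that any protocol securely computing $\mathrm{sort}$ yields a protocol $\Pi_{\shuffle}$ securely computing $\shuffle$ by appending a single \emph{local} step at the server: having obtained the sorted vector, the server samples a uniformly random permutation $\tau\in S_n$ and outputs the vector reordered by $\tau$. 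I would argue the security of this wrapper directly. If the server is honest, then by guaranteed output delivery of the $\mathrm{sort}$-protocol it holds the exact sorted vector of the (possibly default-substituted) inputs, and reordering it by a uniform $\tau$ produces a uniformly random arrangement of the input multiset, which is exactly the output distribution of $\shuffle$. If the server is corrupted, then $\shuffle$ and $\mathrm{sort}$ reveal the same information (the multiset of inputs is recoverable from either output), so the simulator $\Sim$ for $\Pi_{\shuffle}$ can be built from the simulator $\Sim_{\mathrm{sort}}$ promised by \cref{thm:logDepth}: $\Sim$ forwards the corrupted users' inputs to the $\shuffle$ trusted party, receives the shuffled vector $\vy$, computes $\mathrm{sort}(\vy)$ (which equals $\mathrm{sort}$ of all inputs, since sorting is permutation-invariant), and feeds this value to $\Sim_{\mathrm{sort}}$ to produce $\adv$'s view of the sort sub-protocol; the server's purported permutation step is unconstrained in the ideal world of $\shuffle$ (the output of $\adv$ is an arbitrary function of its view), so the resulting distinguishing advantage is exactly that of $\Sim_{\mathrm{sort}}$, hence negligible.

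Second, I would check that $\mathrm{sort}$ satisfies the hypotheses of \cref{thm:logDepth} when the circuit is taken to be the AKS sorting network $\AKS$. The AKS network sorts $n$ keys with $O(n\log n)$ comparators arranged in $O(\log n)$ layers, each wire feeding into $O(1)$ comparators; replacing each comparator by a Boolean subcircuit that compares two $m$-bit keys and conditionally swaps them incurs only a $\poly(m)=\poly(\log n,\secParam)$ blow-up in size and an $O(\log m)$ blow-up in depth. Thus the resulting circuit has size $n\cdot\poly(\log n,\secParam)$ and depth, fan-in, and fan-out all $\poly(\log n,\secParam)$ --- precisely the regime covered by \cref{thm:logDepth}. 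Applying that theorem with $C=\AKS$ (so $\fcircuit=\fAKS$) gives, for every constant $\alpha<1/6$, an $\alpha n$-secure $n$-user \LOVE protocol computing $\mathrm{sort}$ in the $\fAKS$-hybrid world.

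Finally I would combine the two: the local wrapper of the first paragraph turns the $\mathrm{sort}$-protocol of the second paragraph into the desired $\alpha n$-secure $n$-user \LOVE protocol for $\shuffle$ in the $\fAKS$-hybrid world, which is the statement. I do not expect a real obstacle here, since the substance of the theorem is entirely contained in \cref{thm:logDepth} (and, underneath it, the committee-election and verifiable-computation machinery of \cref{sec:committee,sec:proof}). The only two points needing a little care are: (i) soundness of the shuffle-to-sort reduction against a \emph{malicious} server that might not permute honestly --- handled by the observation above that this only affects the unconstrained adversarial output and leaks nothing beyond what $\shuffle$ itself reveals; and (ii) confirming that widening the AKS comparators from bits to $m$-bit keys keeps the circuit within the size, depth, and fan-out budget of \cref{thm:logDepth}. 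Once these are dispatched, the theorem follows, and in particular it resolves the open question of Bell et al.~\cite{BellBGL020} of obtaining a secure shuffle with $\polylog(n)$ per-user complexity.
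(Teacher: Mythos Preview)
Your proposal is correct and follows essentially the same approach as the paper: the paper's argument is simply that shuffling reduces to sorting (the server can apply a uniformly random permutation to the sorted output on its own), and sorting is realized by instantiating \cref{thm:logDepth} with the AKS network. You supply more detail than the paper does---in particular the simulator argument for a malicious server and the check that widening comparators to $m$-bit keys keeps the circuit within the size/depth/fan-out budget---but the underlying route is identical.
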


\cref{thm:shuffleStatement} shows that the shuffle functionality can be securely computed in the GMPC model with polylogarithmic runtime and communication complexity for the users, which resolves an open question of \cite{BellBGL020}. Note that this construction does not assume FHE. 

Alternatively, \cref{thm:shuffleStatement} can be proved using {\em bitonic sorting} \cite{Batcher68} instead of the AKS sorting network. While the AKS network has better asymptotics ($O(\log n)$ depth and $O(n \log n)$ size vs $O(\log^2 n)$ depth and  $O(n \log^2 n)$ size), it is far from being practical
(see Goodrich~\cite{Goodrich14} for an estimation of the constants in various variants and alternatives and Goodrich~\cite{Goodrich11} for  a randomized oblivious sort), and bitonic sorting is  much more efficient for relevant sizes.

\begin{remark}
For simplicity, in Theorem~\ref{thm:shuffleStatement} we considered a shuffle functionality that takes a single input from each user. Our techniques also apply to a setting where every user submits $\ell=\polylog(n)$ messages to the shuffle functionality, which then deliverers all $n\ell$ messages to the server in a random order.
\end{remark}

\section{Applications: Secure Protocols in the Star Topology}\label{sec:star}

In this section, we explain how to compile a protocol in the \LOVE model to the star network assuming a special form of PKI. \changed{We first formally define the model in the star network and define the PKI. We then show how to compile any protocol in the GMPC protocol to the star model.} %
The parties are connected via a star network with the server in the middle, \ie each user is connected only to the server. Thus, the server has full control over communication traffic and can block any message it wants. All parties are given the security parameter $1^{\secParam}$, and the number of users $n$, held in binary by the users and in unary by the server.

\subsubsection*{The \texorpdfstring{$\board$}{Board} Functionality}

To emulate users wishing to securely send a message to another user, we assume the parties are given a special form of PKI. Observe that since the users are assumed to be polylogarithmic in the number of users $n$, we cannot simply give each of them the public keys of all other users. Instead, the parties are given a special \emph{board functionality}, denoted $\board$. This functionality holds the public keys (for both encryption and signature schemes) of all users, and it allows any user to request the public keys of any other user. \changed{This can be motivated by allowing the parties to access a third-party forum, whereupon registration the keys are sampled by the website.}

Intuitively, having access to the $\board$ functionality allows any user to send a message to any user of its choice via the server, such that it cannot learn the contents of the message nor modify it. In the following, we let $\Sig=(\SigGen,\SigSign,\SigVerify)$ be a signature scheme, and let $\PKE=(\PKEGen,\PKEEnc,\PKEDec)$ be a \changed{non-malleable} public key encryption scheme.\footnote{Roughly, in a non-malleable encryption scheme no computationally bounded adversary that is given an encryption of some message $m$, can generate an encryption of a different correlated message $m'$.}

\changed{Roughly, the \board functionality allows the following. Any user can register \emph{exactly once}, and obtain the public and secret keys for both the encryption scheme and the signature scheme. The functionality then stores the public keys in its database. Additionally, every user can request the functionality to send it the public keys of any other user that appears in the database.

We stress that the \board functionality \emph{never} interacts with the server. In particular, this means that it cannot invent new users or public keys, nor can it prevent any user from obtaining the public keys of another user.  We now formally describe the functionality.}

\begin{description}
	\item[Input:] The functionality holds the security parameter $1^{\secParam}$ and the number of users $1^n$.
	
	\item[Initialize:] Set up a new database $\cD$ as an array of length $n$.
	
	\item[Register:] Whenever user $i$ wishes to register, it samples keys $(\pk^{i}_{\Sig},\sk^{i}_{\Sig})\from\SigGen(1^{\secParam})$ and $(\pk^{i}_{\PKE},\sk^{i}_{\PKE})\from\PKEGen(1^{\secParam})$. It then sends the message $(\register,\pk^i_{\PKE},\pk^i_{\Sig},i)$ to the functionality. Upon receiving this message, verify that $i$ did not register previously. If this is not the case, then ignore the message. Otherwise, %
	add $(\pk^{i}_{\Sig},\pk^{i}_{\PKE})$ to $\cD$ at position $i$.
	
	\item[Get Keys:] Upon receiving a message $(\get\text{:}\;i,\fromuser\text{:}\;j)$, verify that $i$ is in $\cD$. If not, then ignore the message. Otherwise, send to user $j$ the public keys $(\pk^{i}_{\Sig},\pk^{i}_{\PKE})$ of user $i$.
\end{description}

\subsubsection*{Compiling Secure GMPC Protocols to the Star Model}
Given access to the \board functionality, the parties can emulate the full network with blocking setting we considered throughout the paper. Whenever user $i$ wishes to send some message to user $j$, it will sign and encrypt its message using its secret signature key and $j$'s public encryption key. It will send to the server a request to transfer the encrypted and signed message to user $j$. 

Since the message is encrypted, the server cannot learn its contents. Moreover, as the message was also signed, it follows that the server cannot forge a new message without $j$ noticing.\footnote{Note we need to have both the encryption and signature schemes to be secure against adversaries that are polynomial in the number of users $n$, which is possibly superpolynomial in the security parameter $\secParam$.} Thus, the only attack a malicious server can do is to block the message from reaching $j$. Note that it could still be the case that user $i$ is malicious and purposely signed the message incorrectly, hence $j$ will not abort in this case and view it as if the message was blocked.

Observe that since the encryption scheme is assumed to be non-malleable, it follows that the protocol maintains security even if the parties call the \board functionality concurrently. We conclude that the above model is equivalent to the GMPC model considered throughout the paper, in the sense that any attacker for a protocol in one model can be mapped to an attacker in the other model. Thus, all of our results can be translated to this setting as well.

\ifdefined\IsAnon
\else
\subsection*{Acknowledgements}
The authors are very grateful to 
Muthuramakrishnan (Muthu) Venkitasubramaniam for many helpful discussions.

\subsubsection*{Funding}
The work of B.A.~was supported in part by grants from the Israel Science Foundation (no.152/17), and by the Ariel Cyber Innovation Center in conjunction with the Israel National Cyber directorate in the Prime Minister's Office.
The work of M.N.~was supported in part by grants from the Israel Science Foundation (no.2686/20), by the Simons
Foundation Collaboration on the Theory of Algorithmic Fairness and by a Data Science grant of the PCB.
The work of E.O.~was supported in part by grants from the Israel Science Foundation (no.152/17), by the Ariel Cyber Innovation Center in conjunction with the Israel National Cyber directorate in the Prime Minister's Office, and by the Robert L. McDevitt, K.S.G., K.C.H.S. and Catherine H. McDevitt L.C.H.S. endowment at Georgetown University. Part of this work was done
when E.O.~was hosted by Georgetown University.
The work of U.S.~was partially supported by the Israel Science Foundation (grant 1871/19)
and by Len Blavatnik and the Blavatnik Family foundation.
\fi 

\bibliographystyle{abbrvnat}

\appendix

\section{Ideal Worlds For Honest Server}\label{sec:honest_server}
\paragraph{Ideal world for full security with blocking assuming an honest server.}
We next describe the interaction in the ideal world assuming the server is honest. Let $\adv$ be an adversary corrupting a subset $\corrset\su[n]$ of the users, and which does \emph{not} corrupt the server. %
\begin{description}
	
	\item[Inputs:] Each party holds the security parameter $1^\secParam$ and the number of users $n$ (held in binary by the users and in unary by the server). Additionally, the server holds input $x_0$, user $\Pc_i$ holds $x_i$, and the adversary is given auxiliary input $\aux\in\zos$.

	\item[Parties send inputs to trusted party:] The honest parties send their inputs to the  trusted party. For each corrupted user, the adversary $\adv$ sends to the trusted party some value from their domain as input. Denote by $(\partyInputTag{1},\ldots,\partyInputTag{\partNum})$ the tuple of inputs received by the trusted party. %
	
	\item[Trusted party sends output to the server:] The trusted party computes $\outValue\la f(\partyInputTag{1},\ldots,\partyInputTag{\partNum})$
	with uniformly random coins and sends the output $\outValue$ to the server.
	
	\item[Outputs:] An honest server outputs the value sent by the trusted party, and a corrupted server outputs nothing. Additionally, all users output nothing and $\adv$ outputs a function of its view (its inputs and the auxiliary input $\aux$).
\end{description}

\end{document}